\newtheorem{observation}{Observation}[section]
\definecolor{crimsonglory}{rgb}{0,0,0}
\definecolor{commentcolor}{rgb}{0.75, 0.0, 0.2}
 \newtheorem{theorem}{Theorem}[section]
 \newtheorem{lemma}[theorem]{Lemma}
 \newtheorem{corollary}[theorem]{Corollary}
 \newtheorem{definition}[theorem]{Definition}
 \newcommand{\etal}{\textit{et al.}}
 \newcommand{\eg}{\textit{e.g.}}
 \newcommand{\ie}{\textit{i.e.}}
\newif\ifqed
\def\GrabProofArgument[#1]{ #1: \egroup\ignorespaces}
\def\proof{\noindent\textbf\bgroup Proof%
	\@ifnextchar[{\GrabProofArgument}{. \egroup\ignorespaces}\qedtrue}
\def\qedhere{\tag*{\qed}\global\qedfalse}
\newcommand{\eps}{\epsilon}
\newcommand{\fa}{\mathcal{F}_a}
\newcommand{\fb}{\mathcal{F}_b}
\newcommand{\zarib}{\mathsf{C}}
\newcommand{\polylog}{\mathsf{polylog}}
\newcommand{\Problem}[1]{\textsf{#1}\xspace}
\newcommand{\FProblem}[1]{\color{blue}\textsf{#1}\color{black}\xspace}
\newcommand{\pr}{\mathsf{Pr}}
\newcommand{\ii}{\mathsf{I}}
\newcommand{\smax}{\mathsf{s_{\max}}}
\newcommand{\best}{\mathsf{H}}
\newcommand{\vmax}{\mathsf{v_{\max}}}
\newcommand{\emax}{\mathsf{e_{\max}}}
\newcommand{\wmax}{\mathsf{w_{\max}}}
\newcommand{\dmax}{\mathsf{d_{\max}}}
\newcommand{\tildorder}{\widetilde O}
\newcommand{\tildomega}{\widetilde \Omega}
\newcommand{\ttimes}{\star}
\newcommand{\projection}{\mathcal{P}}
\renewcommand{\k}{\mathsf{k}}
\newcommand{\classicknapsackalgorithm}{\mathsf{ClassicKnapsack}}
\newcommand{\infinitymultiplicitiesalgorithm}{\mathsf{KnapsackWithInfiniteMultiplicities}}
\newcommand{\givenmultiplicitiesalgorithm}{\mathsf{KnapsackWithGivenMultiplicities}}
\newcommand{\knapsackforsmallsizes}{\mathsf{KnapsackForSmallSizes}}
\def\hlinewd#1{%
\noalign{\ifnum0=`}\fi\hrule \@height #1 %
\futurelet\reserved@a\@xhline}
\newcommand*\samethanks[1][\value{footnote}]{\footnotemark[#1]}
\newcounter{proccnt}
\newcommand{\konote}[1]{}
\title{Fast Algorithms for Knapsack via Convolution and Prediction\footnote{The conference version of the paper is appeared in STOC'18.}}
\author{
	MohammadHossein Bateni\thanks{Google Inc., New York}
	\and MohammadTaghi HajiAghayi\thanks{University of Maryland, College Park}
	\thanks{Supported in part by NSF CAREER award CCF-1053605,  NSF BIGDATA grant IIS-1546108, NSF AF:Medium grant CCF-1161365,   
		DARPA GRAPHS/AFOSR grant FA9550-12-1-0423, and another DARPA SIMPLEX grant.}
	\and Saeed Seddighin\samethanks[1]\samethanks[2]
	\and Clifford Stein\thanks{Columbia University, New York, NY 10027, USA} \thanks{Research supported in
		part by NSF grants CCF-1421161 and CCF-1714818. Some research done while visiting
		Google}
}
\begin{document}
	\newcommand{\ignore}[1]{}
\sloppy
\date{}

\maketitle

\thispagestyle{empty}

\begin{abstract}
The \Problem{knapsack} problem is a fundamental problem in combinatorial optimization.
It has been studied extensively from theoretical as well as practical perspectives
as 
it is one of the most well-known NP-hard problems.
The goal is to pack a knapsack of size $t$ with the maximum value from a collection
of $n$ items with given sizes and values.  

Recent evidence suggests that a classic $O(nt)$ dynamic-programming
solution for the \Problem{knapsack} problem might be the fastest in
the worst case.  In fact, solving the \Problem{knapsack} problem was
shown to be computationally equivalent to the \Problem{$(\min, +)$ convolution}
problem, which is thought to be facing a
quadratic-time barrier.  This hardness is 
in 
contrast to the more famous \Problem{$(+, \cdot)$ convolution}
(generally known as \Problem{polynomial multiplication}), that has an
$O(n\log n)$-time solution via Fast Fourier Transform.

Our main results are algorithms with near-linear running times (in terms of the size of the knapsack and the number of items)
for the \Problem{knapsack} problem,
if either the values or sizes of items are small integers.  
More specifically, if item sizes are integers bounded by $\smax$, the running time of
our algorithm is $\tilde O((n+t)\smax)$.
If the item values are integers bounded by $\vmax$, our algorithm runs in time
$\tilde O(n+t\vmax)$.
Best previously known running times were $O(nt)$, $O(n^2\smax)$ and
$O(n\smax\vmax)$ (Pisinger, J. of Alg., 1999).

At the core of our algorithms lies the \emph{prediction technique}:
Roughly speaking, this new technique enables us to compute
the convolution of two vectors in time $\tildorder(n\emax)$ 
when an approximation of the solution within an additive error
of $\emax$ is available. 

Our results also improve the best known strongly polynomial time solutions for knapsack. In the limited size setting, when the items have multiplicities, the fastest strongly polynomial time algorithms for knapsack run in time $O(n^2 \smax^2)$ and $O(n^3 \smax^2)$ for the cases of infinite and given multiplicities, respectively. Our results improve both running times by a factor of $\tildomega(n \max \{1, n/\smax\})$.



\end{abstract}
\section{Introduction}

The \FProblem{knapsack} problem is a fundamental problem in combinatorial optimization.
It has been studied extensively from theoretical as well as practical perspectives
(\!\eg, \cite{Bellman:1957,HS74,MT90:book,Chvatal80,Pisinger}), as
it is one of the most well-known NP-hard problems~\cite{GJ90:book}.
The goal is to pack a knapsack of size $t$ with the maximum value from a collection
of $n$ items with given sizes and values.  
More formally, item $i$ has size $s_i$ and
value $v_i$, and we want to maximize $\sum_{i\in S} v_i$ such that $S\subseteq [n]$
and $\sum_{i\in S} s_i \leq t$.

Recent evidence suggests that a classic $O(nt)$ dynamic-programming solution for
the \Problem{knapsack} problem~\cite{Bellman:1957} might be the fastest in the worst case.  In fact, solving the
\Problem{knapsack} problem was shown to be equivalent to the \Problem{$(\min, +)$ convolution}
problem~\cite{cygan2017problems}, which is thought to be facing a quadratic-time barrier.  
The two-dimensional extension, called the \Problem{$(\min, +)$ matrix product} problem,
appears in several conditional hardness results.
These hardness results for $(\min, +)$ matrix product and equivalently $(\max, +)$ matrix product are in contrast to the
more famous \Problem{$(+, \cdot)$ convolution} (generally known as \Problem{polynomial
multiplication}), that has an $O(n\log n)$-time solution via Fast Fourier Transform 
(FFT)~\cite{thomas2001introduction}.

Before moving forward, we present the general form of \FProblem{convolution} problems.
Consider two vectors $a = (a_0, a_1, \dots, a_{m-1})$ and $b = (b_0, b_1, \dots, b_{n-1})$.
We use the notations $|a| = m$ and $|b| = n$ to denote the size of the vectors.
For two associative binary operations $\oplus$ and $\otimes$,
the \FProblem{$(\oplus, \otimes)$ convolution} of $a$ and $b$ is a vector $c = (c_0, c_1, \ldots, c_{2n-1})$, defined as follows.
\begin{align*}
c_i = \raisebox{-1ex}{$\substack{{\mbox{\relscale{1.8}$\oplus$}}\\
                   \mathsmaller{j: 0\leq j<m}\\
                   \mathsmaller{0\leq i-j<n}}$}
\{ a_j \otimes b_{i-j} \},
\qquad \mbox{ for } 0 \leq i < m+n-1.
\end{align*}
\setcounter{page}{1}
The past few years have seen increased attention towards several variants of \Problem{convolution} problems
(\!\eg, \cite{BCDEHILPT12,chan2015clustered,bringmann2017near,koiliaris2017faster,%
backurs2017better,cygan2017problems,KPS17}).
Most importantly, many problems, such as 
\Problem{tree sparsity}, \Problem{subset sum}, and \Problem{$3$-sum},
 have been shown to have  conditional lower bounds on their running time
via  their intimate connection with \Problem{$(\min, +)$ convolution}.

In particular, previous studies have shown that \Problem{(max,+)
  convolution}, \Problem{knapsack}, and \Problem{tree sparsity} are
computationally (almost) equivalent~\cite{cygan2017problems}. 
However, these hardness results are obtained by constructing instances
with arbitrarily high item values (in the case of \Problem{knapsack})
or vertex weights (in the case of \Problem{tree sparsity}). A fast
algorithm can solve \Problem{$(\min,+)$ convolution} in almost linear time when
the vector elements are bounded. This raises the question of
whether moderate instances of \Problem{knapsack} or \Problem{tree
  sparsity} can be solved in subquadratic time. The recent breakthrough of Chan and
Lewenstein~\cite{chan2015clustered} implicitly suggests that \Problem{knapsack}
and \Problem{tree sparsity} may be solved in barely subquadratic time
$O(n^{1.859})$ when the values or weights are small\footnote{It follows from
  the reduction of~\cite{cygan2017problems} that any subquadratic algorithm for
\Problem{convolution} yields a subquadratic algorithm for \Problem{knapsack}.}. 

Our main results are algorithms with near-linear running times
for the \Problem{knapsack} problem,
if either the values or sizes of items are small integers.  
More specifically, if item sizes are integers bounded by $\smax$, the running time of
our algorithm is $\tilde O((n+t)\smax)$.
If the item values are integers bounded by $\vmax$, our algorithm runs in time
$\tilde O(n+t\vmax)$.
Best previously known running times were $O(nt)$, $O(n^2\smax)$ and
$O(n\smax\vmax)$~\cite{Pisinger}.
As with prior work, we focus on two special cases of
\Problem{$0/1$ knapsack} (each item may be used at most once)
and \Problem{unbounded knapsack} (each item can be used many
times), but unlike previous work we present near linear-time exact
algorithms for these problems.

Our results are similar in spirit to the work of
Zwick~\cite{zwick2002all}  wherein the author obtains a
subcubic time algorithm for the \Problem{all pairs shortest paths
  problem} (\FProblem{APSP}) where the edge weights are small
integers. Similar to \Problem{knapsack} and \Problem{$(\max,+)$
  convolution}, there is a belief that \Problem{APSP} cannot be solved
in truly subcubic time. We obtain our results through new
sophisticated algorithms for improving the running time of \Problem{convolution}
in certain settings whereas Zwick uses the known convolution
techniques as black box and develops randomized algorithms to improve
the running time of \Problem{APSP}.

We emphasize that our work  does not improve the complexity of the general
\Problem{$(\min, +)$ convolution} problem, for which no strongly subquadratic-time
algorithm is known to exist.
Nevertheless, our techniques provide almost linear running time for the parameterized
case of \Problem{$(\min, +)$ convolution} when the input numbers are bounded by the parameters.

A summary of the previous known results along with our new results is shown 
in Table~\ref{table:runningtimes}. Notice that in \Problem{0/1 knapsack}, 
$t$ is always bounded by $n \, \smax$ and thus our results improve the 
previously known algorithms even when $t$ appears in the running time.
\begin{table}[h]
\caption{$n$ and $t$ denote the number of items and the knapsack size respectively. $\vmax$ and $\smax$ denote the maximum value and size of the items. Notice that when the knapsack problem does not have multiplicity, $t$ is always bounded by $n \smax$ and thus our running times are always better than the previously known algorithms. Theorems~\ref{theorem:limitedsize},~\ref{theorem:strong1}, and~\ref{theorem:strong2}, as well as Corollary~\ref{corollary:unboundedlimitedsize} are randomized and output a correct solution with probability at least $1-n^{-10}$.}\label{table:runningtimes}
\def\outerrule{\hlinewd{1pt}}
\def\innerrule{\hlinewd{0.3pt}}
\centerline{
\begin{tabular}{lcc}
\outerrule
\bf setting & \bf running time & \bf our improvement\\
\outerrule
general setting & $O(nt)$~\cite{thomas2001introduction} & -\\
\innerrule
limited size knapsack & $O(n^2 \smax)$~\cite{Pisinger} & $\tildorder((n+t)\smax)$\\
& &  \color{red}(Theorem~\ref{theorem:limitedsize})\color{black}\\
\innerrule
limited size knapsack, unlimited multiplicity & $O(n^2 \smax ^2)$~\cite{tamir2009new} & $\tildorder(n \smax + \smax^2 \min\{n, \smax\})$\\
& &  \color{red}(Theorem~\ref{theorem:strong1})\color{black}\\
& &  $\tildorder((n+t) \smax)$\\
& &  \color{red}(Corollary~\ref{corollary:unboundedlimitedsize})\color{black}\\
\innerrule
limited size knapsack, given multiplicity & $O(n^3 \smax ^2)$~\cite{tamir2009new} & $\tildorder(n \smax^2 \min\{n, \smax\})$\\
& &  \color{red}(Theorem~\ref{theorem:strong2})\color{black}\\
\innerrule
limited value knapsack & - & $\tildorder(n+t \vmax)$\\
& &  \color{red}(Theorem~\ref{theorem:knapsack})\color{black}\\
\innerrule
limited value knapsack, unlimited multiplicity & - & $\tildorder(n + t \vmax)$\\
& &  \color{red}(Theorem~\ref{theorem:unboundedknapsack})\color{black}\\
\innerrule
limited value and size & $O(n \smax \vmax)$~\cite{Pisinger}& $\tildorder((n+t) \min\{\vmax,\smax\})$\\
& &  \color{red}(Theorems~\ref{theorem:knapsack} and~\ref{theorem:limitedsize})\color{black}\\
\outerrule
\end{tabular}}
\end{table}







\section{Our Contribution}
\subsection{Our Technique}
Recall that the  \Problem{$(+,\cdot)$ convolution} is indeed \Problem{polynomial multiplication}.
In this work, we are mostly concerned with \Problem{$(\max, +)$
  convolution} (which is computationally equivalent to
\Problem{minimum convolution}).  We may drop all qualifiers and simply
call it \Problem{convolution}. 
We use the notation \FProblem{$a\star b$} for
\Problem{$(\max, +)$ convolution} and \FProblem{$a\times b$} for
\Problem{polynomial multiplication} of two vectors $a$ and $b$. Also we denote by $a^{\ttimes k}$  
the $k$'th power of $a$ in the $(\max,+)$ setting, 
that is $\underbrace{a \ttimes a \ttimes \ldots \ttimes a}_{k \text{ times }}$.

If there is no size or value constraint, it has been shown that \Problem{knapsack}
and \Problem{$(\max, +)$ convolution} are computationally equivalent with respect
to subquadratic algorithms~\cite{cygan2017problems}. In other words, 
any subquadratic solution for \Problem{knapsack} yields a subquadratic solution for
\Problem{$(\max, +)$ convolution} and vice versa. Following this intuition,
our algorithms are closely related to algorithms for computing \Problem{$(\max, +)$ convolution}
in restricted settings. The main contribution of this work is a technique for computing the
\Problem{$(\max, +)$ convolution} of two vectors, namely \textit{the prediction technique}.
Roughly speaking, the prediction technique enables us to compute the convolution of two vectors
in time $\tildorder(n\emax)$ when an approximation of the solution within an additive error of $\emax$ is given.
As we show in Sections~\ref{sec:knapsack} and~\ref{sec:power}, this method can be applied to the
\Problem{0/1 knapsack} and \Problem{unbounded knapsack} problems to solve them
in $\tildorder(n\,\emax)$ time (\!\eg, if $\emax \geq \vmax$). In Section~\ref{sec:maxplusconvolution}, we explain the \textbf{prediction technique}
in three steps:
\begin{enumerate}
\item \textbf{Reduction to polynomial multiplication:} We make use of a classic reduction to compute $a \ttimes b$ in time $\tildorder(\emax(|a|+|b|))$ when all values of $a$ and $b$ are integers in the range $[0,\emax]$. This reduction has been used in many previous works (\!\eg, \cite{zwick2002all,chan2015clustered,bringmann2017near,backurs2017better,zwick1998all}). In addition to this, we show that when the values are not necessarily integral, an approximation solution with additive error $1$ can be found in time $\tildorder(\emax(|a|+|b|))$. We give a detailed explanation of this in Section~\ref{sec:maxplusconvolution:reduction}.

\item \textbf{Small distortion case:} Recall that \Problem{$a \star b$} denotes the \Problem{$(\max,+)$ convolution} of vectors $a$ and $b$.
In the second step, we define the ``small distortion'' case where
$a_i+b_j \geq (a\star b)_{i+j} - \emax$ for all $i$ and $j$.
Notice that the case where all input values are in the range $[0,\emax]$ is a special case of the small distortion case.
Given such a constraint, we show that $a\star b$ can be computed in time $\tilde O(\emax n)$
using the reduction to polynomial multiplication described in the first step. 
We obtain this result via two observations:
\begin{enumerate}
\item If we add a constant value $C$ to each component of either $a$ or $b$,
each component of their ``product'' $a\star b$ increases by the same amount $C$.
\item For a given constant $C$, adding a quantity $iC$ to every element $a_i$ and $b_i$ of the vectors $a$ and $b$, for all $i$,
results in an increase of $iC$ in  $(a \ttimes b)_i$ for every $0 \leq i < |a \ttimes b|$ (here $|a \ttimes b|$ denotes the size of vector $a \ttimes b$).
\end{enumerate}
These two operations help us transform the vectors $a$ and $b$ such that all elements fall in the range $[0, O(\emax)]$.
Next, we approximate the convolution of the transformed vectors via the results of the first step, and eventually
compute $a \ttimes b$ in time $\tildorder(\emax n)$.
We give more details in Section~\ref{sec:maxplusconvolution:solutionrange}.

\item \textbf{Prediction}: We state the {\it prediction technique} in Section~\ref{sec:maxplusconvolution:prediction}.
Roughly speaking, when an estimate of each component of the convolution is available,
with additive error $\emax$,
this method lets us compute the convolution in time $\tilde O(\emax n)$.
More precisely, in the prediction technique, we are given two integer vectors $a$ and $b$,
as well as $|a|$ intervals $[x_i, y_i]$.
We are guaranteed that (1) for every $0 \leq i < |a|$ and $x_i \leq j \leq y_i$,
the difference between $(a \ttimes b)_{i+j}$ and $a_i + b_j$ is at most $\emax$;
(2)  for every $0  \leq i < |a \ttimes b|$ we know that for at least one $j$
we have $a_j + b_{i-j} = (a \ttimes b)_i$ and $x_i \leq j \leq y_i$; and
(3) if $i < j$, then both $x_i \leq x_j$ and $y_i \leq y_j$ hold.
We refer to the intervals as an ``uncertain solution'' for $a \ttimes b$ within an error of $\emax$.
\end{enumerate}

The reason we call such a data structure an uncertain solution is that given such a structure, one can approximate the solution in almost linear time by iterating over the indices of the resulting vector and for every index $i$ find one $j$ such that $x_j \leq i-j \leq y_j$ and approximate $(a \ttimes b)_i$ by $a_j + b_{i-j}$. Such a $j$ can be found in time $O(\log n)$ via binary search since the boundaries of the intervals are monotone. 
In the prediction technique, we show that an uncertain solution within an additive error of $\emax$ suffices to compute the convolution of two vectors in time $\tildorder(\emax n)$. We obtain this result by breaking the problem into many subproblems with the small distortion property and applying the result of the second step to compute the solution of each subproblem in time $\tildorder(\emax n)$. We show that all the subproblems can be solved in time $\tildorder(\emax n)$ in total, and based on these solutions, $a \ttimes b$ can be computed in time $\tildorder(\emax n)$.
We give more details in Section~\ref{sec:maxplusconvolution:prediction}.

\vspace{0.2cm}
{\noindent \textbf{Theorem}~\ref{theorem:prediction} [restated informally]. \textit{Given two integer vectors $a$ and $b$ and an uncertain solution for $a \ttimes b$ within an error of $\emax$, one can compute $a \ttimes b$ in time $\tildorder(\emax n)$.\\}}

Notice that in Theorem~\ref{theorem:prediction}, there is no assumption on the range of the values in the input vectors and the running time depends linearly on the accuracy of the uncertain solution. 

\subsection{Main Results}
We show in Section~\ref{sec:knapsack} that the prediction technique enables us to solve the
\Problem{0/1 knapsack} problem in time $\tildorder(\vmax t + n)$. To this end, we define the
\emph{knapsack convolution} as follows: given vectors $a$ and $b$ corresponding to the solutions of
two \Problem{knapsack} problems $\k_a$ and $\k_b$, the goal is to compute $a \ttimes b$.  If a vector $a$ is the solution of a knapsack problem, $a_i$ denotes the maximum total value of the items that can be placed in a knapsack of size $i$. The only difference between knapsack convolution and $(\max, +)$ convolution is that in the knapsack convolution both vectors adhere to knapsack structures, whereas in the $(\max,+)$ convolution there is no assumption on the values of the vectors.
We show that if in the \Problem{knapsack} problems, the values of the items are
integers bounded by $\vmax$, then an uncertain solution for $a \ttimes b$
within an error of $\vmax$ can be computed in almost linear time.
The key observation here is that one can approximate the solution of the
\Problem{knapsack} problem within an additive error of $\vmax$ as follows:
sort the items in descending order of $v_i/s_i$ and put the items in the knapsack one by one
until either we run out of items or the remaining space of the knapsack is too small for the next item.
Based on this algorithm, we compute an uncertain solution for the knapsack convolution in almost linear time
and via Theorem~\ref{theorem:prediction} compute $a \ttimes b$ in time $\tildorder(\vmax n)$.
Finally, we use the recent technique of~\cite{cygan2017problems} to reduce the
\Problem{$0/1$ knapsack} problem to the knapsack convolution.
This yields an $\tildorder(\vmax t+ n)$ time algorithm for solving the \Problem{$0/1$ knapsack} problem
when the item values are bounded by $\vmax$.

\vspace{0.2cm}
{\noindent \textbf{Theorem}~\ref{theorem:knapsack} [restated].
\textit{The \Problem{$0/1$ knapsack} problem can be solved in time $\tildorder(\vmax t+n)$ when the item values are integer numbers in the range $[0,\vmax]$.\\}}

As another application of the prediction technique,
we present an algorithm that receives a vector $a$ and an integer $k$ as input and computes
$a^{\ttimes k}$.
We show that if the values of the input vector are integers in the range $[0, \emax]$,
the total running time of the algorithm is $\tildorder(\emax|a^{\ttimes k}|)$. This improves upon the trivial $\tildorder(\emax^2 |a^{\ttimes k}|)$.
Similar to what we do in Section~\ref{sec:knapsack},
we again show that the convolution of two powers of $a$ can be approximated within a small additive error.
We use this intuition to compute an uncertain solution within an additive error of $O(\emax)$
and apply the prediction technique to compute the exact solution in time $\tildorder(\emax|a^{\ttimes k}|)$. 

\vspace{0.2cm}
{\noindent \textbf{Theorem}~\ref{theorem:power} [restated]. 
\textit{Let $a$ be an integer vector with values in the range $[0,\emax]$. For any integer $k \geq 1$, one can compute $a^{\ttimes k}$  in time $\tildorder(\emax |a^{\ttimes k}|)$.\\}}
 
As a consequence of Theorem~\ref{theorem:power}, we show that the unbounded knapsack problem can be solved in time $\tildorder(n+\vmax t)$.

\vspace{0.2cm}
{\noindent \textbf{Theorem}~\ref{theorem:unboundedknapsack} [restated].
\textit{The \Problem{unbounded knapsack} problem can be solved in time $\tildorder(n+\vmax t)$ when the item values are integers in the range $[0,\vmax]$.\\}}

To complement our results, we also study the \Problem{knapsack} problem
when the item values are unbounded real values, but the sizes are integers in the range $[1,\smax]$. 
For this case, we present a randomized algorithm that solves the problem w.h.p.\footnote{With probability at least $1-n^{-10}$.}\ in time $\tildorder(\smax(n+t))$.
The idea is to first put the items into $t/\smax$ buckets uniformly at random.
Next, we solve the problem for each bucket separately, up to a knapsack size $\tildorder(\smax)$.
We use the Bernstein's inequality to show that w.h.p.,
only a certain interval of the solution vectors are important and we can neglect the rest of the values, thereby enabling
 us to merge the solutions of the buckets efficiently.
Based on this, we present an algorithm to merge the solutions of the buckets in time $\tildorder(\smax (n+t))$, 
yielding a randomized algorithm for solving the \Problem{knapsack} problem in time $\tildorder(\smax (n+t))$ w.h.p.\
when the sizes of the items are bounded by $\smax$.

\vspace{0.2cm}
{\noindent \textbf{Theorem}~\ref{theorem:limitedsize} [restated].
\textit{There exists a randomized algorithm that correctly computes the solution of the
\Problem{knapsack} problem in time $\tildorder(\smax(n+t))$ w.h.p., when the item sizes are integers in the range $[1,\smax]$.}}

%
\subsection{Implication to Strongly Polynomial Time Algorithms}
When we parameterize the \Problem{0/1 knapsack} problem by $\max \{s_i\} \leq \smax$, one can set $t':=\min(t,n \smax)$ and solve the problem with knapsack size $t'$ in time $\tildorder((t'+n) \smax) = \tildorder(n \smax^2)$. This yields a strongly polynomial time solution for the \Problem{knapsack} problem. However, this only works when we are allowed to use each item only once. In Section~\ref{sec:knapsack-multiplicity},  we further extend this solution to the case where each item $(s_i, v_i)$ has a given multiplicity $m_i$. For this case, our algorithm runs in time $\tildorder(n \smax ^2 \min\{n, \smax\})$ when $m_i$'s are arbitrary and solves the problem in time $\tildorder(n \smax \min\{n, \smax\} )$ when $m_i = \infty$ for all $i$. Both results improve the algorithms of~\cite{tamir2009new} by a factor of $\tildomega(\max\{n, \smax\})$ in the running time. These results are all implied by Theorem~\ref{theorem:limitedsize}. 

\subsection{Further Results}

It has been previously shown that \Problem{tree sparsity}, \Problem{knapsack}, and \Problem{convolution} problems are equivalent with respect to their computational complexity. However, these reductions do not hold for the case of small integer inputs. In Sections~\ref{sec:treeseparability} and~\ref{sec:01sparsity}, we show some reductions between these problems in the small input setting. In addition to this, we introduce the \FProblem{tree separability} problem and explain its connection to the rest of the problems in both general and small integer settings. We also present a linear time algorithm for \Problem{tree separability} when the degrees of the vertices and edge weights are all small integers.
\begin{figure}[h]
\centerline{%
\begin{tikzpicture}[
    and/.style={and gate US,thick,draw,fill=red!60,rotate=90,
		anchor=east,xshift=-1mm},
    or/.style={or gate US,thick,draw,fill=blue!60,rotate=90,
		anchor=east,xshift=-1mm},
    be/.style={circle,thick,draw,fill=green!60,anchor=north,
		minimum width=0.7cm},
    tr/.style={buffer gate US,thick,draw,fill=purple!60,rotate=90,
		anchor=east,minimum width=0.8cm},
    label distance=3mm,
    every label/.style={blue},
    event/.style={rectangle,thick,draw,fill=yellow!20,text width=2cm,
		text centered,font=\sffamily,anchor=north},
    edge from parent/.style={very thick,draw=black!70},
    edge from parent path={(\tikzparentnode.south) -- ++(0,-1.05cm)
			-| (\tikzchildnode.north)},
    level 1/.style={sibling distance=7cm,level distance=1.4cm,
			growth parent anchor=south,nodes=event},
    level 2/.style={sibling distance=7cm},
    level 3/.style={sibling distance=6cm},
    level 4/.style={sibling distance=3cm}
    ]
%
   \begin{scope}[xshift=-7.5cm,yshift=-5cm,very thick,
		node distance=1.6cm,on grid,>=stealth',
		block/.style={rectangle,draw,fill=cyan!20},
		comp/.style={thin,rectangle,draw,fill=white!40}]
   \node [block] (re)					{bounded tree sparsity};
    \node[comp] (re')[below=of re, yshift=0.5cm] {$\tildorder(\wmax n)$};
   \node [block] (re2)				[right= of re, xshift=3cm]	{bounded tree separability};
    \node[comp] (re'2)[below=of re2, yshift=0.5cm] {$\tildorder(\wmax n)$};
   \node [block] (re3)				[right= of re2, xshift=2.8cm]	{bounded knapsack};
    \node[comp] (re'3)[below=of re3, yshift=0.5cm] {$\tildorder(\vmax n)$};
   \node [block] (re4)				[right= of re3, xshift=2.5cm]	{bounded convolution};
   \node[comp] (re'4)[below=of re4, yshift=0.5cm] {$\tildorder(\emax n)$};
   \node [block]	 (ca1)	[above=of re2,xshift=1.8cm]	{$\dmax$-distance bounded convolution} edge [->] (re) edge [->] (re2) edge [->] (re3) edge [->] (re4);
   \node[comp] (ca'1)[right=of ca1, xshift=3cm] {$\tildorder(\dmax n)$};
   \node [block] (s1)	[above=of ca1]		{$0/1$ tree sparsity} edge [->] (ca1);
   \node[comp] (s'1)[right=of s1, xshift=1cm] {$\tildorder(n)$};
   \draw[very thick,->]  (re.north west) -- (s1.west);
   \draw[dotted,thin] (s1) -- (s'1);
   \draw[dotted,thin] (ca1) -- (ca'1);
    \draw[dotted,thin] (re) -- (re');
        \draw[dotted,thin] (re2) -- (re'2);
            \draw[dotted,thin] (re3) -- (re'3);
                \draw[dotted,thin] (re4) -- (re'4);
   \end{scope}
\end{tikzpicture}}
\caption{Desired running times are specified in the white boxes.
  Here $a \rightarrow b$ means that an efficient algorithm for $a$ yields an efficient algorithm for $b$.}\label{fig:reductions}
\end{figure}
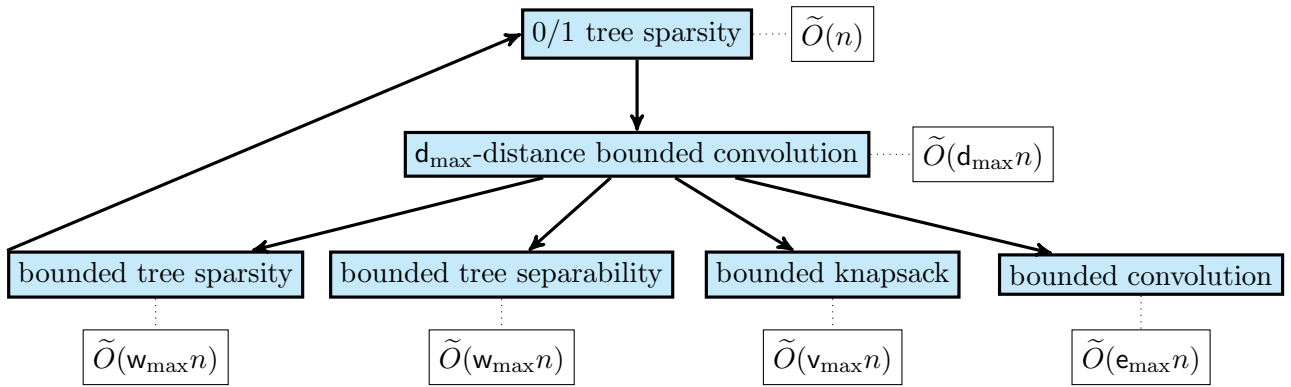


\section{The Prediction Technique for $(\max,+)$ Convolution}\label{sec:maxplusconvolution}
In this section, we present several algorithms for computing the $(\max,+)$ convolution (computationally equivalent to $(\min, +)$ convolution) of two vectors. Recall that in this problem, two vectors $a$ and $b$ are given as input and the goal is to compute a vector $c$ of length $|a|+|b|-1$ such that 
$$c_i = \max_{j=0}^{i} [a_j + b_{i-j}].$$
For this definition only, we assume that each vector $a$ or $b$ is padded on the right with sufficiently many $-\infty$ components: \ie, $a_i = -\infty$ for $i \geq |a|$ and $b_j = -\infty$ for $j \geq |b|$.

Assuming $|a| + |b| = n$, a trivial algorithm to compute $c$ from $a$ and $b$ is to iterate over all pairs of indices and compute $c$ in time $O(n^2)$. Despite the simplicity of this solution, thus far, it has remained one of the most efficient algorithms for computing the $(\max,+)$ convolution of two vectors. However, for special cases, more efficient algorithms compute the result in subquadratic time. For instance, as we show in Section~\ref{sec:maxplusconvolution:reduction}, if the values of the vectors are integers in the range $[0,\emax ]$, one can compute the $(\max,+)$ convolution of two vectors in time $\tildorder(\emax n)$.

In this section, we present several novel techniques for multiplying vectors in the $(\max, +)$ setting in truly subquadratic time under different assumptions. The main result of this section is \textit{the prediction technique} explained in Section~\ref{sec:maxplusconvolution:prediction}. Roughly speaking, we define the notion of \textit{uncertain solution} and show that if an uncertain solution of two integer vectors with an error of $\emax $ is given, then it is possible to compute the $(\max,+)$ convolution of the vectors in time $\tildorder(\emax n)$. Later in Sections~\ref{sec:knapsack} and~\ref{sec:power} we use this technique to improve the running time of the \Problem{knapsack} and other problems.

In our algorithm, we subsequently make use of a classic reduction from \Problem{$(\max,+)$ convolution} to \Problem{polynomial multiplication}. In the interest of space, we skip this part here and explain it in Section~\ref{sec:maxplusconvolution:reduction}. The same reduction has been used as a blackbox in many recent works~\cite{zwick2002all,chan2015clustered,bringmann2017near,backurs2017better,zwick1998all}. Based on this reduction, we show that an $\tildorder(\emax n)$ time algorithm can compute the convolution of two integer vectors whose values are in the range $[0,\emax ]$. We further explain in Section~\ref{sec:maxplusconvolution:reduction} that even if the values of the vectors are real but in the range $[0,\emax ]$, one can approximate the solution within an additive error less than $1$. These results hold even if the input values can be either in the interval $[0, \emax]$ or in the set $\{-\infty, \infty\}$. We use this technique in Section~\ref{sec:maxplusconvolution:solutionrange} to compute the $(\max, +)$ convolution of two integer vectors in time $\tildorder(\emax n)$ when for every $i$ and $j$ we have $|a_i + b_j - (a \ttimes b)_{i+j}| \leq \emax $. Finally, in Section~\ref{sec:maxplusconvolution:prediction} we use these results to present the prediction technique for computing the $(\max, +)$ convolution of two vectors in time $\tildorder(\emax n)$.

\subsection{An $\tildorder(\emax n)$ Time Algorithm for the Case of Small Distortion}\label{sec:maxplusconvolution:solutionrange}
In this section we study a variant of the \Problem{$(\max, +)$ convolution} problem where every $a_i + b_j$ differs from $(a \ttimes b)_{i+j}$ by at most $\emax $. Indeed this condition is strictly weaker than the condition studied in Section~\ref{sec:maxplusconvolution:reduction}. Nonetheless we show that still an $\tildorder(\emax n)$ time algorithm can compute $a \ttimes b$ if the values of the vectors are integers but not necessarily in the range $[0,\emax ]$. In the interest of space, we omit the proofs of Lemmas~\ref{lemma:rangegood},~\ref{lemma:range3}, and~\ref{lemma:range4}  and include them in Section~\ref{sec:step2-omitted}.

We first assume that both vectors $a$ and $b$ have size $n$. Moreover, since the case of $n=1$ is trivial, we assume w.l.o.g.\ that $n > 1$. In order to compute $a \ttimes b$ for two vectors $a$ and $b$, we transform them into two vectors $a'$ and $b'$ via two operations. In the first operation, we add a constant $C$ to every element of a vector. In the second operation, we fix a constant $C$ and add $iC$ to every element $i$ of \textbf{both} vectors. We delicately perform these operations on the vectors to make sure the resulting vectors $a'$ and $b'$ have small values. This enables us to approximate (and not compute since the values of $a'$ and $b'$ are no longer integers) the solution of $a' \ttimes b'$ in time $\tildorder(\emax n)$. Finally, we show how to derive the solution of $a \ttimes b$ from an approximation for $a' \ttimes b'$.  We begin by observing a property of the vectors.

\begin{lemma}\label{lemma:rangegood}
	Let $a$ and $b$ be two vectors of size $n$ such that for all $0 \leq i,j < n$ we have $(a \ttimes b)_{i+j} - a_i - b_j \leq \emax $. Then,
	\begin{itemize}
		\item for every $0 \leq i,j < n$, we have  $|(a_i - b_i) - (a_j - b_j)| \leq \emax$; and
		\item for every $0 \leq i \leq j \leq k < n$ such that $j-i = k-j$, we have $|a_j - (a_i + a_k)/2| \leq \emax $.
	\end{itemize}
\end{lemma}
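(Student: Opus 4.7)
The plan is to exploit a simple two-sided sandwich that the hypothesis, combined with the definition of the $(\max,+)$ convolution, supplies: for any $\ell$ and any feasible split $\ell = p + q$ one has $(a\ttimes b)_\ell \geq a_p + b_q$ by definition of the convolution, and $(a\ttimes b)_\ell - a_p - b_q \leq \emax$ by hypothesis. Hence every candidate $a_p + b_q$ with $p+q = \ell$ lies in the interval $[(a\ttimes b)_\ell - \emax,\ (a\ttimes b)_\ell]$, so any two such candidates for the same $\ell$ differ by at most $\emax$. Both bullets will be proved by instantiating this sandwich at a cleverly chosen $\ell$ with two well-chosen splits.

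For the first bullet I would fix indices $i$ and $j$ and take $\ell = i+j$ with the two splits $(i,j)$ and $(j,i)$. Both of $a_i + b_j$ and $a_j + b_i$ lie in the interval above, so $|(a_i + b_j) - (a_j + b_i)| \leq \emax$, which rearranges directly to $|(a_i - b_i) - (a_j - b_j)| \leq \emax$.

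For the second bullet, let $i \leq j \leq k$ with $j - i = k - j$, so $i + k = 2j$. Apply the sandwich at $\ell = 2j$ with the two splits $(j, j)$ and $(i, k)$ to obtain $|(a_j + b_j) - (a_i + b_k)| \leq \emax$. This inequality mixes $a$- and $b$-entries, which is the main obstacle: a direct one-shot comparison cannot extract a pure $a$-statement. To disentangle, I would invoke the first bullet on the index pair $(j,k)$, which says $b_k - a_k = b_j - a_j + \epsilon$ for some $\epsilon$ with $|\epsilon| \leq \emax$, i.e.\ $b_k = a_k - a_j + b_j + \epsilon$. Substituting this into the previous inequality collapses the $b$-terms and yields $|2a_j - a_i - a_k - \epsilon| \leq \emax$, whence the triangle inequality gives $|2a_j - a_i - a_k| \leq 2\emax$, which is exactly $|a_j - (a_i + a_k)/2| \leq \emax$. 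Apart from this two-step combination (sandwich at $\ell = 2j$, then the first bullet to eliminate $b$), the argument is mechanical.
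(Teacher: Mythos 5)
Your proposal is correct and follows essentially the same route as the paper: the first bullet is the same two-candidate comparison at index $i+j$ (the paper just phrases it as a contradiction), and for the second bullet the paper likewise sandwiches $a_j+b_j$ and $a_i+b_k$ at index $2j=i+k$, then eliminates the $b$-terms using the first bullet applied to the pair $(j,k)$ and the triangle inequality, exactly as you do.
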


%
%

Note that since there is no particular assumption on vector $a$, the condition of Lemma~\ref{lemma:rangegood} carries over to vector $b$ as well. Next we use Lemma~\ref{lemma:rangegood} to present an $\tildorder(\emax n)$ time algorithm for computing $a \ttimes b$.  
The two operations explained above help us transform the vectors $a$ and $b$ such that all elements fall in the range $[0, O(\emax)]$.
Next, we approximate the convolution of the transformed vectors via the results of Section~\ref{sec:maxplusconvolution:reduction}, and eventually
compute $a \ttimes b$ in time $\tildorder(\emax n)$.

\begin{lemma}\label{lemma:range3}
	Let $a$ and $b$ be two integer vectors of size $n$ such that for all $0 \leq i,j < n$ we have $(a \ttimes b)_{i+j} - a_i - b_j \leq \emax $. One can compute $a \ttimes b$ in time $\tildorder(\emax n)$.
\end{lemma}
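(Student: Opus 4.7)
The plan is to apply two invertible transformations that bring $a$ and $b$ into a narrow range, invoke the algorithm of Section~\ref{sec:maxplusconvolution:reduction} for small-valued inputs, and then invert the transformations. First I subtract $a_0$ from every entry of $a$ and $b_0$ from every entry of $b$, producing $\tilde a$ and $\tilde b$. Next I set $C = -(\tilde a_{n-1} + \tilde b_{n-1})/(2(n-1))$ and add $iC$ to the $i$-th entry of both vectors, producing $a'$ and $b'$. Both of these operations distort the convolution in an explicit way, yielding $(a \ttimes b)_k = (a' \ttimes b')_k + a_0 + b_0 - kC$, using that adding a constant to one vector shifts every entry of the convolution by that same amount and that adding $iC$ to both vectors shifts the $k$-th entry of the convolution by $kC$. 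By construction $a'_0 = b'_0 = 0$ and $a'_{n-1} + b'_{n-1} = 0$; moreover Lemma~\ref{lemma:rangegood} part~1 (applied with index $0$) forces $|\tilde a_i - \tilde b_i| \leq \emax$ for every $i$, so $|a'_{n-1}|, |b'_{n-1}| \leq \emax/2$.

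Next I argue that $|a'_i|, |b'_i| = O(\emax)$ for every $i$, so that after a single further constant shift both vectors lie in $[0, O(\emax)]$. The midpoint inequality of Lemma~\ref{lemma:rangegood} is invariant under adding an affine function, so $a'$ and $b'$ still satisfy $|a'_j - (a'_i+a'_k)/2| \leq \emax$ whenever $j=(i+k)/2$. Let $M = \max_i |a'_i|$, achieved at some index $i^*$. Applying the midpoint inequality to the triple $(0, i^*, 2i^*)$ when $2i^* \leq n-1$, or to the triple $(2i^*-(n-1), i^*, n-1)$ when $2i^* \geq n-1$, together with $a'_0 = 0$ and $|a'_{n-1}| \leq \emax/2$, yields $M \leq M/2 + O(\emax)$, hence $M = O(\emax)$; the same reasoning bounds $|b'_i|$.

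Finally, I shift $a'$ and $b'$ by one more additive constant so that all entries land in $[0, O(\emax)]$, then invoke the algorithm of Section~\ref{sec:maxplusconvolution:reduction} to approximate $a' \ttimes b'$ within additive error less than $1/2$ in time $\tildorder(\emax n)$; the stated error bound of that section is less than $1$, but a preliminary scaling of both vectors by a small integer constant shrinks the per-entry error by the same factor after rescaling back. Undoing the constant shift and the $iC$ correction produces an approximation of each $(a \ttimes b)_k$ with error less than $1/2$, and since $a$ and $b$ are integer vectors, $(a \ttimes b)_k$ is an integer, so rounding to the nearest integer recovers it exactly. The step that needs the most care is the $O(\emax)$ range bound for $a'$ and $b'$; the rest is essentially bookkeeping plus a black-box call to the reduction of Section~\ref{sec:maxplusconvolution:reduction}.
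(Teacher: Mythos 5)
Your proposal is correct and follows essentially the same route as the paper's proof: the same two transformations (a constant shift plus adding $iC$ to the $i$-th entry of both vectors, chosen via the endpoints so that Lemma~\ref{lemma:rangegood} confines the transformed vectors to a window of width $O(\emax)$), followed by the bounded-range approximate convolution of Section~\ref{sec:maxplusconvolution:reduction} and exact recovery by rounding using integrality of $a \ttimes b$. Your normalization (centering at $0$ with a symmetric halving argument at the maximizing index, and rounding to the nearest integer after halving the error by scaling) differs from the paper's only cosmetically; the paper instead pins $a'_0 = a'_{n-1} = b'_{n-1} = 3\emax$ and takes a ceiling against the one-sided error of less than $1$.
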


All that remains is to extend our algorithm to the case where we no longer have $|a| = |b|$. We assume w.l.o.g.\ that $|b| \geq |a|$ and divide $|b|$ into $\lceil |b|/|a| \rceil$ vectors of length $|a|$ such each $b_i$ appears in at least one of these vectors. Then, in time $O(\emax |a|)$ we compute the $(\max,+)$ convolution of $a$ and each of the smaller intervals, and finally use the results to compute $a \ttimes b$ in time $\tildorder(\emax  (|a| + |b|))$.

\begin{lemma}\label{lemma:range4}
Let $a$ and $b$ be two integer vectors such that for all $0 \leq i < |a|$ and $0 \leq j < |b|$ we have $(a \ttimes b)_{i+j} - a_i - b_j \leq \emax $. One can compute $a \ttimes b$ in time $\tildorder(\emax  (|a| + |b|))$.
\end{lemma}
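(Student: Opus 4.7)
The plan is to reduce to the equal-length case already handled by Lemma~\ref{lemma:range3}. Assume without loss of generality that $|b|\ge |a|$, set $m=|a|$ and $K=\lceil |b|/m\rceil$, and slice $b$ into $K$ overlapping length-$m$ windows by setting $p_k=\min(km,\,|b|-m)$ for $0\le k<K$ and defining $b^{(k)}_j=b_{p_k+j}$ for $0\le j<m$. The offsets increase by exactly $m$ at each step except possibly the last (where the final window slides back so that it ends at $|b|-1$), and together the windows $[p_k,p_k+m)$ cover $[0,|b|)$. Every $b^{(k)}$ is a length-$m$ integer vector, matching the shape required by Lemma~\ref{lemma:range3}.

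Next I would verify that each pair $(a,b^{(k)})$ satisfies the small-distortion hypothesis of Lemma~\ref{lemma:range3} with the same parameter $\emax$. For any $0\le i,j<m$, since $b^{(k)}_j=b_{p_k+j}$ we have
\[
(a\ttimes b^{(k)})_{i+j}\;\le\;(a\ttimes b)_{i+j+p_k},
\]
because the right-hand side is a maximum over a strictly larger set of index pairs. Combining this with the hypothesis of the lemma applied to the pair $(i,\,p_k+j)$ gives
\[
(a\ttimes b^{(k)})_{i+j}-a_i-b^{(k)}_j\;\le\;(a\ttimes b)_{i+j+p_k}-a_i-b_{p_k+j}\;\le\;\emax.
\]
So Lemma~\ref{lemma:range3} applies to $(a,b^{(k)})$ and computes $a\ttimes b^{(k)}$ in time $\tildorder(\emax\, m)$ for each $k$.

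Finally, I would assemble the full convolution via the identity
\[
(a\ttimes b)_\ell \;=\; \max_{k\,:\,\ell-p_k\in[0,\,2m-1)}\,(a\ttimes b^{(k)})_{\ell-p_k},
\]
which holds because any pair $(i,j)$ with $i+j=\ell$, $0\le i<m$, and $0\le j<|b|$ lies in at least one window $[p_k,p_k+m)$, and conversely every $(a\ttimes b^{(k)})_{\ell-p_k}$ value is realized by a valid pair for $a\ttimes b$ at index $\ell$. For each $\ell$ only $O(1)$ indices $k$ are relevant (those with $p_k\in(\ell-2m,\,\ell]$), so the assembly costs $O(|a|+|b|)$ total. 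Summing over the $K=O(|b|/m)$ subproblems yields a total running time of $K\cdot\tildorder(\emax\,m)+O(|a|+|b|)=\tildorder(\emax(|a|+|b|))$.

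The only real subtlety is ensuring uniform window length so that Lemma~\ref{lemma:range3}, which is stated for $|a|=|b|=n$, applies verbatim; the overlap trick $p_k=\min(km,\,|b|-m)$ handles the tail without breaking coverage or introducing $-\infty$ entries, and preservation of the small-distortion inequality under restriction to a contiguous window is immediate since it is a pointwise inequality over all valid pairs. The remaining bookkeeping of index ranges and the $O(1)$-per-output combining step are routine.
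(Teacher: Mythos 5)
Your proposal is correct and follows essentially the same route as the paper: split $b$ into $\lceil |b|/|a|\rceil$ length-$|a|$ windows (with the last window slid back to end at $|b|-1$), apply Lemma~\ref{lemma:range3} to each pair $(a,b^{(k)})$, and combine the sub-results by taking maxima at the appropriately offset indices. Your explicit check that the small-distortion hypothesis restricts to each window (via $(a\ttimes b^{(k)})_{i+j}\le (a\ttimes b)_{i+j+p_k}$) is a detail the paper leaves implicit, but the decomposition and accounting are the same.
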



\subsection{Prediction}\label{sec:maxplusconvolution:prediction}
In this section, we explain the prediction technique and show how it can be used to improve the running time of classic problems when the input values are small. Roughly speaking, we show that in some cases an approximation algorithm with an additive error of $\emax $ can be used to compute the exact solution of a $(\max,+)$ convolution in time $\tildorder(\emax n)$. In general, an additive approximation of $\emax $ does not suffice to compute the $(\max,+)$ convolution in time $\tildorder(\emax n)$. However, we show that under some mild assumptions, an additive approximation yields a faster exact solution. We call this the prediction technique.

Suppose for two integer vectors $a$ and $b$ of size $n$, we wish to compute $a \ttimes b$. The values of the elements of $a$ and $b$ range over a potentially large  (say $O(n)$) interval and thus Algorithm~\ref{alg:mult1} doesn't improve the $O(n^2)$ running time of the trivial solution. However, in some cases  we can predict which $a_i$'s and $b_j$'s are far away from $(a \ttimes b)_{i+j}$. For instance, if $a$ and $b$ correspond to the solutions of two knapsack problems whose item weights are bounded by $\emax $, a well-known greedy algorithm can approximate $a \ttimes b$ within an additive error of $\emax$ ($a_i$ and $b_i$ denote the solutions of the knapsack problem for size $i$). The crux of the argument is that if we sort the items with respect to the ratio of weight over size in descending order and fill the knapsack in this order until we run out of space, we always get a solution of at most $\emax $ away from the optimal. 
Now, if $a_i + b_j$ is less than the estimated value for $(a \ttimes b)_{i+j}$ for some $i$ and $j$, then there is no way that the pair $(a_i,b_j)$ contributes to the solution of $a \ttimes b$. With a more involved argument, one could observe that whenever $a_i + b_j$ is at least $\emax $ smaller than the estimated solution for $(a \ttimes b)_{i+j}$, then $a_i + b_k < (a \ttimes b)_{i+k}$ for either all $k$'s in $[j,n-1]$ or all $k$'s in $[0,j]$. We explain this in more details in Section~\ref{sec:knapsack}.

This observation shows that in many cases, $(a_i,b_j)$ pairs that are far from $(a \ttimes b)_{i+j}$ can be trivially detected and ignored. Therefore, the main challenge is to handle the $(a_i,b_j)$ options that are close to $(a \ttimes b)_{i+j}$. Our prediction technique states that such instances can also be solved in subquadratic time. 
To this end, suppose that $a$ and $b$ are two integer vectors of size $n$,
and for every $0 \leq i < |a|$ we have an interval $[x_i, y_i]$,
and we are guaranteed that $a_i + b_j$ is at most $\emax $ away from $(a \ttimes b)_{i+j}$ for all $j \in [x_i,y_i]$. 
Also, we know that for any $0 \leq i < |a \ttimes b|$ there exists a $j$ such that $a_j + b_{i-j} = (a \ttimes b)_{i}$ and $x_j \leq i-j \leq y_j$. We call such data \textit{an uncertain solution}. We show in Theorem~\ref{theorem:prediction} that if an uncertain solution is given, then one can compute $a \ttimes b$ in time $\tildorder(\emax n)$. For empty intervals only, $y_i$ is allowed to be smaller than $x_i$.

\begin{theorem}\label{theorem:prediction}
    Let $a$ and $b$ be two integer vectors and assume we have $|a|$ intervals $[x_i, y_i]$ such that 
    \begin{itemize}
        \item $a_i + b_j \geq (a \ttimes b)_{i+j} - \emax $ for all $0 \leq i < |a|$ and $j \in [x_i,y_i]$;
        \item for all $0 \leq i < |a \ttimes b|$, there exists an index $j$ such that $a_j + b_{i-j} = (a \ttimes b)_i$ and $x_j \leq i-j \leq y_j$; and
        \item $0 \leq x_i, y_i < |b|$ for all intervals and $x_i \leq x_j $ and $y_i \leq y_j$ hold for all $0 \leq i < j < |a|$.
    \end{itemize}
   Then, one can compute $a \ttimes b$ from $a$, $b$, and the intervals in time $\tildorder(\emax  (|a| + |b|))$.
\end{theorem}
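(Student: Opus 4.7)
The plan is to reduce the problem to many instances of the small-distortion case handled by Lemma~\ref{lemma:range4}, one per ``canonical rectangle'' obtained from a segment tree built over the $b$-indices. First I would build a standard segment tree on $[0,|b|)$; each interval $[x_i,y_i]$ then decomposes into $O(\log|b|)$ canonical tree nodes. For every canonical interval $I$ of the tree, let $R_I\subseteq[0,|a|)$ denote the set of rows $i$ for which $I$ appears in the decomposition of $[x_i,y_i]$. Using the monotonicity $x_i\leq x_{i+1}$ and $y_i\leq y_{i+1}$, one checks that $R_I$ is a contiguous block of rows: $I$ is in the decomposition iff $I\subseteq[x_i,y_i]$ and the parent is not, and writing $I$ as a left or right child of $P(I)$ reduces each of these to a prefix/suffix condition on $x_i$ or $y_i$, whose intersection is a contiguous range.

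Next, for each canonical $I$ with $R_I\neq\emptyset$ I would invoke Lemma~\ref{lemma:range4} on the contiguous sub-vectors $a[R_I]$ and $b[I]$. The crucial observation is that every pair $(i,j)\in R_I\times I$ is an allowed pair, since $I\subseteq[x_i,y_i]$ by construction. Hypothesis~(1) of the theorem therefore gives $(a\ttimes b)_{i+j}-a_i-b_j\leq\emax$ for every such $(i,j)$, and since $(a[R_I]\ttimes b[I])_{i'+j'}\leq(a\ttimes b)_{i+j}$ under the index shift $i'=i-\min R_I,\ j'=j-\min I$, the small-distortion hypothesis of Lemma~\ref{lemma:range4} is satisfied. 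That lemma therefore computes $a[R_I]\ttimes b[I]$ in time $\tildorder(\emax(|R_I|+|I|))$.

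The final convolution is obtained by a coordinate-wise maximum: for each output index $k$, I set $(a\ttimes b)_k$ to the maximum of $(a[R_I]\ttimes b[I])_{k-\min R_I-\min I}$ over all canonical rectangles for which this shifted index is in range. Correctness uses hypothesis~(2): the witness pair $(j^\ast,k-j^\ast)$ realizing $(a\ttimes b)_k$ satisfies $k-j^\ast\in[x_{j^\ast},y_{j^\ast}]$ and therefore lies in at least one of the $O(\log|b|)$ canonical rectangles decomposing row $j^\ast$, so its value is included in the maximum, while every other term is trivially bounded above by $(a\ttimes b)_k$. For the running-time accounting, the standard segment-tree identities give $\sum_I|R_I|=O(|a|\log|b|)$ (each row hits $O(\log|b|)$ canonical intervals) and $\sum_{I\,:\,R_I\neq\emptyset}|I|\leq O(|b|\log|b|)$ (each level of the tree has total length $|b|$); summing $\tildorder(\emax(|R_I|+|I|))$ over all rectangles then yields $\tildorder(\emax(|a|+|b|))$.

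The step I expect to take the most care with is the contiguity of $R_I$: without it we could not feed $(a[R_I],b[I])$ to Lemma~\ref{lemma:range4} as consecutive sub-vectors, and we would be forced to split $R_I$ into several pieces and absorb a constant factor. It is precisely the monotonicity hypothesis on the intervals that forces this contiguity, by turning ``$I$ lies in the canonical decomposition of $[x_i,y_i]$'' into a conjunction of prefix/suffix conditions on the monotone sequences $x_i$ and $y_i$. Everything else, including correctness of the max-combine and the $\log|b|$ factors being absorbed into $\tildorder$, is routine once this structural point is in place.
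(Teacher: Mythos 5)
Your proposal is correct and is essentially the paper's own argument in different clothing: the paper's round-by-round dyadic split of $b$ with projections $\projection(\alpha,\beta)$ minus the sibling's projection is exactly your segment-tree canonical decomposition with contiguous row sets $R_I$, followed by the same invocation of Lemma~\ref{lemma:range4} on each rectangle, the same coordinate-wise max combine justified by the witness condition, and the same counting (each row meets $O(\log |b|)$ canonical nodes, each level of $b$ has total length $|b|$).
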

\begin{proof}
 One can set $n$ equal to the smallest power of two greater than $\max\{|a|, |b|\}$ and add extra $-\infty$'s to the end of the vectors to make sure $|a| = |b| = n$. Next, for every newly added element of $a$ we set its corresponding interval $[x_i, y_i]$ to $(n-q,n-q-1)$ (that is, an empty interval) where $q$ is the number of newly added $-\infty$'s to the end of $b$. This way, all conditions of the theorem are met and $|a|+|b|$ is multiplied by at most a constant factor. Therefore, from now on, we assume $|a| = |b| = n$ and that $n$ is a power of two. Keep in mind that for every $i$ with property $x_i \leq y_i$, none of $\{a_i,b_{x_i},b_{x_i+1},\ldots,b_{y_i}\}$ is equal to $-\infty$.

Our algorithm runs in $\log n + 1$ rounds. In every round we split $b$ into several intervals. For an interval $[\alpha,\beta]$ of $b$ we call the projection of $[\alpha,\beta]$ the set of all indices $i$ of $a$ that satisfy both $x_i \leq \alpha$ and $y_i \geq \beta$. We denote the projection of an interval $[\alpha,\beta]$ by $\projection(\alpha,\beta)$. We first show that for every $0 \leq \alpha \leq \beta < n$, $\projection(\alpha,\beta)$ corresponds to an interval of $a$. We defer the proof of Observation~\ref{observation:simple1} to Appendix~\ref{sec:step3-omitted}.

\begin{observation}\label{observation:simple1}
    For every $0 \leq \alpha \leq \beta < n$, $\projection(\alpha,\beta)$ is an interval of $a$.
\end{observation}

Furthermore, for any pair of disjoint intervals $[\alpha_1, \beta_1]$ and $[\alpha_2, \beta_2]$,
we observe that $\projection(\alpha_1, \beta_1) \setminus \projection(\alpha_2, \beta_2)$ is always an interval. Similar to Observation~\ref{observation:simple1}, we include the proof of Observation~\ref{observation:simple2} in Appendix~\ref{sec:step3-omitted}.

\begin{observation}\label{observation:simple2}
    For $0 \leq \alpha_1 \leq \beta_1 < \alpha_2 \leq \beta_2 < n$, both $\projection(\alpha_1, \beta_1) \setminus \projection(\alpha_2, \beta_2)$ and $\projection(\alpha_2, \beta_2) \setminus \projection(\alpha_1, \beta_1)$ are intervals of the indices of $a$.
\end{observation}
The proof for $\projection(\alpha_2, \beta_2) \setminus \projection(\alpha_1, \beta_1)$ being an interval follows from symmetry.

Before we start the algorithm, we construct a vector $c$ of size $2n-1$ and set all its indices equal to $-\infty$. In Round 1 of our algorithm, we only have a single interval $[\alpha_1, \beta_1] = [0,n-1]$ for $b$. Therefore, we compute $\projection(0,n-1) = [\gamma_1, \delta_1]$  and construct a vector $a^1$ of size $\delta_1 - \gamma_1 +1$ and set $a^1_i = a_{i+\gamma}$. Similarly, we construct a vector $b^1$ of size $\beta_1 - \alpha_1 + 1$ and set $b^1_i = b_{i+\alpha}$. Next, we compute $c^1 = a^1 \ttimes b^1$ using Lemma~\ref{lemma:range4}, and then based on that we set $c_{i+\alpha+\gamma} \leftarrow \max\{c_{i+\alpha+\gamma}, c'_i\}$ for all $0 \leq i < |c^1|$. 

The second round is similar to Round 1, except that this time we split $b$ into two intervals $[\alpha_1, \beta_1]$ and $[\alpha_2,\beta_2]$  where $\alpha_1 = 0, \beta_1 = n/2-1, \alpha_2 = n/2$, and $\beta_2 = n-1$. For interval $[\alpha_1, \beta_1]$ of $b$ we compute $[\gamma_1, \delta_1] = \projection(\alpha_1, \beta_1) \setminus \projection(\alpha_2, \beta_2)$ and similarly for the second interval of $b$ we compute $[\gamma_2, \delta_2] = \projection(\alpha_2, \beta_2) \setminus \projection(\alpha_1, \beta_1)$. Similar to Round 1, we construct $a^1, a^2, b^1, b^2$ from $a$ and $b$ with respect to the intervals and compute $c^1 = a^1 \ttimes b^1$ and $c^2 = a^2 \ttimes b^2$. Finally we update the solution based on $c^1$ and $c^2$.

More precisely, in Step $s+1$ of the algorithm, we split $b$ into $2^s$ intervals $[\alpha_i, \beta_i]$ where $\alpha_i = (i-1) 2^{(\log n)-s}$ and $\beta_i = i 2^{(\log n)-s} -1$. For odd intervals we compute $[\gamma_{2i+1}, \delta_{2i+1}] = \projection(\alpha_{2i+1}, \beta_{2i+1}) \setminus \projection(\alpha_{2i}, \beta_{2i})$ and for even intervals we compute $[\gamma_{2i}, \delta_{2i}] = \projection(\alpha_{2i}, \beta_{2i}) \setminus \projection(\alpha_{2i+1}, \beta_{2i+1})$. Next, we construct vectors $a^1, a^2, \ldots, a^{2^s}$ and $b^1, b^2, \ldots, b^{2^s}$ from $a$ and $b$ and compute $c^i = a^i \ttimes b^i$ using Lemma~\ref{lemma:range4} for every $1 \leq i \leq 2^s$. Finally, for every $1 \leq i \leq 2^s$ and $0 \leq j < |c^i|$, we set $c_{\alpha_i + \gamma_i + j} = \max\{c_{\alpha_i + \gamma_i + j},c^i_j\}$.\\

    \begin{algorithm}[h!]

    \KwData{Two integer vectors $a$ and $b$ of size $n$, intervals $[x_i, y_i]$ for $0 \leq i < n$ meeting the conditions of Theorem~\ref{theorem:prediction}}
    \KwResult{$a \ttimes b$}
    
    $c \leftarrow $\text{ a vector of size $2n -1 $ with indices set to $\infty$ initially}\; \label{line:prediction:1}
    
    \For {$s \in [0, \log n]$}{
        \For {$i \in [1, 2^s]$}{
            $\alpha_i \leftarrow (i-1)2^{(\log n)-s}$\; \label{line:prediction:4}
            $\beta_i \leftarrow i 2^{(\log n)-s}-1$\; \label{line:prediction:5}
        }
        
        \For {$i \in [1, 2^s]$}{
            \If {$s = 0$}{
                $[\gamma_i, \delta_i] \leftarrow \projection(\alpha_i, \beta_i)$\; \label{line:prediction:8}
            }\Else{
                \If {$i$ is odd}{
                    $[\gamma_i, \delta_i] \leftarrow \projection(\alpha_i, \beta_i) \setminus \projection(\alpha_{i+1}, \beta_{i+1})$\; \label{line:prediction:11}
                }\Else{
                    $[\gamma_i, \delta_i] \leftarrow \projection(\alpha_i, \beta_i) \setminus \projection(\alpha_{i-1}, \beta_{i-1})$\; \label{line:prediction:12}
                }
            }
            
            $a^i \leftarrow $\text{ a vector of size $\delta_i - \gamma_i + 1$ s.t. }$a^i_j = a_{\gamma_i+j}$\; \label{line:prediction:13}
            $b^i \leftarrow $\text{ a vector of size $2^{(\log n)-s}$ s.t. }$b^i_j = b_{\alpha_i+j}$\; \label{line:prediction:14}
            $c^i \leftarrow \textsf{DistortedConvolution}(a^i,  b^i, \emax )$\; \label{line:prediction:15}
            
            \For {$j \in [1,|c^i|]$}{
                $c_{\alpha_i+\gamma_i+j} \leftarrow \max\{c_{\alpha_i+\gamma_i+j}, c^i_j\}$\; \label{line:prediction:17}
            }
            
        }
    }    
    \textbf{Return } $c$\;        
    \caption{\textsf{ConvolutionViaPredictionMethod}($a,  b, \emax , x_i\text{'s}, y_i\text{'s}$)}    \label{alg:prediction}
\end{algorithm}
We show that (i) Algorithm~\ref{alg:prediction} finds a correct solution for $a \ttimes b$, and (ii) its running time is $\tildorder(\emax (|a| + |b|))$. Observe that Line~\ref{line:prediction:1} of Algorithm~\ref{alg:prediction} runs in time $O(n)$ and all basic operations (\!\eg, Lines~\ref{line:prediction:4} and~\ref{line:prediction:5}) run in time $O(1)$ and thus all these lines  in total take time $O(n \log n) = \tildorder(n)$. Moreover, for any $[\alpha, \beta]$, $\projection(\alpha,\beta)$ can be found in time $O(\log n)$ by binary searching the indices of $a$. More precisely, in order to find $\projection(\alpha,\beta)$ we need to find an index $\gamma$ of $a$ such that $x_\gamma \leq \alpha$ and an index $\delta$ such that $y_\delta \geq \beta$. Since both $x$ and $y$ are non-decreasing, we can find such indices in time $O(\log n)$. Therefore, the total running times of Lines~\ref{line:prediction:8},~\ref{line:prediction:11}, and~\ref{line:prediction:12} is $O(n \log^2 n) = \tildorder(n)$. The running time of the rest of the operations (Lines~\ref{line:prediction:13},~\ref{line:prediction:14},~\ref{line:prediction:15}, and~\ref{line:prediction:17}) depend on the length of the intervals $[\alpha_i, \beta_i]$ and $[\gamma_i, \delta_i]$. 
For a Round $s+1$, let $\ell_a = |a^1| + |a^2| + \ldots, |a^{2^s}|$ be the total length of the intervals $[\gamma_i, \delta_i]$. Similarly, define $\ell_b = |b^1| + |b^2| + \ldots + |b^{2^s}|$ and $\ell_c = |c^1| + |c^2| + \ldots + |c^{2^s}|$ as the total length of the intervals $[\alpha_i, \beta_i]$ and vectors $c^i$. It follow from the algorithm that in Round $s+1$, the running time of Lines~\ref{line:prediction:13},~\ref{line:prediction:14}, and~\ref{line:prediction:17} is $\tildorder(\ell_c)$ and the running time of Line~\ref{line:prediction:15} is $\tildorder(\emax \ell_c)$. Therefore, it only suffices to show that $\ell_c = O(n)$ to prove Algorithm~\ref{alg:prediction} runs in time $\tildorder(\emax n)$.

Notice that in every Round $s+1$ we have $|b_i| = 2^{\log n - s}$ and thus $\ell_b = 2^s 2^{\log n - s} = n$. Moreover, for every $c^i$ we have $c^i = a^i \ttimes b^i$ and thus $|c^i| \leq |a^i| + |b^i|$. Therefore, $\ell_c \leq \ell_a + \ell_b  = \ell_a + n$. Thus, in order to show $\ell_c = O(n)$, we need to prove that $\ell_a = O(n)$. To this end, we argue that for every $0 \leq i < n$, the $i$'th element of $a$ appears in at most two intervals of $[\gamma_i, \delta_i]$. Suppose for the sake of contradiction that for $0 \leq \alpha_{j_1} < \beta_{j_1} < \alpha_{j_2} < \beta_{j_2} < \alpha_{j_3} < \beta_{j_3}$ we have $i \in [\gamma_{j_1}, \delta_{j_1}] \cap [\gamma_{j_2}, \delta_{j_2}] \cap [\gamma_{j_3}, \delta_{j_3}]$. Recall that depending on the parity of $j_2$, $[\gamma_{j_2}, \delta_{j_2}]$ is either equal to $\projection(\alpha_{j_2},\beta_{j_2}) \setminus \projection(\alpha_{j_2+1},\beta_{j_2+1})$ or $\projection(\alpha_{j_2},\beta_{j_2}) \setminus \projection(\alpha_{j_2-1},\beta_{j_2-1})$ and since $i \in [\gamma_{j_2}, \delta_{j_2}]$ then either of $i \notin \projection(\alpha_{j_2-1},\beta_{j_2-1})$ or $i \notin \projection(\alpha_{j_2+1},\beta_{j_2+1})$ hold. This implies that either $y_i < \beta_{j_2+1}$ or $x_i > \alpha_{j_2-1}$ which imply either $i \notin \projection(\alpha_{j_1}, \beta_{j_1})$ or $i \notin \projection(\alpha_{j_3}, \beta_{j_3})$ which is a contradiction. Thus, $\ell_a \leq 2n$ and therefore $\ell_c \leq 3n$. This shows that Algorithm~\ref{alg:prediction} runs in time $\tildorder(\emax n)$.

To prove correctness, we show that (i) every $a^i$ and $b^i$ meet the condition of Lemma~\ref{lemma:range4}, and (ii) for every $a_i$ and $b_j$ such that $j \in [x_i, y_i]$ in some round of the algorithm and for some $k$, $a^k$ contains $a_i$ and $b^k$ contains $b_j$.

We start with the former. Due to our algorithm, in every round for every $[\alpha_i, \beta_i]$ we have $[\gamma_i, \delta_i] \subseteq \projection(\alpha_i, \beta_i)$. This implies that for every $i' \in [\gamma_i, \delta_i]$ and every $j' \in [\alpha_i, \beta_i]$ we have $$a^i_{i'-\gamma_i} + b^i_{j'-\alpha_i} - \emax  = a_{i'} + b_{j'} - \emax  \geq (a \ttimes b)_{i'+j'} \geq (a^i \ttimes b^i)_{i'+j'-\gamma_{i'}-\alpha_{j'}}.$$ Thus, the condition of Lemma~\ref{lemma:range4} holds for every $a^i$ and $b^i$.

We finally show that for every $0 \leq i < n$ and every $0 \leq j < n$ such that $j \in [x_i, y_i]$, in some round of the algorithm we have $j \in [\alpha_k, \beta_k]$ and $i \in [\gamma_k, \delta_k]$ for some $k$. To this end, consider the first Round $s+1$ in which $i \in \projection(\alpha_{\lceil j/2^{\log n-s}\rceil}, \beta_{\lceil j/2^{\log n-s}\rceil})$. We know that this eventually happens in some round since in Round $\log n+1$ we have $i \in \projection(\alpha_{\lceil j/2^{\log n-\log n}\rceil}, \beta_{\lceil j/2^{\log n-\log n}\rceil}) = \projection(j,j)$. Round $s+1$ is the first round that $i \in \projection(\alpha_{\lceil j/2^{\log n-s}\rceil}, \beta_{\lceil j/2^{\log n-s}\rceil})$ happens and thus either $s = 0$  or $s > 0$. The former completes the proof since it yields $i \in [\alpha_{\lceil j/2^{\log n-s}\rceil}, \beta_{\lceil j/2^{\log n-s}\rceil}]$. The latter implies that $i \notin [\alpha_{\lceil j/2^{\log n-s+1}\rceil}, \beta_{\lceil j/2^{\log n-s+1}\rceil}]$ and thus $i \in [\alpha_{\lceil j/2^{\log n-s}\rceil}, \beta_{\lceil j/2^{\log n-s}\rceil}]$. Thus, in Round $s+1$ we have $i \in [\gamma_k, \delta_k]$ and $j \in [\alpha_k, \beta_k]$ for $k = \lceil j/2^{\log n-s}\rceil$.
\end{proof}


\bibliographystyle{abbrv}	
\bibliography{knapsack}

\appendix
 \section{The Knapsack Problem}\label{sec:knapsack}
In this section, we consider the knapsack problem and present a fast algorithm that can solve this problem for small values. In particular, when the maximum value of the items is constant, our algorithm runs in linear time. In this problem, we have a knapsack of size $t$ and $n$ items each associated to a size $s_i$ and value $v_i$. The goal is to place a subset of the items into the knapsack with maximum total value subject to their total size being limited by $t$. In the 0/1 knapsack problem, we are allowed to use each item at most once whereas in the unbounded knapsack problem, we can use each item several times. From now on, every time we use the term knapsack problem we mean the 0/1 knapsack problem unless stated otherwise.

A classic dynamic programming algorithm yields a running time of $O(nt)$~\cite{thomas2001introduction} for the knapsack problem. On the negative side, recently it has been shown that both the 0/1 and unbounded knapsack problems are as hard as $(\max,+)$ convolution and thus it is unlikely to solve any of these problems in time $O((n+t)^{2-\epsilon})$ for any $\epsilon > 0$~\cite{cygan2017problems}. However, there is no assumption on the values of the items in these reductions and thus the hardness results don't carry over to the case of small values. In particular, a barely subquadratic time ($O(t^{1.859}+n)$) algorithm follows from the work of~\cite{chan2015clustered} when the item values are constant integer numbers. In what follows, we show that we can indeed solve the problem in truly subquadratic time when the input values are small.  We assume throughout this section that the values of the items are integers in range $[0, \vmax ]$. Using the prediction technique we present an $\tildorder(\vmax t+n)$ time algorithm for the knapsack problem.

We begin by defining a knapsack variant of the $(\max,+)$ convolution in Section~\ref{sec:knapsack:convolution} and show that if the corresponding knapsack problems have non-negative integer values bounded by $\vmax $, then one can compute the $(\max, +)$ convolution of two vectors in time $\tildorder(\vmax n)$. It follows from the recent technique of~\cite{cygan2017problems}  that using this type of $(\max, +)$ convolution, one can solve the knapsack problem in time $\tildorder(\vmax n)$. However, for the sake of completeness, we include a formal proof in Appendix~\ref{appendix:knapsack}.

\subsection{Knapsack Convolution}\label{sec:knapsack:convolution}
Let $a$ and $b$ be two vectors that correspond to the solutions of two knapsack instances $\k_a$ and $\k_b$. More precisely, $a_i$ is the maximum value of the items in knapsack problem $\k_a$ with a total size of at most $i$. Similarly, $b_i$ is the maximum value of the items in knapsack problem $\k_b$ with a total size of at most $i$. We show that if the values of the items in $\k_a$ and $\k_b$ are non-negative integers bounded by $\vmax $, then one can compute $a \ttimes b$ in time $\tildorder(\vmax (|a| + |b|) + n)$ where $n$ is the total number of items in $\k_a$ and $\k_b$. 

The sketch of the algorithm is as follows: We first define the fractional variants of both the knapsack problem and the knapsack convolution. We show that both problems can be efficiently solved in time $O(n \log n)$ where $n$ is the total number of items in each knapsack problem. Next, we observe that any solution of the fractional knapsack problem can be turned into a solution for the knapsack problem with an additive error of at most $\vmax $. Similarly, any solution for the fractional knapsack convolution is always at most $2\vmax $ away from the solution of the knapsack convolution. We then show that both the solution of the fractional knapsack problem and the solution of fractional knapsack convolution have some properties. We explore these properties and show that they enable us to find an uncertain solution for the knapsack convolution in time $\tildorder(t+n)$. This yields an $\tildorder(\vmax n)$ time solution for knapsack convolution via Theorem~\ref{theorem:prediction}. In the interest of space, we omit some of the proofs of this section and include them in Appendix~\ref{appendix:knapsack:convolution}.

We define the fractional variant of the knapsack problem as follows. In the fractional knapsack problem, we are also allowed to divide the items into smaller pieces and the value of each piece is proportional to the size of that piece. More formally, the fractional knapsack problem is defined as follows:

\begin{definition}
    Given a knapsack of size $t$ and $n$ items with sizes $s_1, s_2, \ldots, s_n$ and values $v_1, v_2, \ldots, v_n$, the fractional knapsack problem is to find $n$ non-negative real values $f_1, f_2, \ldots, f_n$ such that $\sum s_i f_i \leq t$, $f_i \leq 1$ for all $i$, and $\sum f_i v_i$ is maximized.
\end{definition}

One well-known observation is that the fractional variant of the knapsack problem
can be solved exactly via a greedy algorithm: sort the items according
to the ratio of value over size and put the items into the knapsack
accordingly. If at some point there is not enough space for the next
item, we break it into a smaller piece that completely fills the
knapsack. We stop when either we run out of items or the knapsack is
full. We call this the greedy knapsack algorithm and name this simple observation.

    
    
    

\begin{observation}\label{observation:knapsack1}
     The greedy algorithm solves the fractional knapsack problem in time $O(n \log n)$.
\end{observation}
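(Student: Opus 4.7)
The plan is to establish (i) optimality of the greedy rule, and (ii) its $O(n\log n)$ running time. For the running time, the only nontrivial step in the greedy algorithm is sorting the $n$ items in non-increasing order of the ratio $v_i/s_i$, which takes $O(n\log n)$; the subsequent pass through the sorted list adds each item (possibly the last one fractionally) in $O(1)$ amortized per item, for a total of $O(n)$ additional work. Thus the overall running time is $O(n\log n)$.

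For optimality, I would use a standard exchange argument. Let $f = (f_1,\ldots,f_n)$ be the greedy fractional assignment, and let $g = (g_1,\ldots,g_n)$ be any feasible fractional assignment with $\sum_i s_i g_i \le t$ and $0 \le g_i \le 1$. Without loss of generality, index the items so that $v_1/s_1 \ge v_2/s_2 \ge \cdots \ge v_n/s_n$; the greedy solution then has the form $f_1 = f_2 = \cdots = f_{k-1} = 1$, $0 \le f_k \le 1$, $f_{k+1} = \cdots = f_n = 0$ for some index $k$, with $\sum_{i=1}^{k} s_i f_i = \min(t, \sum_{i=1}^{n} s_i)$.

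The key step is to show $\sum_i v_i f_i \ge \sum_i v_i g_i$. Consider the difference vector $d_i := f_i - g_i$. By construction, $d_i \ge 0$ for $i < k$ and $d_i \le 0$ for $i > k$ (because $f$ saturates the top items first and skips the bottom items entirely). Since both assignments are feasible and greedy fills as much capacity as possible, we have $\sum_i s_i d_i \ge 0$. Now write
\[
\sum_i v_i d_i \;=\; \sum_i \frac{v_i}{s_i}\,(s_i d_i).
\]
Group the terms by whether $i \le k$ or $i > k$: for $i \le k$ the ratio $v_i/s_i$ is at least $v_k/s_k$ and $s_i d_i \ge 0$, while for $i > k$ the ratio is at most $v_k/s_k$ and $s_i d_i \le 0$. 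Replacing every ratio by $v_k/s_k$ only decreases the sum, so
\[
\sum_i v_i d_i \;\ge\; \frac{v_k}{s_k} \sum_i s_i d_i \;\ge\; 0,
\]
establishing $\sum_i v_i f_i \ge \sum_i v_i g_i$.

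I do not foresee any real obstacle; this is the textbook proof for fractional knapsack. The only mild care needed is to handle boundary cases cleanly, namely when the total size of all items is at most $t$ (the greedy takes everything, $k = n+1$) and when a single item already exceeds $t$ (so $k = 1$ and $f_1 < 1$). Both fit the form above with the inequality $\sum_i s_i d_i \ge 0$ remaining valid.
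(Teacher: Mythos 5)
Your proof is correct, but it argues optimality by a different route than the paper. The paper's proof is a local exchange argument applied to an optimal solution: if two items satisfy $v_i/s_i > v_j/s_j$ yet $f_i<1$ and $f_j>0$, one can shift an $\epsilon$ amount of size from $j$ to $i$ and strictly improve, so some optimal solution has the greedy ``saturate by ratio'' structure (with a side remark that items of equal ratio are interchangeable); the running-time claim is the same sort+linear-scan observation in both proofs. You instead compare the greedy assignment $f$ directly against an \emph{arbitrary} feasible $g$ and certify $\sum_i v_i(f_i-g_i)\geq \frac{v_k}{s_k}\sum_i s_i(f_i-g_i)\geq 0$ using the threshold ratio $v_k/s_k$ of the one fractional item --- essentially a weak-duality/prefix argument. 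What your version buys is a one-shot global inequality that needs no perturbation-of-an-optimum step and handles ties and boundary cases (all items fit, or $k=1$) without separate discussion; what the paper's version buys is brevity, at the cost of leaving the tie-breaking and the passage from ``optimal solutions have this structure'' to ``greedy's output is optimal'' partly implicit. One small wording slip in yours: for $i=k$ the sign of $s_k d_k$ is not controlled, so the claim ``$s_i d_i\ge 0$ for $i\le k$'' should be stated only for $i<k$; this is harmless because at $i=k$ the ratio is replaced by itself, so the term is unchanged and the inequality $\sum_i \frac{v_i}{s_i}s_i d_i \ge \frac{v_k}{s_k}\sum_i s_i d_i$ still holds termwise.
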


It is easy to see that in any solution of
the greedy algorithm for knapsack, there is at most one item in the
knapsack which is broken into a smaller piece. Therefore, one can turn
any solution of the fractional knapsack problem into a solution of the
knapsack problem by removing the only item from the knapsack with $f_i
< 1$ (if any). If all the values are bounded by $\vmax $, this hurts
the solution by at most an additive factor of $\vmax $. Moreover, the
solution of the fractional knapsack problem is always no less than the
solution of the integral knapsack problem. Thus, any solution for the
fractional knapsack problem can be turned into a solution for the
knapsack problem with an additive error of at most $\vmax $.

Based on a similar idea, we define the fractional knapsack convolution of two vectors as follows:
\begin{definition}
    Let $a$ and $b$ be two vectors corresponding to two knapsack problems $\k_a$ and $\k_b$ with knapsack sizes $t_a$ and $t_b$. For a real value $t$, we define the fractional knapsack convolution of $a$ and $b$ with respect to $t$ as the solution of the knapsack problem with knapsack size $t$ and the union of items  of $\k_a$ and $\k_b$ subject to the following two additional constraints:
    \begin{itemize}
        \item The total size of the items of $\k_a$ in the solution is bounded by $t_a$.
        \item The total size of the items of $\k_b$ in the solution is bounded by $t_b$.
    \end{itemize}
\end{definition}
One can modify the greedy algorithm to compute the solution of the fractional knapsack convolution as well. The only difference is that once the total size of the items of either knapsack instances in the solution reaches the size of that knapsack we ignore the rest of the items from that knapsack. A similar argument to what we stated for Observation~\ref{observation:knapsack1} proves the correctness of this algorithm.

\begin{algorithm}
    \KwData{$t$, $a$, $b$, and two knapsack instances $\k_a$ and $\k_b$ corresponding to $a$ and $b$.}
    \KwResult{The solution of fractional knapsack convolution of $a$ and $b$ with respect to $t$}
    
    Let \textsf{items} be a sequence of size $n$ containing all items of $\k_a$ and $\k_b$\;
    Sort the items of \textsf{items} according to $v_i / s_i$ in non-increasing order.
    
    $Answer \leftarrow 0$\;
    \For{$i \in [1,n]$}{
        \If {$(s_i, v_i)$ belongs to $\k_a$}{
            $cut \leftarrow \min\{s_i, t, t_a\}$\;
            $Answer \leftarrow Answer + v_i cut/s_i$\;
            $t \leftarrow t - cut$\;
            $t_a \leftarrow t_a - cut$\;
        }\Else{
            $cut \leftarrow \min\{s_i, t, t_b\}$\;
            $Answer \leftarrow Answer + v_i cut/s_i$\;
            $t \leftarrow t - cut$\;
            $t_b \leftarrow t_b - cut$\;
        }
    }
    
    \textbf{Return } $Answer$\;        
    \caption{\textsf{GreedyAlgorithmForFractionalKnapsackConvolution}($a,b,\k_a,\k_b$)}\label{alg:greedyalgorithmforfractionalknapsackconvolution}
\end{algorithm}

\begin{observation}\label{observation:knapsack2}
    Algorithm~\ref{alg:greedyalgorithmforfractionalknapsackconvolution} solves the fractional knapsack convolution in time $O(n \log n)$.
\end{observation}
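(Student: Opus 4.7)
The plan has two components: a short running-time analysis and a correctness argument by exchange. Sorting the union of items by $v_i/s_i$ in non-increasing order takes $O(n\log n)$, and the single pass that follows does $O(1)$ work per item (one minimum over three quantities, one multiplication, and a few subtractions), so the total is $O(n\log n)$.

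For correctness, I would fix any optimal fractional solution $O$ with coefficients $f_i^O \in [0,1]$ and show by induction on $k \in \{0, 1, \ldots, n\}$ that $O$ can be modified, without decreasing its value or violating any constraint, so that $f_i^O = f_i^G$ for all $i \leq k$, where $G$ denotes the greedy solution produced by Algorithm~\ref{alg:greedyalgorithmforfractionalknapsackconvolution}. Applying this up to $k = n$ yields a feasible solution equal to $G$ with value at least that of $O$, which gives optimality of $G$.

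For the inductive step, assume $O$ already agrees with $G$ on items $1, \ldots, k-1$, let $c \in \{a, b\}$ be the class of item $k$, and let $t_{\mathrm{rem}}$ and $t_{c,\mathrm{rem}}$ be the residual global and class-$c$ budgets that greedy sees when processing item $k$; by the inductive hypothesis these are also $O$'s residual budgets on items with index $\geq k$. Feasibility of $O$ forces $f_k^O \leq \min\{1, t_{\mathrm{rem}}/s_k, t_{c,\mathrm{rem}}/s_k\} = f_k^G$, so only the case $f_k^O < f_k^G$ is nontrivial. Setting $\delta := f_k^G - f_k^O$, I raise $f_k^O$ to $f_k^G$ and compensate by decreasing $f_j^O$ on some items $j > k$. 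Since greedy processes items in non-increasing ratio order, every donor $j > k$ satisfies $v_j/s_j \leq v_k/s_k$, so the swap changes the objective by a non-negative amount; what remains is to show that the donor mass is always available.

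The only obstacle worth flagging is the class-$c$ bookkeeping: mass removed for the class-$c$ deficit must come from class-$c$ items $j > k$ only, whereas mass removed for the global deficit may come from either class. Writing $T$ and $T_c$ for the total and class-$c$ masses of $O$ on items with index $\geq k$, the two deficits are $D_g := (T + \delta s_k - t_{\mathrm{rem}})^+$ and $D_c := (T_c + \delta s_k - t_{c,\mathrm{rem}})^+$. I would take $D_c$ out of class-$c$ donors (feasible because $D_c \leq T_c - f_k^O s_k$, using $f_k^G s_k \leq t_{c,\mathrm{rem}}$) and the remaining global deficit $\max\{0, D_g - D_c\}$ out of whichever donors are available (feasible because $D_g \leq T - f_k^O s_k$, using $f_k^G s_k \leq t_{\mathrm{rem}}$, and because any non-class-$c$ donor mass $T - T_c$ can cover $D_g - D_c$ in the remaining case). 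This completes the inductive step and hence the proof.
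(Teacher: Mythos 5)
Your proof is correct and takes essentially the same route as the paper, which disposes of this observation in two lines by appealing to the same greedy-exchange reasoning used for Observation~\ref{observation:knapsack1}; your induction just makes the per-class budget bookkeeping explicit. One parenthetical is slightly off --- when $t_{c,\mathrm{rem}} > t_{\mathrm{rem}}$ the non-class-$c$ donor mass $T - T_c$ need not cover $D_g - D_c$ --- but this is harmless, since your capacity bounds $D_c \leq T_c - f_k^O s_k$ and $D_g \leq T - f_k^O s_k$ already guarantee that the leftover global deficit can be drawn from class-$c$ donors (which your phrase ``whichever donors are available'' permits), and the total removal is at most $\delta s_k$, so the objective does not decrease.
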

Again, one can observe that in any solution of the greedy algorithm for fractional knapsack convolution, there are at most two items that are fractionally included in the solution (at most one for each knapsack instance). Thus, we can get a solution with an additive error of at most $2\vmax $ for the knapsack convolution problem from the solution of the fractional knapsack convolution.

We explore several properties of the fractional solutions for the knapsack problems and the knapsack convolution and based on them, we present an algorithm to compute an uncertain solution for the knapsack convolution within an error of $O(\vmax )$.
Define $a':[0,t_a] \rightarrow \mathbb{R}$ and $b':[0,t_b] \rightarrow \mathbb{R}$  as the solutions of the fractional knapsack problems for $\k_a$ and $\k_b$, respectively. Therefore, for any real value $x$ in the domain of the functions, $a'(x)$ and $b'(x)$ denote the solution of each fractional knapsack problem for knapsack size $x$. Moreover, we define a function $c:[0,t_a+t_b]\rightarrow \mathbb{R}$ where $c'(x)$ is the solution of the fractional knapsack convolution of $a$ and $b$ with respect to $x$. Note that for all $a'$, $b'$, and $c'$, parameter $x$ may be a real value. The following observations follow from the greedy solutions for $a'$, $b'$, and $c'$.

\begin{observation}\label{observation:fas}
    There exist non-decreasing functions $\fa:[0,t_a+t_b]\rightarrow [0,t_a]$ and $\fb:[0,t_a+t_b]\rightarrow [0,t_b]$ such that $c'(x) = a'(\fa(x)) + b'(\fb(x))$.
\end{observation}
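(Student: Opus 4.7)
The plan is to read $\fa$ and $\fb$ directly off the greedy procedure (Algorithm~\ref{alg:greedyalgorithmforfractionalknapsackconvolution}). For a real parameter $x \in [0, t_a+t_b]$, let $\fa(x)$ (resp.\ $\fb(x)$) denote the total size of items from $\k_a$ (resp.\ $\k_b$) placed in the knapsack when the algorithm is run with capacity $x$. The clamping by $t_a$ and $t_b$ inside the \textsf{cut} computations forces $\fa(x) \in [0, t_a]$ and $\fb(x) \in [0, t_b]$, so the ranges are as claimed. Monotonicity is essentially by inspection: the algorithm processes items in a single fixed global order of $v_i/s_i$, so enlarging $x$ by an infinitesimal $dx$ can only extend the fraction of the currently active item that is packed (or, once that item is exhausted, advance to the next one); no item is ever removed. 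Whichever side the active item belongs to has its $\fa$ or $\fb$ value increase, while the other side stays fixed, giving non-decreasing $\fa$ and $\fb$.

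For the identity $c'(x) = a'(\fa(x)) + b'(\fb(x))$, the key claim I would prove is that, when restricted to items of $\k_a$, Algorithm~\ref{alg:greedyalgorithmforfractionalknapsackconvolution} at parameter $x$ makes exactly the same packing decisions as the single-side greedy for the fractional $\k_a$ problem at capacity $\fa(x)$. Both scan the $\k_a$-items in the same non-increasing order of $v_i/s_i$ --- the merged sort, restricted to one side, is precisely the single-side sort --- and both fully include each item until the accumulated $\k_a$-size would exceed $\fa(x)$, at which point the last item is broken to achieve $\k_a$-size exactly $\fa(x)$. Hence the value contributed by $\k_a$-items to $c'(x)$ equals $a'(\fa(x))$; by the symmetric argument, the $\k_b$-contribution equals $b'(\fb(x))$, and summing gives the claim.

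The one subtlety to be careful about will be ensuring that interleaving with $\k_b$-items in the merged sort does not perturb which $\k_a$-items (or how much of the final fractional $\k_a$-item) end up in the knapsack. This holds because the merge preserves the relative order within each side, and the only events that can stop a $\k_a$-item short in the merged run --- hitting the global capacity $x$ or hitting the side cap $t_a$ --- coincide by definition of $\fa(x)$ with the single-side stopping condition for budget $\fa(x)$. Assuming a consistent tie-breaking rule for equal $v_i/s_i$ in both algorithms, the two $\k_a$-restricted executions produce identical packings, and the observation follows.
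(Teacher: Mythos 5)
Your proposal is correct and follows essentially the same route as the paper's proof: define $\fa(x)$ and $\fb(x)$ as the per-side totals packed by the merged greedy at capacity $x$, observe that its restriction to either side coincides with the single-side greedy (hence contributes $a'(\fa(x))$ resp.\ $b'(\fb(x))$), and get monotonicity from the fact that increasing $x$ only extends the packing. You are in fact somewhat more explicit than the paper's brief argument about why the stopping conditions (global capacity versus side cap) make the side-restricted executions identical, which is a welcome level of care but not a different approach.
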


Since in Algorithm~\ref{alg:greedyalgorithmforfractionalknapsackconvolution} we put the items greedily in the knapsack, for every $0 \leq x \leq t_a$, there exists a $0 \leq y \leq t_a + t_b$ such that $\fa(y) = x$. Similarly, for every $0 \leq x \leq t_b$, there exists a $0 \leq y \leq t_a+t_b$ such that $\fb(y) = x$. We define $\fa^{-1}(x)$ as the smallest $y$ such that $\fa(y) = x$. Moreover, $\fb^{-1}(x)$ is equal to the smallest $y$ such that $\fb(y) = x$.

\begin{observation}\label{observation:inc}
    For an $0 \leq x \leq t_a$, $y$, and $y'$ such that $0 \leq y < y'  \leq \fa^{-1}(x)-x$ we have $c'(x+y) - a'(x)+b'(y) \geq c'(x+y') - a'(x) - b'(y')$.
\end{observation}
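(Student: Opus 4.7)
The plan is to prove the observation by analyzing the slack function $g(y) := c'(x+y) - a'(x) - b'(y)$ and showing it is non-increasing on $[0,\fa^{-1}(x)-x]$; this immediately yields the desired monotonicity $g(y) \geq g(y')$ for $0 \leq y < y' \leq \fa^{-1}(x) - x$. (The statement as written has what appears to be a sign typo; I take the intended inequality to be the slack-monotone form $c'(x+y) - a'(x) - b'(y) \geq c'(x+y') - a'(x) - b'(y')$, which is the form that is actually useful downstream.)

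The starting point is the concave, piecewise-linear structure of $a'$, $b'$, $c'$. Because each is produced by the greedy algorithm that processes items in non-increasing order of $v_i/s_i$ and splits only the last item, each has a right-derivative at every point equal to the $v/s$-ratio of the item being inserted there. I would also use Observation~\ref{observation:fas} to write $c'(z)=a'(\fa(z))+b'(\fb(z))$ with $\fa(z)+\fb(z)=z$, and exploit that for every $z \leq \fa^{-1}(x)$ the minimality in the definition of $\fa^{-1}$ forces $\fa(z) \leq x$, hence $\fb(z) \geq z-x$. The two boundary values of $g$ are then quick to check: at $y=\fa^{-1}(x)-x$ the greedy for $c'$ has inserted exactly $x$ worth of $\k_a$-items together with $\fa^{-1}(x)-x$ of $\k_b$-items, so $c'(\fa^{-1}(x))=a'(x)+b'(\fa^{-1}(x)-x)$ and hence $g(\fa^{-1}(x)-x)=0$; meanwhile $g(0)=c'(x)-a'(x) \geq 0$, because the fractional knapsack convolution is free to use only $\k_a$-items and realize $a'(x)+b'(0)=a'(x)$.

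The main step is the monotonicity of $g$. Let $r(y)$ denote the right-derivative of $z \mapsto c'(z)$ at $z=x+y$ and $\rho(y)$ the right-derivative of $b'$ at $y$. I claim $r(y) \leq \rho(y)$ for $0 \leq y < \fa^{-1}(x) - x$. Since $\fb(x+y) \geq y$, by the time $c'$-greedy has reached total size $x+y$ it has already inserted every $\k_b$-item of $b'$-rank at most $y$, in particular the item whose ratio equals $\rho(y)$. Because $c'$-greedy inserts items in global decreasing $v/s$ order, every item it inserts \emph{after} size $x+y$---whether from $\k_a$ or $\k_b$---has ratio at most $\rho(y)$; in particular $r(y) \leq \rho(y)$. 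Integrating yields $c'(x+y')-c'(x+y) \leq b'(y')-b'(y)$, which rearranges to $g(y) \geq g(y')$.

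The only real subtlety is handling the breakpoints where the slope of a piecewise-linear function jumps; this is addressed by working consistently with one-sided derivatives (or equivalently with the sorted item sequence), so the slope comparison integrates across breakpoints without issue. The exchange argument that $\rho(y) \geq r(y)$---which hinges entirely on the inclusion $\fb(x+y) \geq y$ and the global $v/s$-sorted order used by all three greedy algorithms---is the crux of the proof, and is exactly where the constraint $y' \leq \fa^{-1}(x) - x$ is used.
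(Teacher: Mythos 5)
Your proof is correct and follows essentially the same route as the paper's: both hinge on the fact that $y' \leq \fa^{-1}(x)-x$ forces $\fa(x+u) \leq x$ and hence $\fb(x+u) \geq u$, so the marginal ratio the convolution greedy collects beyond size $x+y$ is dominated by the marginal ratio of the $b'$-greedy, giving $c'(x+y')-c'(x+y) \leq b'(y')-b'(y)$. The only difference is presentational---you compare one-sided slopes and integrate, whereas the paper chains over steps in which the greedy solutions change by at most one item---and you correctly read the statement's ``$+\,b'(y)$'' as the sign typo it is.
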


\begin{observation}\label{observation:dec}
    For an $0 \leq x \leq t_a$, $y$, and $y'$ such that $\fa^{-1}(x)-x \leq y < y'  \leq t_b$ we have $c'(x+y) - a'(x)+b'(y) \leq c'(x+y') - a'(x) - b'(y')$.
\end{observation}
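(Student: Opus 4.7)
My approach is to exploit the greedy structure behind the fractional functions $a'$, $b'$, $c'$, together with a concavity property implicit in Observation~\ref{observation:knapsack1}. The plan is to lower bound the marginal gain of $c'$ between sizes $x+y$ and $x+y'$ by a corresponding marginal gain of $b'$, using a feasibility witness.

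First I would rewrite the target inequality in the cleaner additive form
\[ c'(x+y') - c'(x+y) \;\geq\; b'(y') - b'(y), \]
which is equivalent to the stated bound after cancelling $a'(x)$. By Observation~\ref{observation:fas}, $c'(x+y) = a'(\fa(x+y)) + b'(\fb(x+y))$, and in the non-degenerate regime where the greedy actually fills the knapsack one has $\fa(x+y) + \fb(x+y) = x+y$. Because $y \geq \fa^{-1}(x)-x$ and $\fa$ is non-decreasing, $\fa(x+y) \geq x$, and therefore $\fb(x+y) \leq y$. This is the structural heart of the argument: at knapsack size $x+y$ the $c'$-greedy has consumed at most the top $y$ units of $b$-mass, so the $b$-items that are ``still available'' to it beyond size $x+y$ are exactly the tail of $b$'s own greedy ordering.

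Then I would lower bound the marginal $c'$-gain by a suboptimal expansion witness: starting from size $x+y$, extend to $x+y'$ by filling the extra $y'-y$ units solely with the next $b$-items in ratio order. Since the $c'$-greedy is optimal over all feasible expansions,
\[ c'(x+y') - c'(x+y) \;\geq\; b'\!\bigl(\fb(x+y)+(y'-y)\bigr) - b'\!\bigl(\fb(x+y)\bigr). \]
Finally, the function $b'$ is concave in the knapsack size, because Observation~\ref{observation:knapsack1} constructs it by processing $b$-items in non-increasing order of $v_i/s_i$, which yields a piecewise-linear function with non-increasing slopes. Since $\fb(x+y) \leq y$ and $\fb(x+y)+(y'-y) \leq y'$, concavity gives
\[ b'\!\bigl(\fb(x+y)+(y'-y)\bigr) - b'\!\bigl(\fb(x+y)\bigr) \;\geq\; b'(y') - b'(y), \]
and chaining the two inequalities delivers the claim.

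The main obstacle is not any single sharp estimate but the fractional-greedy bookkeeping: I have to justify carefully that the ``next $b$-items in ratio order'' used by the feasibility witness are indeed those just beyond the prefix of $b$-items already selected by the $c'$-greedy (which is where the bound $\fb(x+y) \leq y$ is essential), and that $b'$ is genuinely concave on its whole domain. A small amount of care is also needed for degenerate cases where the greedy does not fill the knapsack (total item size too small, or a side-capacity $t_a$, $t_b$ saturates); in such cases the relevant functions become constant and the inequality collapses to $0 \geq 0$, so no new idea is required.
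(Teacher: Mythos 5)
Your proof is correct, but it takes a different route from the paper's. The paper argues locally: it assumes w.l.o.g.\ that $y$ and $y'$ are close enough that the solutions of $c'(x+y),c'(x+y')$ (and of $b'(y),b'(y')$) differ in at most one item, notes that $\fa(x+y')\geq x$ forces $\fb(x+y')\leq y'$, and then compares the two marginal slopes directly -- the last item $(s_i,w_i)$ of the $c'$-greedy has ratio at least that of the $b$-item being added between $y$ and $y'$, so $b'(y')-b'(y)\leq (y'-y)\,w_i/s_i = c'(x+y')-c'(x+y)$ -- with the general case following by chaining over intermediate points. You instead argue globally: from $\fa(x+y)\geq x$ (hence $\fb(x+y)\leq y$) you build a feasible extension of the size-$(x+y)$ solution by the next $y'-y$ units of $b$-mass in $b$'s ratio order, invoke optimality of the fractional convolution value (Observation~\ref{observation:knapsack2}) to get $c'(x+y')-c'(x+y)\geq b'(\fb(x+y)+(y'-y))-b'(\fb(x+y))$, and then use concavity of $b'$ together with $\fb(x+y)\leq y$ to dominate $b'(y')-b'(y)$. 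Both arguments hinge on the same structural fact ($\fb$ of the combined greedy lags behind the $b$-budget once $\fa$ has passed $x$), but yours avoids the ``differ by at most one item'' discretization and the implicit chaining step, at the cost of having to certify the witness's feasibility (the $a$-side is untouched, the added $b$-mass stays below $y'\leq t_b$, and the $c'$-greedy's $b$-part is a prefix of $b$'s ratio order, which is exactly Observation~\ref{observation:fas}) and the concavity of $b'$; the degenerate cases where the greedy runs out of items indeed collapse to $0\geq 0$ as you say, since then $\fb(x+y)$ already exhausts the usable $b$-mass and both increments vanish.
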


Observations~\ref{observation:inc} and~\ref{observation:dec} show that for any $0 \leq x \leq t_a$, if we define $g_x(y) = c'(x+y) - a'(x) - b'(y)$ then $g_x$ is non-decreasing in the interval $[0, \fa^{-1}(x)-x]$ and non-increasing in the interval $[\fa^{-1}(x)-x, t_b]$. Now, for every integer $i \in [0, t_a]$ define $\alpha'_i$ to be the smallest number in range $[0, \fa^{-1}(x)-x]$ such that $g_i(\alpha'_i) \leq 2\vmax $. Similarly, define $\beta'_i$ to be the largest number in range $[\fa^{-1}(x)-x, t_b]$ such that $g_i(\beta'_i) \leq 2\vmax $. It follows from Observations~\ref{observation:inc} and~\ref{observation:dec} that $g_i(x) \leq 2\vmax $ holds in the interval $[\alpha'_i, \beta'_i]$ and $g_i(x) > 2\vmax $ holds for any $x$ outside this range. Moreover, Observations~\ref{observation:mon1} and~\ref{observation:mon2} imply that $[\alpha'_i, \beta'_i]$'s are monotonic.

\begin{observation}\label{observation:mon1}
    Let $x, x'$, and $y$ be three real values such that $0 \leq x < x' \leq t_a$ and $0 \leq y \leq \fa^{-1}(x)-x$. Then $c'(x+y) - a'(x) - b'(y) \leq c'(x'+y) - a'(x') - b'(y)$. 
\end{observation}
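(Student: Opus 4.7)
The plan is to exhibit a feasible solution for the fractional knapsack convolution at total size $x'+y$ whose value is at least $c'(x+y) + a'(x') - a'(x)$. Since $c'(x'+y)$ is by definition the maximum value over all feasible solutions, this immediately gives $c'(x'+y) - c'(x+y) \geq a'(x') - a'(x)$, which is equivalent to the claimed inequality after canceling the common $b'(y)$ from both sides.

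Let $p := \fa(x+y)$ and $q := \fb(x+y)$, so that by Observation~\ref{observation:fas} we have $c'(x+y) = a'(p) + b'(q)$, and (since the greedy saturates the budget whenever items remain) $p+q = x+y$. Because $\fa$ is non-decreasing and $x+y \leq \fa^{-1}(x)$, we obtain $p = \fa(x+y) \leq \fa(\fa^{-1}(x)) = x$. I would then consider the candidate solution for $c'(x'+y)$ that takes the top (in $v/s$-ratio) $\k_a$-items filling size $p + (x'-x)$ together with the top $\k_b$-items filling size $q$. Its value is $a'(p+(x'-x)) + b'(q)$, and feasibility is straightforward bookkeeping: the $\k_a$-budget used is $p + (x'-x) \leq x' \leq t_a$, the $\k_b$-budget used is $q \leq t_b$, and the total size is exactly $(p+q)+(x'-x) = x'+y$.

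The remaining step, and the only point requiring actual work, is to show $a'(p + (x'-x)) - a'(p) \geq a'(x') - a'(x)$, from which the chain $c'(x'+y) \geq a'(p+(x'-x)) + b'(q) \geq a'(p) + b'(q) + a'(x') - a'(x) = c'(x+y) + a'(x') - a'(x)$ closes the proof. This inequality is exactly the concavity of $a'$ applied at the two starting points $p \leq x$ with common increment $x'-x$: enlarging the budget from a smaller starting point yields at least as much marginal value as from a larger starting point. Concavity of $a'$ itself follows directly from the greedy characterization of the fractional knapsack solution; namely, $a'$ is piecewise linear and its slope at size $s$ equals the $v/s$-ratio of the item currently being fractionally inserted by the greedy, and these ratios are processed in non-increasing order. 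I expect this concavity (diminishing-returns) step to be the crux; the feasibility accounting and the identification of the candidate solution are essentially free once one observes that the hypothesis $y \leq \fa^{-1}(x) - x$ buys the slack $p \leq x$.
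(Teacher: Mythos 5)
Your proof is correct, but it takes a genuinely different route from the paper's. The paper dispatches Observation~\ref{observation:mon1} in one line: it observes that $y \leq \fa^{-1}(x)-x$ forces $\fb^{-1}(y)-y \leq x < x'$ and then invokes Observation~\ref{observation:dec} with the roles of $a'$ and $b'$ (and of the two knapsacks) interchanged; Observation~\ref{observation:dec} itself is proved by a local perturbation argument comparing the value-to-size ratio of the single item by which neighboring greedy solutions differ. You instead argue directly: writing $c'(x+y)=a'(p)+b'(q)$ with $p=\fa(x+y)$ and $q=\fb(x+y)$ (Observation~\ref{observation:fas}), you use $x+y\leq \fa^{-1}(x)$ and the monotonicity of $\fa$ to get $p\leq x$, transplant the extra budget $x'-x$ onto the $\k_a$ side to exhibit a feasible solution of value $a'(p+(x'-x))+b'(q)$ for total size $x'+y$, and close with the concavity (diminishing returns) of the fractional knapsack value function $a'$, which indeed follows from the greedy characterization (piecewise linear with non-increasing slopes). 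Both arguments are sound; the paper's buys brevity by reusing Observation~\ref{observation:dec} and the symmetry between the two knapsacks, while yours is self-contained, replaces the chain of local exchange arguments by the single global fact that $a'$ is concave, and sidesteps the implication $y\leq\fa^{-1}(x)-x \Rightarrow \fb^{-1}(y)-y\leq x$, which the paper asserts without proof. One cosmetic remark: the total size of your candidate is exactly $x'+y$ only if the combined greedy has not run out of items at size $x+y$; in general $p+q\leq x+y$, but feasibility only requires total size at most $x'+y$, so nothing in your argument breaks.
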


\begin{observation}\label{observation:mon2}
    Let $x, x'$, and $y$ be three real values such that $0 \leq x < x' \leq t_a$ and $0 \leq \fa^{-1}(x')-x' \leq y$. Then $c'(x+y) - a'(x) - b'(y) \geq c'(x'+y) - a'(x') - b'(y)$. 
\end{observation}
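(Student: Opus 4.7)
\textbf{Proof plan for Observation~\ref{observation:mon2}.}

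The plan is to combine two ingredients. First, by Observation~\ref{observation:fas} the value $c'(t)$ equals the maximum of $a'(\alpha) + b'(\beta)$ over splits $\alpha + \beta = t$ with $0 \le \alpha \le t_a$ and $0 \le \beta \le t_b$, attained at $(\fa(t),\fb(t))$. Second, $a'$ is a concave non-decreasing function on $[0,t_a]$, because the greedy algorithm of Observation~\ref{observation:knapsack1} processes the items of $\k_a$ in non-increasing order of $v_i/s_i$, so the marginal value per unit of capacity is non-increasing. Since the $b'(y)$ term cancels on the two sides of the desired inequality, it suffices to prove $c'(x'+y) - c'(x+y) \le a'(x') - a'(x)$.

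Next, I would translate the hypothesis $y \ge \fa^{-1}(x') - x'$ into a statement about the optimal split at capacity $x'+y$. Because $\fa$ is non-decreasing, $x'+y \ge \fa^{-1}(x')$ gives $\alpha^{*} := \fa(x'+y) \ge x'$. Writing $\beta^{*} := \fb(x'+y)$, Observation~\ref{observation:fas} yields $c'(x'+y) = a'(\alpha^{*}) + b'(\beta^{*})$. I then consider the candidate split $(\alpha^{*} - (x'-x),\ \beta^{*})$ at total capacity $x+y$: the two components sum to $x+y$, the first lies in $[x,\alpha^{*}] \subseteq [0,t_a]$ (since $\alpha^{*} \ge x' \ge x$ and $\alpha^{*}\le t_a$), and the second is unchanged, so the split is feasible. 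Feasibility plus the max-formulation of $c'(x+y)$ give $c'(x+y) \ge a'(\alpha^{*} - (x'-x)) + b'(\beta^{*})$, and subtracting this from the equality for $c'(x'+y)$ yields $c'(x'+y) - c'(x+y) \le a'(\alpha^{*}) - a'(\alpha^{*} - (x'-x))$.

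Finally, I would invoke concavity of $a'$ to bound the right-hand side. The intervals $[x,x']$ and $[\alpha^{*} - (x'-x),\ \alpha^{*}]$ have the same length $x'-x$, and the second lies weakly to the right of the first because $\alpha^{*} \ge x'$. Concavity of $a'$ then gives $a'(\alpha^{*}) - a'(\alpha^{*} - (x'-x)) \le a'(x') - a'(x)$, which, chained with the previous inequality and with $b'(y)$ reintroduced, is the claim. The only mild obstacle is bookkeeping the feasibility of the shifted split and checking that the two intervals of $a'$ we compare are correctly ordered under the hypothesis; once that framing is in place, the concavity step is immediate and the proof mirrors the one of Observation~\ref{observation:mon1} with the roles of ``before'' and ``after'' $\fa^{-1}(x')$ swapped.
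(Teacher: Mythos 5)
Your proof is correct, but it takes a genuinely different route from the paper. The paper disposes of Observation~\ref{observation:mon2} in one line by swapping the roles of $a'$ and $b'$ (arguing the hypothesis gives $x<x'\leq \fb^{-1}(y)-y$) and invoking the monotonicity observations for $g$ in the $y$-direction (Observations~\ref{observation:inc}/\ref{observation:dec}), which are themselves proved by a local exchange argument: compare greedy solutions that differ in at most one item and use the $v_i/s_i$ ordering of Algorithm~\ref{alg:greedyalgorithmforfractionalknapsackconvolution}. You instead argue globally: $c'(x+y)$ dominates every feasible split $a'(\alpha)+b'(\beta)$ with $\alpha+\beta=x+y$ (superadditivity, immediate from the definition of the fractional convolution as an optimum), the hypothesis $y\geq \fa^{-1}(x')-x'$ together with monotonicity of $\fa$ places the optimal split at $x'+y$ at some $\alpha^{*}=\fa(x'+y)\geq x'$, and then shifting that split left by $x'-x$ and using concavity of the fractional knapsack value function $a'$ yields $c'(x'+y)-c'(x+y)\leq a'(x')-a'(x)$, which is the claim after cancelling $b'(y)$. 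Each approach has something to recommend it: the paper's reduction reuses the already-established exchange observations and keeps all four monotonicity facts in one family, whereas your argument is self-contained, avoids the slightly delicate ``differ in at most one item / $y,y'$ close enough'' discretization underlying Observations~\ref{observation:inc} and~\ref{observation:dec}, and also sidesteps the conversion of the hypothesis into a statement about $\fb^{-1}(y)-y$; the only implicit bookkeeping you need (as you note) is $y\leq t_b$ so that all quantities are in range, and the facts that $\fa$ is non-decreasing with $\fa(\fa^{-1}(x'))=x'$ and that $a'$ is concave, all of which are immediate from the greedy construction. The same template (with the shift applied on the $b'$ side) would prove Observation~\ref{observation:mon1} as well.
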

Notice that for every pair of integers $i$ and $j$ such that $\alpha'_i \leq j \leq \beta'_i$ we have $c'(i+j) - a'(i) - b'(j) \leq 2\vmax $. Recall that $a'$ and $b'$ are the solutions of the fractional knapsack problems and thus $a'(i) - a_i$ and $b'(j) - b_j$ are bounded by $\vmax $. Moreover, since $c'(i+j)$ is always at least as large as $c_{i+j}$, we have $c_{i+j} - a_i - b_j \leq 4\vmax $ for all $\alpha'_i \leq j \leq \beta'_i$. Furthermore, for every integer $j \in [0,t_b] \setminus [\alpha'_i, \beta'_i]$ we have $c'(i+j) - a'(i) - b'(j) > 2\vmax $. Similarly, one can argue that $c'(i+j) \leq c_{i+j}+2\vmax $, $a'(i) \geq a_i$, and $b'(j) \geq b_j$ and thus $c_{i+j} - a_i - b_j > 0$ which means that intervals $[\alpha'_i, \beta'_i]$ make an uncertain solution for $a \ttimes b$ within an error of $4\vmax $. To make the intervals integer, we set $\alpha_i = \lceil \alpha'_i \rceil$ and $\beta_i = \lfloor \beta'_i \rfloor$. Since $[\alpha_i, \beta_i]$ is also an uncertain solution within an error of $4\vmax $ we can compute $a \ttimes b$ in time $\tildorder(\vmax (|a| + |b|) + n)$.

\begin{theorem}\label{theorem:knapsackconvolution}
    Let $\k_a$ and $\k_b$ be two knapsack problems with knapsack sizes $t_a$ and $t_b$  and $n$ items in total. Moreover, let the item values in $\k_a$ and $\k_b$ be integer values bounded by $\vmax $ and $a$ and $b$ be the solutions of these knapsack problems. There exists an $\tildorder(\vmax (t_a + t_b)+n)$ time algorithm for computing $a \ttimes b$ using $a$, $b$, $\k_a$, and $\k_b$. 
\end{theorem}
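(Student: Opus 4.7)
The plan is to realize the uncertain-solution framework of Theorem~\ref{theorem:prediction} with $\emax = 4\vmax$, using the greedy fractional solutions as the predictor. The text preceding the theorem statement already assembles most of the needed ingredients, so the proof is mostly a matter of packaging them algorithmically and verifying running time.

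First I would compute the scalar values $a'(i)$ for every integer $i\in[0,t_a]$ and $b'(j)$ for every integer $j\in[0,t_b]$ by running the greedy fractional knapsack once on $\k_a$ and once on $\k_b$, recording the value as the breakpoint crosses each integer size. Sorting the items takes $O(n\log n)$ and the sweep costs $O(n + t_a + t_b)$, so this step is $O(n\log n + t_a + t_b)$. I would then run Algorithm~\ref{alg:greedyalgorithmforfractionalknapsackconvolution} on the merged item list to produce $c'(x)$ as a piecewise-linear function, again in $O(n\log n + t_a+t_b)$ time. At the same time, the greedy process yields, for each $i\in[0,t_a]$, the break-point $\fa^{-1}(i)-i$ appearing in Observations~\ref{observation:inc}--\ref{observation:dec}.

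Next I would compute the integer intervals $[\alpha_i,\beta_i]$ that will serve as the uncertain solution. By Observations~\ref{observation:inc} and~\ref{observation:dec}, the function $g_i(y)=c'(i+y)-a'(i)-b'(y)$ is unimodal on $[0,t_b]$: it is non-decreasing on $[0,\fa^{-1}(i)-i]$ and non-increasing on $[\fa^{-1}(i)-i,t_b]$. I would therefore set $\alpha'_i$ to the smallest $y$ on the ascending branch with $g_i(y)\le 2\vmax$, and $\beta'_i$ to the largest $y$ on the descending branch with $g_i(y)\le 2\vmax$, then round inward to integers $\alpha_i=\lceil\alpha'_i\rceil$, $\beta_i=\lfloor\beta'_i\rfloor$. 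Observations~\ref{observation:mon1} and~\ref{observation:mon2} imply that the sequences $\{\alpha_i\}$ and $\{\beta_i\}$ are non-decreasing in $i$, which is exactly the monotonicity condition demanded by Theorem~\ref{theorem:prediction}. I would compute these endpoints by a two-pointer sweep over $i$: since $\alpha_i$ and $\beta_i$ are monotone, the pointer advances $O(t_a+t_b)$ times total, and each advance costs $O(1)$ once $a',b',c'$ are tabulated.

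Then I would verify that $\{[\alpha_i,\beta_i]\}$ is an uncertain solution for $a\ttimes b$ with additive error $4\vmax$. For any integer $i\in[0,t_a]$ and $j\in[\alpha_i,\beta_i]$, the fractional inequality $c'(i+j)-a'(i)-b'(j)\le 2\vmax$ combined with $a'(i)\ge a_i$, $b'(j)\ge b_j$, and $c'(i+j)\le (a\ttimes b)_{i+j}+2\vmax$ (the fractional convolution is an upper bound on the integral one, plus at most $2\vmax$ from the two fractionally-cut items) gives $(a\ttimes b)_{i+j}-a_i-b_j\le 4\vmax$. Conversely, for any $j\notin[\alpha_i,\beta_i]$ the strict inequality $g_i(j)>2\vmax$ together with $a'(i)-a_i\le\vmax$ and $b'(j)-b_j\le\vmax$ shows $a_i+b_j<(a\ttimes b)_{i+j}$, so optimal witnesses must lie inside the intervals, which yields the covering condition of Theorem~\ref{theorem:prediction}. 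Feeding $a$, $b$, $\emax=4\vmax$, and the intervals $[\alpha_i,\beta_i]$ into Theorem~\ref{theorem:prediction} produces $a\ttimes b$ in time $\tildorder(\vmax(t_a+t_b))$, and adding the $O(n\log n + t_a+t_b)$ preprocessing yields the claimed $\tildorder(\vmax(t_a+t_b)+n)$ bound.

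The main obstacle, and the only place where care is needed beyond bookkeeping, is justifying that the integer rounding $[\alpha'_i,\beta'_i]\mapsto[\alpha_i,\beta_i]$ preserves both the monotonicity hypothesis and the covering property of Theorem~\ref{theorem:prediction}; the additive slack of $2\vmax$ around the threshold $2\vmax$ (giving the $4\vmax$ overall) is precisely what absorbs the rounding loss, so choosing $\emax = 4\vmax$ rather than $2\vmax$ in the invocation of Theorem~\ref{theorem:prediction} is essential.
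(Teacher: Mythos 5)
Your proposal follows essentially the same route as the paper's own proof: the greedy fractional solutions $a'$, $b'$, $c'$ serve as the predictor, Observations~\ref{observation:inc}--\ref{observation:mon2} give unimodality of $g_i$ and monotonicity of the thresholded intervals, the threshold $2\vmax$ with inward rounding yields an uncertain solution within error $4\vmax$, and Theorem~\ref{theorem:prediction} finishes in $\tildorder(\vmax(t_a+t_b)+n)$; the only (harmless) deviation is that you locate the $[\alpha_i,\beta_i]$ by a monotone two-pointer sweep where the paper binary-searches each $i$ in $O(\log t_b)$. One bookkeeping correction: the inequalities you cite for the two verification steps are swapped --- the bound $(a\ttimes b)_{i+j}-a_i-b_j\le 4\vmax$ inside an interval needs $c_{i+j}\le c'(i+j)$ together with $a_i\ge a'(i)-\vmax$ and $b_j\ge b'(j)-\vmax$, while ruling out witnesses outside the intervals needs $c'(i+j)\le c_{i+j}+2\vmax$ together with $a'(i)\ge a_i$ and $b'(j)\ge b_j$; with the facts assigned this way (all of which you already have at hand) both conditions follow exactly as in the paper, so this is a fixable slip rather than a gap.
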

\begin{proof}
        Let $t = t_a + t_b$ be the largest index of $a \ttimes b$. As shown earlier, intervals $[\alpha_i, \beta_i]$ formulated above make an uncertain solution for $a \ttimes b$ within an error of $4\vmax $. Thus, it only suffices to compute these intervals and then using Theorem~\ref{theorem:prediction} we can compute $a \ttimes b$ in time $\tildorder(\vmax (|a|+|b|))$. In order to determine the intervals, we first compute three arrays $a'$, $b'$, and $c'$ with ranges $[0, t_a]$, $[0, t_b]$, and $[0, t_a + t_b]$, respectively. Then for every $i$ in range $[0, t_a]$ we compute $a'_i$ to be the solution to the fractional knapsack problem of $\k_a$ with knapsack size $i$. This can be done in time $O(n \log n + t)$, since we can use the greedy algorithm to determine these values. Similarly, we compute $b'_i$ equal to the solution to the fractional knapsack problem for $\k_b$ and $c_i$ equal to the solution of the fractional knapsack convolution for $a \ttimes b$. This step of the algorithm takes a total time of $O(n \log n + t) = \tildorder(n + t)$.
    
    Along with the construction of array $c'$, we also compute two arrays $\fa$ and $\fb$ in time $O(t)$ where $c'_i = a'_{{\fa}_i} + b'_{{\fb}_i}$. More precisely, every time we compute $c'_i$ for some integer $i$ we also keep track of the total size of the solution corresponding to each knapsack and store these values in arrays $\fa$ and $\fb$. Also one can compute an array $\fa^{-1}$ from $\fa$ in time $O(t)$. Next, we iterate over all integers $i$ in range $[0,t_a]$ and for each $i$ compute $\alpha_i$ and $\beta_i$ in time $O(\log n)$. Recall that $\alpha_i$ ($\beta_i$) is the  smallest (largest) integer $j$ in range $[0, {\fa^{-1}}_i - i]$ ($[{\fa^{-1}}_i - i, t_b]$) such that $c'_{i+j} - a'_i - b'_{j} \leq 2\vmax $. Moreover, $c'_{i+j} - a'_i - b'_{j}$ is monotonic in both ranges $[0, {\fa^{-1}}_i - i]$ and $[{\fa^{-1}}_i - i, t_b]$ and hence $\alpha_i$ and $\beta_i$ can be computed in time $O(\log t_b)$ for every $i$. This makes a total running time of $O(t_a \log t_b) = \tildorder(t)$. Finally, since intervals $[\alpha_i, \beta_i]$ make an uncertain solution for $a \ttimes b$ within an error of $4\vmax $, we can compute $a \ttimes b$ in time $\tildorder(\vmax (t_a + t_b)+n)$ (Theorem~\ref{theorem:prediction}).
\end{proof}

Based on the proof of Theorem \ref{theorem:knapsackconvolution} one can imply that even if the knapsack solutions are subject to an additive error of $\vmax$, the convolution can be still computed in time $\tildorder(\vmax (t_a + t_b)+n)$. In Appendix~\ref{appendix:knapsack} we show that Theorem~\ref{theorem:knapsackconvolution} yields a solution for the 0/1 knapsack problem in time $\tildorder(\vmax t + n)$. The algorithm follows from the reduction of~\cite{cygan2017problems} from 0/1 knapsack to knapsack convolution.

\begin{theorem}[a corollary of Theorem~\ref{theorem:knapsackconvolution} and the reduction of~\cite{cygan2017problems} from 0/1 knapsack to $(\max,+)$ convolution]\label{theorem:knapsack}
    The 0/1 knapsack problem can be solved in time $O(\vmax t+n)$ when the item values are integer numbers in range $[0,\vmax ]$.
\end{theorem}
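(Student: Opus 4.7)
The plan is to combine Theorem~\ref{theorem:knapsackconvolution} with the divide-and-conquer reduction from 0/1 knapsack to knapsack convolution of~\cite{cygan2017problems}. First I would partition the $n$ items into two halves $I_1, I_2$ of size roughly $n/2$ each, recursively solve the two subinstances (each with capacity $t$) to obtain their knapsack solution vectors $a$ and $b$, and then merge via one call to Theorem~\ref{theorem:knapsackconvolution}. The solution for the full instance is the length-$(t+1)$ prefix of $a \ttimes b$: any packing of capacity $s \leq t$ decomposes as an optimal packing of $I_1$ of some size $s_1$ with an optimal packing of $I_2$ of size $s - s_1$, so the optimum equals $\max_{s_1 + s_2 = s}(a_{s_1} + b_{s_2}) = (a \ttimes b)_s$.

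By Theorem~\ref{theorem:knapsackconvolution}, each merge runs in $\tildorder(\vmax(t_a+t_b)+n_I)$ time, where $n_I$ is the number of items involved in that merge. Naively summing across the $\Theta(n)$ merges in the recursion tree would give $\tildorder(n \vmax t)$, which is too large. To obtain the claimed $\tildorder(\vmax t+n)$ bound, each intermediate solution vector will be truncated to length $\min(t,\sum_{i \in I} s_i) + 1$, so that deep subproblems contribute only in proportion to the total item size they actually cover. A careful level-by-level analysis then shows that, after the truncation, the cumulative cost of all merges across all $O(\log n)$ levels is $\tildorder(\vmax t + n)$.

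The main obstacle is precisely this amortization. In the worst case, when $S := \sum_i s_i$ greatly exceeds $t$, a direct level-by-level summation of input vector sizes can be as bad as $\tildorder(\vmax S)$, not $\tildorder(\vmax t)$. The reduction of~\cite{cygan2017problems} handles this by preprocessing the instance (for example, by using the greedy fractional-knapsack algorithm to restrict attention to a small item window around the LP pivot) and by choosing the item partition carefully so that the truncation is tight at each level of the recursion. With this in place, the recurrence $T(n,t) \leq 2T(n/2,t) + \tildorder(\vmax t + n)$ telescopes to $\tildorder(\vmax t + n)$; the full details appear in Appendix~\ref{appendix:knapsack}. Combined with the $O(n+t)$ input/output cost, this proves the claimed bound.
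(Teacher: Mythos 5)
There is a genuine gap, and it sits exactly where you flag it: the amortization. Your recurrence $T(n,t)\leq 2T(n/2,t)+\tildorder(\vmax t+n)$ does \emph{not} telescope to $\tildorder(\vmax t+n)$: at depth $k$ of the halving recursion there are $2^k$ subproblems, and as long as a subproblem's items have total size at least $t$ its merge still costs $\tildorder(\vmax t)$, so the level costs grow geometrically and the recursion solves to $\tildorder(n\vmax t)$ (plus $\tildorder(n)$). Your truncation to capacity $\min(t,\sum_{i\in I}s_i)$ only caps the cost of level $k$ by $\tildorder(\vmax\min(2^k t, S)+n)$ with $S=\sum_i s_i$, which you yourself note can be as large as $\tildorder(\vmax S)$ -- and $S$ can be as large as $nt$, so nothing close to $\tildorder(\vmax t+n)$ follows. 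The fix you invoke (preprocessing via the fractional greedy solution to restrict to an ``item window around the LP pivot'', plus a carefully chosen halving) is not the mechanism of the cited reduction, is not compatible with having to output the answer for \emph{all} capacities $0,\dots,t$ (which the convolution-based reduction computes anyway), and no argument is given that it repairs the recurrence.

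The paper's proof uses a structurally different reduction (following \cite{cygan2017problems}) precisely to avoid this blow-up: items are first grouped into $O(\log t)$ buckets in which sizes lie in a factor-two range $[r_1,r_2]$; within a bucket, any feasible solution uses at most $t/r_1$ items, so the items are distributed uniformly at random into $\lceil t/r_1\rceil$ groups, and w.h.p.\ each group needs to be solved only up to capacity $r_2\,\polylog(t)$ (this is where randomness and the w.h.p.\ guarantee enter). Each tiny group instance is solved by a further randomized splitting and bounded-value convolutions, and the group vectors of one bucket are merged in a balanced binary tree of knapsack convolutions whose \emph{total} vector length is $\tildorder(t)$ per level; finally the $O(\log t)$ bucket vectors, each of length $t+1$, are convolved together. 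Since Theorem~\ref{theorem:knapsackconvolution} is linear in $\vmax$ times the capacities involved, the total is $\tildorder(\vmax t+n)$. In short, the subproblem capacities are shrunk by size-bucketing plus random grouping, not by halving the item set; without that ingredient your decomposition cannot reach the claimed bound, so the proposal as written does not establish the theorem.
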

\newpage
\section{Computing $a^{\ttimes k}$ and Application to Unbounded Knapsack}\label{sec:power}
Throughout this section, any time we mention $a^{\ttimes k}$ we mean $\overbrace{ a \ttimes a \ttimes \ldots \ttimes a }^{k \text{ times}}$. In this section, we present another application of the prediction technique for computing the $k$'th power of a vector in the $(\max,+)$ setting. The classic algorithm for this problem runs in time $\tildorder(n^2)$ and thus far, there has not been any substantial improvement for this problem. We consider the case where the input values are integers in range $[0,\emax ]$, nonetheless, this result carries over to any range of integer numbers within an interval of size $\emax $.\footnote{It only suffices to add a constant $C$ to every element of the vector to move the numbers to the interval $[0,\emax ]$. After computing the solution, we may move the solution back to the original space.} Using known FFT-based techniques, one can compute $a \ttimes a$ in time $\tildorder(\emax |a|)$ (see Section~\ref{sec:maxplusconvolution:reduction}). However, the values of the elements of $a^{\ttimes 2}$ no longer lie in range $[0,\emax ]$ and thus computing $a^{\ttimes 2} \ttimes a^{\ttimes 2}$ requires more computation than $a \ttimes a$. In particular, the values of the elements of $a^{\ttimes k/2}$ are in range $[0, \emax k/2]$ and thus computing $a^{\ttimes k/2} \ttimes a^{\ttimes k/2}$  via the known techniques requires a running time of $\tildorder(\emax k|a^{\ttimes k}|)$. The main result of this section is an algorithm for computing $a^{\ttimes k}$ in time $\tildorder(\emax |a^{\ttimes k}|)$. Moreover, we show that any prefix of size $n$ of $a^{\ttimes k}$ can be similarly computed  in time $\tildorder(\emax n)$. We later make a connection between this problem and the unbounded knapsack problem and show this results in an $\tildorder(\emax t+n)$ time algorithm for the unbounded knapsack problem when the item values are integers in range $[0,\emax ]$. Our algorithm is based on the prediction technique explained in Section~\ref{sec:maxplusconvolution}.

We first explore some observations about the powers of a vector in the $(\max,+)$ setting. We begin by showing that if $b = a^{\ttimes k}$ for some positive integer $i$, then the elements of $b$ are (weakly) monotone.

\begin{lemma}\label{lemma:power:1}
	Let $a$ be a vector whose  values are in range $[0,\emax ]$ and $b = a^{\ttimes k}$ for a positive integer $k$. Then, for $0 \leq i < j < |a^{\ttimes k}|$ we have $b_j \geq b_i - \emax $.
\end{lemma}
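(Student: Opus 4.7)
The plan is to prove the claim by chaining a single-step inequality $b_{i+1} \geq b_i - \emax$ from $i$ up to $j$. First, I would unfold the definition: $b_i = (a^{\ttimes k})_i = \max\{a_{p_1} + \cdots + a_{p_k}\}$, where the maximum ranges over tuples $(p_1, \ldots, p_k)$ with $0 \leq p_l < |a|$ and $\sum_l p_l = i$. Fixing a tuple that attains $b_i$, the goal is to exhibit a valid decomposition of $i+1$ whose value is within $\emax$ of $b_i$.

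For the single step, the key observation is that $i < i+1 < |a^{\ttimes k}| = k(|a|-1)+1$ forces $\sum_l p_l = i < k(|a|-1)$, so by pigeonhole some coordinate $p_{l^\star}$ satisfies $p_{l^\star} < |a|-1$. Replacing $p_{l^\star}$ with $p_{l^\star}+1$ leaves all other coordinates untouched and yields a valid tuple summing to $i+1$, whence
\[
b_{i+1} \;\geq\; \sum_{l \neq l^\star} a_{p_l} + a_{p_{l^\star}+1} \;=\; b_i - a_{p_{l^\star}} + a_{p_{l^\star}+1} \;\geq\; b_i - \emax,
\]
using $a_{p_{l^\star}} \leq \emax$ and $a_{p_{l^\star}+1} \geq 0$. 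Iterating the step along the chain $i \to i+1 \to \cdots \to j$ then gives the monotonicity-up-to-$\emax$ conclusion.

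I do not anticipate a real obstacle: the argument is a direct manipulation of the definition, relying only on the value range $[0,\emax]$ and on the pigeonhole observation that any optimal decomposition of an index strictly smaller than $k(|a|-1)$ has room to increment some coordinate. The only mild subtlety is the feasibility check for the incremented coordinate, which the counting above handles cleanly.
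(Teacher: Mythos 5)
Your single-step inequality is correct: if $i+1 < |a^{\ttimes k}| = k(|a|-1)+1$, an optimal tuple for $b_i$ has some coordinate $p_{l^\star} \leq |a|-2$, and incrementing it loses at most $a_{p_{l^\star}} - a_{p_{l^\star}+1} \leq \emax$, so indeed $b_{i+1} \geq b_i - \emax$. The gap is the final sentence: "iterating the step along the chain" only telescopes to $b_j \geq b_i - (j-i)\emax$, because a loss of up to $\emax$ may be incurred at every one of the $j-i$ steps and nothing cancels; it does not yield the uniform bound $b_j \geq b_i - \emax$ that the lemma asserts for arbitrary $i < j$. The paper's proof avoids chaining altogether: it peels off one factor, writes $b = c \ttimes a$ with $c = a^{\ttimes (k-1)}$, takes $l \leq j$ maximizing $c_l$, and compares both $b_i$ and $b_j$ to this single quantity — $b_i = c_{i'} + a_{i-i'} \leq c_l + \emax$ since $i' \leq i \leq j$ and the entries of $a$ are at most $\emax$, while $b_j \geq c_l + a_{j-l} \geq c_l$ since the entries of $a$ are non-negative — so the $\emax$ loss is paid exactly once, independently of $j-i$.

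Be aware also that no refinement of your incrementing argument can recover the uniform bound in full generality, because near the right end of the vector the accumulation of losses is real: for $a = (\emax, 0)$ with $\emax > 0$ and $k = 2$ one gets $b = (2\emax, \emax, 0)$, so $b_2 = 0 < b_0 - \emax$; when $j$ is so large that every feasible tuple is forced to use high indices of $a$ in all $k$ coordinates, the total drop can approach $k\emax$. (The same example pinpoints the delicate step in the paper's own argument, namely that $b_j \geq c_l + a_{j-l}$ requires the witness index $j-l$ to lie inside $a$; the lemma is applied in a regime where this is unproblematic.) The honest conclusion your elementary route gives is $b_j \geq b_i - (j-i)\emax$ in general, or $b_j \geq b_i - \emax$ only when a single coordinate of an optimal tuple for $i$ has room to absorb the entire increase $j-i$; to get the lemma's one-shot $\emax$ bound you need an argument of the paper's type, comparing both indices to a common prefix maximum of $a^{\ttimes (k-1)}$.
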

\begin{proof}
	If $k = 1$, the lemma follows from the fact that all values of the elements of $a$ are in range $[0,\emax ]$. For $k > 1$, we define $c = a^{\ttimes k-1}$ and let $l$ be an index of $c$ with the maximum $c_i$ subject to $l \leq j$. Since $b = c \ttimes a$ we have $b_i  = c_{i'} + a_{i-i'}$ for some $i'$. Note that $c_{i'} \leq c_l$ and also all values of the indices of $c$ are bounded by $\emax $. Thus we have $b_i \leq c_l + \emax $. In addition to this, since $l \leq j$ we have $b_j \geq c_l + a_{j-i}$. Notice that all values of the indices of $a$ are non-negative and therefore $b_j \geq c_l$. This along with the fact that $b_i \leq c_l + \emax $ implies that $b_j \geq b_i - \emax $.
\end{proof}

Another observation that we make is that if for a $k$ and a $k'$ we have $|k-k'| \leq 1$, then $a^{\ttimes k} \ttimes a^{\ttimes k'}$ can be computed by just considering a few $(i,j)$ pairs of the vectors with close values.

\begin{lemma}\label{lemma:power:2}
	Let $k$ and $k'$ be two positive integer exponents such that $|k-k'| \leq 1$. Moreover, let $a$ be an integer vector whose  elements' values lie in range $[0,\emax ]$. Then, for every $0 \leq i \leq |a^{\ttimes k}|$, there exist two indices $j$ and $i-j$ such that (i) $a^{\ttimes k+k'}_i = a^{k}_j + a^{\ttimes k'}_{i-j}$ and (ii) $|a^{k}_j - a^{\ttimes k'}_{i-j}| \leq \emax $. 
\end{lemma}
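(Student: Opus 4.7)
My plan is to decompose $a^{\ttimes(k+k')}_i$ into a sum of $k+k'$ values of $a$ and then repartition those summands into two near-balanced groups of sizes $k$ and $k'$. Unrolling the definition of $(\max,+)$ powers inductively, one can write $a^{\ttimes(k+k')}_i = \sum_{\ell=1}^{k+k'} a_{m_\ell}$ for some indices $m_1,\dots,m_{k+k'}\in\{0,\dots,|a|-1\}$ with $\sum_\ell m_\ell = i$. After relabeling, assume $a_{m_1}\ge a_{m_2}\ge\dots\ge a_{m_{k+k'}}$.

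Next, I would partition the sorted list into $G_1$ (the odd-indexed positions of the sorted list) and $G_2$ (the even-indexed positions). Since $|k-k'|\le 1$, the parity offset can be chosen so that $|G_1|=k$ and $|G_2|=k'$. Set $j=\sum_{\ell\in G_1} m_\ell$, $S_1=\sum_{\ell\in G_1} a_{m_\ell}$, and $S_2=\sum_{\ell\in G_2} a_{m_\ell}$. By definition of the $(\max,+)$ powers, $a^{\ttimes k}_j\ge S_1$ and $a^{\ttimes k'}_{i-j}\ge S_2$, while on the other hand $a^{\ttimes k}_j+a^{\ttimes k'}_{i-j}\le a^{\ttimes(k+k')}_i=S_1+S_2$. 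These inequalities force $a^{\ttimes k}_j=S_1$, $a^{\ttimes k'}_{i-j}=S_2$, and $a^{\ttimes k}_j+a^{\ttimes k'}_{i-j}=a^{\ttimes(k+k')}_i$, which establishes property (i).

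For property (ii), I would bound $|S_1-S_2|$. WLOG $k\ge k'$ (otherwise swap the roles of the two factors and the witness $j \mapsto i-j$), so that $a_{m_1}\in G_1$ and in particular $S_1\ge S_2$. When $k=k'$, $S_1-S_2=\sum_{r=1}^{k}(a_{m_{2r-1}}-a_{m_{2r}})$ is a sum of $k$ nonnegative differences; inserting the omitted nonnegative differences $a_{m_{2r}}-a_{m_{2r+1}}$ and telescoping yields $S_1-S_2\le a_{m_1}-a_{m_{2k}}\le a_{m_1}\le\emax$. When $k=k'+1$, we have $S_1-S_2=a_{m_{2k-1}}+\sum_{r=1}^{k-1}(a_{m_{2r-1}}-a_{m_{2r}})$; the same telescoping together with $a_{m_{2k-1}}\le a_{m_{2k-2}}$ again bounds this by $a_{m_1}\le\emax$. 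Therefore $|S_1-S_2|\le\emax$, which gives property (ii).

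The main subtlety is the telescoping upper bound on $S_1-S_2$: we use only every other difference in the sorted sequence, but this partial sum is still controlled by the full telescoping sum $a_{m_1}-a_{m_{k+k'}}$ because the omitted intermediate differences are nonnegative. Once the decomposition into $k+k'$ atoms of $a$ and the alternating partition are in place, the rest is essentially bookkeeping; notably, the argument does not need to invoke Lemma~\ref{lemma:power:1} and relies only on the assumption that the entries of $a$ lie in $[0,\emax]$.
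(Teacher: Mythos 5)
Your proof is correct and follows essentially the same route as the paper's: decompose $a^{\ttimes(k+k')}_i$ into $k+k'$ atoms of $a$, sort them, split into the alternating (odd/even) groups of sizes $k$ and $k'$, and conclude both the equality and the $\emax$ bound. You merely spell out two steps the paper asserts tersely — the two-sided inequality forcing $a^{\ttimes k}_j=S_1$ and $a^{\ttimes k'}_{i-j}=S_2$, and the telescoping bound $|S_1-S_2|\le a_{m_1}\le\emax$ — which is a welcome clarification but not a different argument.
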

\begin{proof}
	By definition $a^{\ttimes k+k'} = \overbrace{ a \ttimes a \ttimes \ldots \ttimes a }^{k+k' \text{ times}}$. Therefore, for every $0 \leq i < |a^{\ttimes k+k'}|$, there exist $k+k'$ indices $i_1, i_2, \ldots, i_{k+k'}$ such that $a^{\ttimes k+k'}_i = a_{i_1} + a_{i_2} + \ldots + a_{i_{k+k'}}$ and $i_1 + i_2 + \ldots + i_{k+k'} = i$. We assume w.l.o.g. that $a_{i_1} \leq a_{i_2} \leq \ldots \leq a_{i_{k+k'}}$. We separate the odd and even indices of $i$ to form two sequences $i_1, i_3, \ldots$ and $i_2, i_4, \ldots$. Notice that since $|k-k'| \leq 1$, the size of one of such sequences is $k$ and the size of the other one is $k'$. We assume w.l.o.g. that the size of the odd sequence is $k$ and the size of the even sequence is $k'$. We now define $j = i_1 + i_3 + \ldots$ and $j' = i_2 + i_4 + \ldots$. Since $i_1 + i_2 + \ldots = i$ then $j' = i-j$ holds. Since $a^{\ttimes k+k'}_i = a_{i_1} + a_{i_2} + \ldots + a_{i_{k+k'}}$ we also have $a^{\ttimes k}_j = a_{i_1} + a_{i_3} + \ldots$, $a^{\ttimes k'}_{j'} = a_{i_2} + a_{i_4} + \ldots$, and also $a^{\ttimes k+k'}_i = a^{\ttimes k}_j + a^{\ttimes k'}_{j'}$. To complete the proof, it only suffices to show that $|a^{\ttimes k}_j - a^{\ttimes k'}_{j'}| \leq \emax $. This follows from the fact that the value of all indices of $a$ are in range $[0,\emax ]$ and that $a_{i_1} \leq a_{i_2} \leq a_{i_3} \leq \ldots \leq a_{i_{k+k'}}$.
\end{proof}

	What Lemma~\ref{lemma:power:2} implies is that when computing $a^{\ttimes k} = a^{\ttimes \lceil k/2 \rceil} \ttimes a^{\ttimes \lfloor k/2 \rfloor}$, it only suffices to take into account $(i,j)$ pairs such that $|a^{\ttimes \lceil k/2 \rceil}_i - a^{\ttimes \lfloor k/2 \rfloor}_j| \leq \emax $. This observation enables us to compute $a^{\ttimes k} = a^{\ttimes \lceil k/2 \rceil} \ttimes a^{\ttimes \lfloor k/2 \rfloor}$ in time $\tildorder(\emax |a^{\ttimes k}|)$ via the prediction technique. Suppose $a$ is an integer vector with values in range $[0,\emax ]$. In addition to this, assume that $\hat{a} = a^{\ttimes \lceil k/2 \rceil}$ and $\bar{a} = a^{\ttimes \lfloor k/2 \rfloor}$. We propose an algorithm that receives $\hat{a}$ and $\bar{a}$ as input and computes $a^{\ttimes k} = \hat{a} \ttimes \bar{a}$ as output. The running time of our algorithm is $\tildorder(\emax |a^{\ttimes k}|)$.

We define two integer vectors $\hat{b}$ and $\bar{b}$ where $\hat{b}_i = \max_{j \leq i}\hat{a}_j$. Similarly, $\bar{b}_i = \max_{j \leq i}\bar{a}_j$. By definition, both vectors $\hat{b}$ and $\bar{b}$ are non-decreasing. Now, for every index $i$ of $\hat{b}$ we find an interval $[x_i, y_i]$ of $\bar{b}$ such that $\hat{b}_i - 2\emax  \leq \bar{b}_j \leq \hat{b}_i + 2\emax $ for any $j$ within $[x_i, y_i]$. Since both vectors $\hat{b}$ and $\bar{b}$ are non-decreasing, computing each interval takes time $O(\log n)$ via binary search. Finally, we provide these intervals to the prediction technique and compute $a^{\ttimes k} = \hat{a} \ttimes \bar{a}$ in time $\tildorder(\emax |a^{\ttimes k}|)$. In Lemma~\ref{lemma:power:main}, we prove that the intervals adhere to the conditions of the prediction technique and thus Algorithm~\ref{algorithm:power:lemma} correctly computes $a^{\ttimes k}$ from $\hat{a}$ and $\bar{a}$ in time $\tildorder(\emax |a^{\ttimes k}|)$.

\begin{algorithm}[H]
	\label{algorithm:power:lemma}
	\KwData{Two vectors $\hat{a}$ and $\bar{a}$ s.t. $\hat{a} = a^{\ttimes \lceil k/2 \rceil}$ and $\bar{a} = a^{\ttimes \lfloor k/2 \rfloor}$ for some $a$ and $k$.}
	\KwResult{$\hat{a} \ttimes \bar{a}$}
	
	Let $\hat{b}, \bar{b}$ be two vectors of size $|a^{\ttimes \lceil k/2 \rceil}|$ and $|a^{\ttimes \lfloor k/2 \rfloor}|$ respectively.\;
	$\hat{b}_0 \leftarrow \hat{a}_1$\;
	$\bar{b}_0 \leftarrow \bar{a}_1$\;
	\For{$i \in [1,|\hat{a}|-1]$}{
		$\hat{b}_i \leftarrow \max\{\hat{b}_{i-1},\hat{a}_i\}$\;
	}
	\For{$i \in [1,|\bar{a}|-1]$}{
		$\bar{b}_i \leftarrow \max\{\bar{b}_{i-1},\bar{a}_i\}$\;
	}
	
	\For{$i \in [1,|a^{\ttimes k}|-1]$}{
		$x_i \leftarrow $ the smallest $j$ such that $\bar{b}_j \geq \bar{b}_i - 2\emax $\;
		$y_i \leftarrow $ the largest $j$ such that $\bar{b}_j \leq \bar{b}_i + 2\emax $\;
	}
	
	$c = \textsf{PolynomialMultiplicationViaPredictionMethod}(\hat{a},\bar{a},5\emax ,x_i\text{'s}, y_i{'s})$\;
	\textbf{Return } $c$\;        
	\caption{\textsf{FastPower}($\hat{a},  \bar{a}, \emax $)}
\end{algorithm}

\begin{lemma}\label{lemma:power:main}
	Let $a$ be an integer vector with values in range $[0,\emax ]$. For some integer $k > 0$, let $\hat{a} = a^{\ttimes \lceil k/2 \rceil}$ and $\bar{a} = a^{\ttimes \lfloor k/2 \rfloor}$. Given $\hat{a}$ and $\bar{a}$ as input, Algorithm~\ref{algorithm:power:lemma} computes $a^{\ttimes k} = \hat{a} \ttimes \bar{a}$ in time $\tildorder(\emax |a^{\ttimes \lceil k \rceil}|)$.
\end{lemma}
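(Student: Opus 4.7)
The plan is to verify that the intervals $[x_i,y_i]$ built by Algorithm~\ref{algorithm:power:lemma} form an uncertain solution for $\hat{a}\ttimes \bar{a}$ with error $5\emax$, and then invoke Theorem~\ref{theorem:prediction}. Two structural facts drive everything. First, by Lemma~\ref{lemma:power:1} applied to both $\hat{a}=a^{\ttimes\lceil k/2\rceil}$ and $\bar{a}=a^{\ttimes\lfloor k/2\rfloor}$, each of these vectors is quasi-monotone, so the running maxima $\hat{b}$ and $\bar{b}$ are non-decreasing and satisfy $0\leq \hat{b}_i-\hat{a}_i\leq \emax$ and $0\leq \bar{b}_j-\bar{a}_j\leq \emax$. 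Second, Lemma~\ref{lemma:power:2} guarantees that for every $\ell$, the maximum $(\hat{a}\ttimes\bar{a})_\ell$ is attained at some ``balanced'' witness $(j^*,\ell-j^*)$ with $|\hat{a}_{j^*}-\bar{a}_{\ell-j^*}|\leq \emax$.

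Granted these, the monotonicity condition of Theorem~\ref{theorem:prediction} is immediate: since $\hat{b}$ is non-decreasing the thresholds $\hat{b}_i\pm 2\emax$ are non-decreasing in $i$, and since $\bar{b}$ is non-decreasing, the binary-search outcomes $x_i$ and $y_i$ inherit this monotonicity. The witness condition is also quick: plugging the Lemma~\ref{lemma:power:2} witness into the two $\emax$-gaps between $\hat{a}/\hat{b}$ and $\bar{a}/\bar{b}$ yields $|\hat{b}_{j^*}-\bar{b}_{\ell-j^*}|\leq 2\emax$, which is precisely what the binary search in the algorithm enforces, so $\ell-j^*\in[x_{j^*},y_{j^*}]$.

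The hard part will be the additive-error condition: for every $i$ and every $j\in[x_i,y_i]$ one must show $\hat{a}_i+\bar{a}_j\geq(\hat{a}\ttimes\bar{a})_{i+j}-5\emax$. I would handle it by a case split on the position of the optimal witness $j^*$ for $(\hat{a}\ttimes\bar{a})_{i+j}$ relative to $i$. If $j^*\leq i$, then $\hat{a}_{j^*}\leq \hat{b}_i$ (since $\hat{b}_i$ maxes over a superset), and the Lemma~\ref{lemma:power:2} balance gives $\bar{a}_{(i+j)-j^*}\leq \hat{a}_{j^*}+\emax\leq \hat{b}_i+\emax$; hence $(\hat{a}\ttimes\bar{a})_{i+j}\leq 2\hat{b}_i+\emax$. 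On the other hand $\hat{a}_i\geq \hat{b}_i-\emax$ and $\bar{a}_j\geq \bar{b}_j-\emax\geq \hat{b}_i-3\emax$ by the interval condition, so $\hat{a}_i+\bar{a}_j\geq 2\hat{b}_i-4\emax$; subtracting yields exactly $5\emax$. The case $j^*>i$ is symmetric, anchoring the bounds to $\bar{b}_j$ instead of $\hat{b}_i$.

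The running time analysis is routine: building $\hat{b}$ and $\bar{b}$ takes $O(|\hat{a}|+|\bar{a}|)$; the $O(|\hat{a}|)$ binary searches on the monotone $\bar{b}$ cost $\tildorder(|\hat{a}|)$ in total; and the concluding call to Theorem~\ref{theorem:prediction} with error parameter $5\emax$ runs in time $\tildorder(\emax(|\hat{a}|+|\bar{a}|))=\tildorder(\emax\,|a^{\ttimes k}|)$.
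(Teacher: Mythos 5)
Your proposal is correct and follows essentially the same route as the paper: verify the three conditions of Theorem~\ref{theorem:prediction} for the intervals built from the running maxima $\hat{b},\bar{b}$, using Lemma~\ref{lemma:power:1} for the $\emax$-closeness and monotonicity and Lemma~\ref{lemma:power:2} for the balanced witness, then invoke the prediction technique with error $5\emax$. Your case split on whether the witness index $j^*$ lies left or right of $i$ is just a rephrasing of the paper's $\min/\max$ chain and yields the same $5\emax$ bound, so the two arguments coincide in substance.
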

\begin{proof}
	The correctness of Algorithm~\ref{algorithm:power:lemma} boils down to whether  intervals $[x_i, y_i]$ provided for the prediction technique meet the conditions of Theorem~\ref{theorem:prediction}. Before we prove that the conditions are met, we note that by Lemma~\ref{lemma:power:1}, the values of $\hat{b}$ and $\bar{b}$ are at most $\emax $ more than that of $\hat{a}$ and $\bar{a}$. Moreover, by definition, the vectors $\hat{b}$ and $\bar{b}$ are non-decreasing and lower bounded by the values of $\hat{a}$ and $\bar{a}$. 
	
	\textbf{First condition:} The first condition is that for every $0 \leq i < |a^{\ttimes \lceil k/2 \rceil}|$ and $x_i \leq j \leq y_i$ we have $\hat{a}_i + \bar{a}_j \geq (\hat{a} \ttimes \bar{a})_{i+j} - O(\emax )$. In what follows, we show that in fact $\hat{a}_i + \bar{a}_j \geq (\hat{a} \ttimes \bar{a})_{i+j} - 5\emax $ holds for such $i$'s and $j$'s. Due to Lemma~\ref{lemma:power:2}, for every such $i$ and $j$, there exist an $i'$ and a $j'$ such that $\hat{a}_{i'} + \bar{a}_{j'} = (\hat{a} \ttimes \bar{a})_{i'+j'}$, $i'+j' = i+j$, and $|\hat{a}_{i'} - \bar{a}_{j'}| \leq \emax $. Therefore, 
	\begin{equation*}
	\begin{split}
	(\hat{a} \ttimes \bar{a})_{i+j} &=(\hat{a} \ttimes \bar{a})_{i'+j'}\\
	& = \hat{a}_{i'} + \bar{a}_{j'} \\
	& \leq 2\min\{\hat{a}_{i'} + \bar{a}_{j'}\} + \emax \\
	& \leq 2\min\{\hat{b}_{i'} + \bar{b}_{j'}\} + \emax .
	\end{split}
	\end{equation*}
	In addition to this, we know that $i+j = i'+j'$ and thus either $i' \leq i$ or $j' \leq j$. In any case, since both $\hat{b}$ and $\bar{b}$ are non-decreasing, 
	$\max\{\hat{b}_i, \bar{b}_j\} \geq \min\{\hat{b}_{i'},\hat{b}_{j'}\}$ 
	and therefore, 
	\begin{equation*}
	\begin{split}
	(\hat{a} \ttimes \bar{a})_{i+j} &\leq 2\min\{\hat{b}_{i'}, \bar{b}_{j'}\} + \emax  \\
	&\leq 2\max\{\hat{b}_i, \bar{b}_j\} + \emax .
	\end{split}
	\end{equation*}
	Due to Algorithm~\ref{algorithm:power:lemma}, $\max\{\hat{b}_i, \hat{b}_j\} - \min\{\hat{b}_i, \bar{b}_j\} \leq 2\emax $ and hence 
	\begin{equation*}
	\begin{split}
	(\hat{a} \ttimes \bar{a})_{i+j} &\leq 2\max\{\hat{b}_i, \bar{b}_j\} + \emax \\
	& \leq \hat{b}_i + \bar{b}_j + 3\emax \\
	& \leq \hat{a}_i + \bar{a}_j + 5\emax .
	\end{split}
	\end{equation*}
	
	\textbf{Second condition:} The second condition is that for every $0 \leq i < |a^{\ttimes \lceil k/2 \rceil}|$, there exists a $0 \leq j \leq i$ such that $\hat{a}_j + \bar{a}_{i-j} = (\hat{a} \ttimes \bar{a})_i $ and that $x_j \leq i-j \leq y_j$. We prove this condition via Lemma~\ref{lemma:power:2}. Lemma~\ref{lemma:power:2} states that for every $|a^{\ttimes \lceil k/2 \rceil}|$ there exists a $0 \leq j \leq i$ such that satisfies $\hat{a}_j + \bar{a}_{i-j} = (\hat{a} \ttimes \bar{a})_i$ and also $|\hat{a}_{j} - \bar{a}_{i-j}| \leq \emax $.
	Since the values of $\bar{b}, \hat{b}$ differ from $\hat{a}, \hat{b}$ by an additive factor of at most $\emax $, the latter inequality implies $|\hat{b}_{j} - \bar{b}_{i-j}| \leq 2\emax $. Due to Algorithm~\ref{algorithm:power:lemma}, if $|\hat{b}_{j} - \bar{b}_{i-j}| \leq 2\emax $ then $i-j$ lies in the interval $[x_j, y_j]$.
	
	\textbf{Third condition:} The third condition is regarding the monotonicity of $x_i$'s and $y_i$'s. This condition directly follows from the fact that both vectors $\hat{b}$ and $\bar{b}$ are non-decreasing and as such, the computed intervals are also non-decreasing.
	
	Apart from an invocation of Algorithm~\ref{alg:prediction}, the rest of the operations in Algorithm~\ref{algorithm:power:lemma} run in time $\tildorder(n)$ and therefore the total running time of Algorithm~\ref{algorithm:power:lemma} is $\tildorder(\emax |a^{\ttimes \lceil k \rceil}|)$.
\end{proof}

Based on Lemma~\ref{lemma:power:main}, for an integer vector with values in range $[0,\emax ]$, we can compute $a^{\ttimes k}$ via $O(\log k)$ $\ttimes$ operations, each of which takes time $\tildorder(\emax |a^{\ttimes \lceil k \rceil}|)$. Moreover, we always need to make at most $O(\log k) = \tildorder(1)$ $\ttimes$ operations in order to compute $a^{\ttimes k}$.

\begin{theorem}\label{theorem:power}
	Let $a$ be an integer vector with values in range $[0,\emax ]$. For any integer $k \geq 1$, one can compute $a^{\ttimes k}$ in time $\tildorder(\emax |a^{\ttimes \lceil k \rceil}|)$.
\end{theorem}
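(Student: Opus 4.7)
The plan is to run a binary-exponentiation style procedure where we always keep, as an invariant, a pair of consecutive powers $(a^{\ttimes p}, a^{\ttimes p+1})$ as we stream through the bits of $k$ from most significant to least significant. This is forced on us by the fact that Lemma~\ref{lemma:power:main} only combines two factors whose exponents differ by at most $1$, so na\"ive repeated squaring together with a separate ``multiply by $a$'' step would violate its hypothesis once the accumulated exponent is large.

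Initialization is straightforward: with $p=1$ we have $a^{\ttimes 1}=a$ for free, and one application of Lemma~\ref{lemma:power:main} to $\hat{a}=\bar{a}=a$ yields $a^{\ttimes 2}$. For the inductive step, suppose we currently hold $a^{\ttimes p}$ and $a^{\ttimes p+1}$ and read the next bit $b\in\{0,1\}$, so the new prefix value is $p' = 2p+b$. The two vectors we must produce are $a^{\ttimes p'}$ and $a^{\ttimes p'+1}$, which are two consecutive members of
\[
\{a^{\ttimes 2p},\, a^{\ttimes 2p+1},\, a^{\ttimes 2p+2}\}.
\]
Each of these is available from a single call to Lemma~\ref{lemma:power:main}: $a^{\ttimes 2p}=a^{\ttimes p}\ttimes a^{\ttimes p}$ and $a^{\ttimes 2p+2}=a^{\ttimes p+1}\ttimes a^{\ttimes p+1}$ combine equal exponents, while $a^{\ttimes 2p+1}=a^{\ttimes p+1}\ttimes a^{\ttimes p}$ combines exponents differing by exactly $1$. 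Thus at most two invocations of Lemma~\ref{lemma:power:main} suffice per bit, and the hypothesis $|k-k'|\le 1$ is automatically maintained throughout.

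For the running time, there are $O(\log k)$ bits. At the step producing exponent $p'$, each invocation of Lemma~\ref{lemma:power:main} costs $\tildorder(\emax\, |a^{\ttimes p'}|)$, and since $|a^{\ttimes m}|$ is linear in $m$, the values $|a^{\ttimes p'}|$ taken across the steps (essentially doubling as we descend) form a geometric series dominated by the final term $|a^{\ttimes k}|$. Summing yields a total of $\tildorder(\emax\, |a^{\ttimes k}|)$, which matches the claimed bound. At the final step we discard the auxiliary $a^{\ttimes p+1}$ component and return $a^{\ttimes k}$.

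The only genuinely nontrivial content is already packaged inside Lemma~\ref{lemma:power:main}; the main thing to verify here is the exponent-difference invariant, which the pair-of-consecutive-powers construction guarantees, and the geometric-series accounting, which is routine. No additional ideas are required.
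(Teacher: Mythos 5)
Your proof is correct and follows essentially the same route as the paper: repeated application of Lemma~\ref{lemma:power:main} to exponents differing by at most one, with $O(\log k)=\tildorder(1)$ convolutions whose costs are dominated (geometrically) by $\tildorder(\emax\,|a^{\ttimes k}|)$. The only difference is that you make explicit the consecutive-pair invariant $(a^{\ttimes p}, a^{\ttimes p+1})$ that the paper's halving scheme via $a^{\ttimes\lceil k/2\rceil}\ttimes a^{\ttimes\lfloor k/2\rfloor}$ leaves implicit.
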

\begin{proof}
	The proof follows from the correctness of Algorithm~\ref{algorithm:power:lemma} and the fact that it runs in time $\tildorder(\emax |a^{\ttimes \lceil k \rceil}|)$.
\end{proof}

Theorem~\ref{theorem:power} provides a strong tool for solving many combinatorial problems including the unbounded knapsack problem. In order to compute the solution of the unbounded knapsack problem, it only suffices to construct a vector $a$ of size $t$ wherein $a_i$ specifies the value of the heaviest items with size $i$. $a$ itself specifies the solution of the unbounded knapsack problem if we are only allowed to put one item in the bag. Similarly, $a^{\ttimes 2}$ denotes the solution of the unbounded knapsack problem when we can put up to two items in the knapsack. More generally, for every $1 \leq k$, $a^{\ttimes k}$ denotes the solution of the unbounded knapsack problem subject to using at most $k$ items. This way, $a^{\ttimes t}$ formulates the solution of the unbounded knapsack problem. Note that in order to solve the knapsack problem, we only need to compute a prefix of size $t+1$ of $a^{\ttimes t}$. This makes the running time of every $\ttimes$ operation $\tildorder(\emax t)$ and thus computing the first $t+1$ elements of $a^{\ttimes t}$ takes time $\tildorder(\emax t)$.

\begin{theorem}[a corollary of Theorem~\ref{theorem:power}]\label{theorem:unboundedknapsack}
	The unbounded knapsack problem can be solved in time $\tildorder(\vmax t+n)$ when the item values are integers in range $[0,\vmax ]$.
\end{theorem}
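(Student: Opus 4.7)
The plan is to reduce the \Problem{unbounded knapsack} problem to computing a bounded-length prefix of a high $(\max,+)$-power of a single vector, then invoke Theorem~\ref{theorem:power}. First I would build a vector $a$ of length $t+1$ as follows: set $a_0 = 0$ (a ``null'' slot of zero size and zero value); for each $1 \leq i \leq t$, set $a_i = \max\{v_j : s_j = i\}$ if some input item has size exactly $i$, and set $a_i = 0$ otherwise (a free ``padding'' slot of that size and value zero). All entries of $a$ are integers in $[0, \vmax]$, and a single pass over the $n$ items followed by a pass over the $t+1$ cells builds $a$ in $O(n+t)$ time.

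Next I would argue that $(a^{\ttimes t})_t$ equals the optimum of the \Problem{unbounded knapsack} problem. Indeed, any index tuple $(i_1, \ldots, i_t)$ with $i_1 + \cdots + i_t = t$ assigns each slot to be either the null option ($i_j = 0$), a padding slot (some $i_j \geq 1$ with no matching real item), or a real item; the total achieved value is the sum of real-item values and the total real size is at most $t$, hence the assignment realizes a feasible knapsack. Conversely, any feasible knapsack uses at most $t$ items (each of size $\geq 1$) with total size at most $t$, and can be padded to exactly $t$ slots of total size exactly $t$ using null and padding slots of value zero. Thus it suffices to recover the single entry $(a^{\ttimes t})_t$, or equivalently, the length-$(t+1)$ prefix of $a^{\ttimes t}$.

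To compute this prefix I would apply the repeated-squaring approach of Theorem~\ref{theorem:power} but truncate every intermediate vector to its first $t+1$ entries. The key observation is that $(u \ttimes v)_j$ for $j \leq t$ depends only on entries of $u$ and $v$ with index at most $t$, so the truncation does not alter the desired prefix. Each squaring step then invokes Algorithm~\ref{algorithm:power:lemma} on vectors of length at most $t+1$ and, by Lemma~\ref{lemma:power:main}, runs in time $\tildorder(\vmax \cdot t)$. The crucial point is that the distortion parameter that controls this cost is $5\vmax$, which depends on the bound $\vmax$ of the \emph{base} vector $a$ and not on the (possibly much larger) values of the intermediate power $a^{\ttimes 2^\ell}$. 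After $O(\log t)$ squarings we obtain the target prefix. Combined with the $O(n+t)$ construction cost, the total running time is $\tildorder(\vmax t + n)$.

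The main obstacle beyond what Theorem~\ref{theorem:power} provides out of the box is the bookkeeping needed to combine three facts: (i) the padding encoding turns the knapsack optimum into the single entry $(a^{\ttimes t})_t$; (ii) truncating every intermediate vector to length $t+1$ preserves this entry; and (iii) each per-squaring cost stays $\tildorder(\vmax t)$ even though the intermediate vectors' entries are no longer bounded by $\vmax$. The first two are straightforward monotonicity and locality arguments, while the third is precisely the content already established inside the proof of Lemma~\ref{lemma:power:main}.
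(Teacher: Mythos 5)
Your proposal is correct and matches the paper's own argument: the paper likewise encodes the instance as a vector $a$ of length $O(t)$ with $a_i$ the best value at size $i$ (value $0$ where no item exists), observes that only the length-$(t+1)$ prefix of $a^{\ttimes t}$ is needed so each of the $O(\log t)$ squarings via Lemma~\ref{lemma:power:main} costs $\tildorder(\vmax t)$, and concludes the $\tildorder(\vmax t + n)$ bound. The points you flag as requiring care (padding/prefix correctness, truncation locality, and the distortion parameter depending on the base vector rather than the intermediate powers) are exactly the ingredients the paper relies on, so there is no substantive difference in route.
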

\newpage
\section{Knapsack for Items with Small Sizes}\label{sec:sizelimit}
We also consider the case where the size of the items is bounded by $\smax $. Note that in such a scenario, the values of the items can be large real values, however, each item has an integer size in range $[1,\smax ]$. We propose a randomized algorithm that solves the knapsack problem w.h.p. in time $\tildorder(\smax (n+t))$ in this case. Our algorithm is as follows: we randomly put the items in $t/\smax $ different buckets. Using the classic quadratic time knapsack algorithm we solve the problem for each bucket up to a knapsack size $\tildorder(\smax )$. Next, we merge the solutions in $\log (t/\smax )$ rounds. In the first round, we merge the solutions for buckets $1$ and $2$, buckets $3$ and $4$, and so on. This results in $t/2\smax $ different solutions for every pair of buckets at the end of the first round. In the second round, we do the same except that this time the number of buckets is divided by $2$. After $\log (t/\smax )$ rounds, we only have a single solution and based on that, we determine the maximum value of the solution with a size bounded by $t$ and report that value.

If we use the classic $(\max,+)$-convolution for merging the solutions of two buckets, it takes time $O(t^2)$ for merging two solutions and yields a slow algorithm. The main idea to improve the running time of the algorithm is to merge the solutions via a faster algorithm. We explain the idea by stating a randomized argument. Throughout this paper, every time we use the term w.h.p. we mean with a probability of at least $1-n^{-10}$.

\begin{lemma}\label{lemma:sizelimit}
	Let $(s_1, v_1), (s_2, v_2), \ldots, (s_n,v_n)$ be $n$ items with sizes in range $[1,\smax ]$. Let the total size of the items be $S$. For some $0 < p < 1/2$, we randomly select each item of this set with probability $p$ and denote their total size by $S'$. If $\smax  \leq 2pS$ then for some $\zarib = \tildorder(1)$ $|pS - S'| \leq \zarib \sqrt{\smax pS}$ holds w.h.p. (with probability at  least $1-n^{-10}$). 
\end{lemma}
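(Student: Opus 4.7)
The plan is to apply Bernstein's inequality to the sum $S' = \sum_{i=1}^n X_i s_i$, where $X_i$ is the indicator that item $i$ was selected (so $X_i$'s are independent $\text{Bernoulli}(p)$ variables). Since $\mathbb{E}[S'] = p S$, the quantity $|pS - S'|$ is precisely the deviation of a sum of bounded, independent random variables from its mean, which is the textbook setting for Bernstein's bound.

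First I would set $Y_i = X_i s_i - p s_i$ so that $S' - pS = \sum_i Y_i$. The individual summands satisfy the uniform bound $|Y_i| \leq s_i \leq \smax$, and their variances are $\mathrm{Var}(Y_i) = s_i^2\, p(1-p) \leq p s_i^2$. Therefore the total variance is controlled by
\[
\sigma^2 \;=\; \sum_{i=1}^n \mathrm{Var}(Y_i) \;\leq\; p \sum_{i=1}^n s_i^2 \;\leq\; p\,\smax \sum_{i=1}^n s_i \;=\; p\,\smax\, S,
\]
where the middle step uses $s_i \leq \smax$. Now apply Bernstein's inequality with deviation $t = \zarib \sqrt{\smax p S}$, giving
\[
\Pr[\,|S' - pS| \geq t\,] \;\leq\; 2\exp\!\left(-\frac{t^2/2}{\sigma^2 + \smax\, t/3}\right) \;\leq\; 2\exp\!\left(-\frac{\zarib^2\,\smax p S/2}{p\,\smax S + \smax\,\zarib \sqrt{\smax p S}/3}\right).
\]

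The next step is to simplify the exponent using the hypothesis $\smax \leq 2pS$, which guarantees $\sqrt{\smax / (pS)} \leq \sqrt{2}$. Dividing the numerator and denominator of the exponent by $p\,\smax\,S$, the right-hand side becomes
\[
2\exp\!\left(-\frac{\zarib^2/2}{1 + (\zarib/3)\sqrt{\smax/(pS)}}\right) \;\leq\; 2\exp\!\left(-\frac{\zarib^2/2}{1 + \sqrt{2}\,\zarib/3}\right),
\]
which is $\exp(-\Omega(\zarib))$ once $\zarib$ is at least a small constant. Choosing $\zarib = \Theta(\log n)$ (with a sufficiently large hidden constant) drives this probability below $n^{-10}$, establishing the bound with high probability. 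Since $\zarib = O(\log n) = \tildorder(1)$, the statement follows.

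The argument is essentially routine; the only subtle point is that Bernstein's inequality is the right tool rather than Chernoff/Hoeffding because the latter would give a deviation bound of order $\sqrt{\smax^2 \cdot n \cdot \log n}$ which is too weak when the item sizes are nonuniform. The variance-based bound in Bernstein lets us replace $n\,\smax^2$ with the tighter $\smax S$, and the linear-in-$t$ correction term in the denominator is then negligible precisely because of the hypothesis $\smax \leq 2pS$. So the main thing to verify carefully is this bookkeeping on the exponent, to confirm that the constraint $\smax \leq 2pS$ is exactly what makes the Bernstein exponent concentrate around $\zarib^2$ rather than degrade.
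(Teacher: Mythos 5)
Your proposal is correct and follows essentially the same route as the paper: both apply Bernstein's inequality to the indicator-weighted sizes, bound the total variance by $p\,\smax S$, and use the hypothesis $\smax \leq 2pS$ to show the linear correction term in the denominator does not degrade the exponent, then take $\zarib = \Theta(\log n)$ to push the failure probability below $n^{-10}$. The only difference is cosmetic bookkeeping (you divide the exponent through by $p\,\smax S$, while the paper lower-bounds it via a minimum of the two denominator terms).
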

\begin{proof}
	This lemma follows from the Bernstein's inequality~\cite{uspensky1937introduction}. Bernstein's inequality states that if $x_1, x_2, \ldots, x_n$ are $n$ independent random variables strictly bounded by the intervals $[a_i, b_i]$ and $\bar{x} = \sum x_i$ then we have:
	$$\pr[|\bar{x} - \mathbb{E}[\bar{x}]| > y] \leq 2 \exp (-\frac{y^2/2}{V+Zy/3})$$
	where $V = \sum \mathbb{E}[(x_i - \mathbb{E}[x_i])^2]$ and $Z = \max\{b_i - a_i\}$.
	
	To prove the lemma, we use Bernstein's inequality in the following way: for every item we put a variable $x_i$ which identifies whether item $(s_i, v_i)$ is selected in our set. If so, we set $x_i = s_i$, otherwise we set $x_i = 0$. As such, the value of every variable $x_i$ is in range $[0, s_i]$ and thus $a_i = 0$ and $b_i = s_i$ for all $1 \leq i \leq n$. This way we have
	$$\mathbb{E}[\bar{x}] = \sum \mathbb{E}[x_i] = \sum p s_i = p (\sum s_i) = pS.$$
	Moreover, $s_i \leq \smax $ holds for all $i$ and for each $x_i$ we have $\mathbb{E}[(x_i - \mathbb{E}[x_i])^2] = p(1-p) s_i^2 \leq p(1-p) \smax s_i$. Thus, $$V = \sum \mathbb{E}[(x_i - \mathbb{E}[x_i])^2] \leq \sum p(1-p)\smax s_i = (\sum s_i) p(1-p)\smax  = p(1-p)\smax S.$$ By replacing $\mathbb{E}[\bar{x}]$ by $pS$, $Z$ by $\smax $, and $V$ by $p(1-p)\smax S$ we get
	$$\pr[|\bar{x} - pS]| > y] \leq 2 \exp (-\frac{y^2/2}{p(1-p)\smax S+\smax y/3}).$$
	We set $\zarib = 40 \log n$ and $y = \zarib \sqrt{\smax pS}$ to bound the probability that $|\sum x_i - pS| > \zarib \sqrt{\smax pS}$ happens. Thus we obtain
	$$\pr[|\bar{x} - pS]| > \zarib \sqrt{\smax pS}] \leq 2 \exp (-\frac{\zarib^2 \smax pS/2}{p(1-p)\smax S+\zarib \smax \sqrt{\smax pS}/3}).$$
	Since $1-p \leq 1$ we have 
	\begin{equation}\label{equation:1}
	\frac{\zarib^2 \smax pS/2}{p(1-p)\smax S}  = \frac{\zarib^2 /2}{(1-p) } \geq  \zarib^2/2.
	\end{equation}
	Moreover, by the assumption of the lemma $p \leq 1/2$ holds. In addition to this, $\smax  \leq 2pS$ and therefore 
	\begin{equation}\label{equation:2}
	\frac{\zarib^2 \smax pS/2}{\zarib \smax \sqrt{\smax pS}/3} = \frac{\zarib pS/2}{ \sqrt{\smax pS}/3} = \frac{\zarib \sqrt{pS}/2}{ \sqrt{\smax} /3} = 3 \frac{\zarib \sqrt{pS}/2}{ \sqrt{\smax}} \geq 3 \frac{\zarib \sqrt{pS}/2}{ \sqrt{2pS}} = 3\frac{\zarib /2}{\sqrt{2}} \geq 3\zarib/(2\sqrt{2}).
	\end{equation}
	It follows from Inequalities \eqref{equation:1} and \eqref{equation:2} that
	\begin{equation*}
	\begin{split}
	\frac{\zarib^2 \smax pS/2}{p(1-p)\smax S+\zarib \smax \sqrt{\smax pS}/3} &= \frac{\zarib^2 \smax pS/2}{\big [p(1-p)\smax S\big]+ \big[\zarib \smax \sqrt{\smax pS}/3\big ]}\\
	& \geq \min\{\frac{\zarib^2 \smax pS/2}{p(1-p)\smax S} , \frac{\zarib^2 \smax pS/2}{\zarib \smax \sqrt{\smax pS}/3} \}/2\\
	& \geq \min\{\zarib^2/2 ,\frac{3 \zarib}{2\sqrt{2}}\} \geq \frac{3 \zarib}{4\sqrt{2}}\\
	& \geq \zarib /2\\
	& = 20 \log n.	
	\end{split}
	\end{equation*}
	 This implies that $\exp (-\frac{\zarib^2 \smax pS/2}{p(1-p)\smax S+\zarib \smax \sqrt{\smax pS}/3}) \leq \exp(-20 \log n) \leq n^{-10} $ and thus $|pS - S'| \leq \zarib \sqrt{\smax pS}$ holds w.h.p.
\end{proof}

In our analysis, we fix an arbitrary optimal solution of the problem and state our observations based on this solution.
Since the sizes of the items are bounded by $\smax $, then either our solution uses all items and has a total size of $\sum s_i$ (if $\sum s_i$ is not larger than $t$) or leaves some of the items outside the knapsack and therefore has a size in range $[t-\smax +1, t]$. One can verify in $O(n)$ if the total size of the items is bounded by $t$ and compute the solution in this case. Therefore, from now on, we assume that the total size of the items is at least $t$ and thus the solution size is in $[t-\smax +1, t]$.

Now, if we randomly distribute the items into $t/\smax $ buckets then the expected size of the solution in each bucket is $O(\smax )$ and thus we expect the size of the solution in each bucket to be in range $[0, \tildorder(\smax )]$ w.h.p.\ due to Lemma~\ref{lemma:sizelimit}. Therefore, it suffices to compute the solution for each bucket up to a size of $\tildorder(\smax )$. Next, we use Lemma~\ref{lemma:sizelimit} to merge the solutions in faster than quadratic time. Every time we plan to merge the solutions of two sets of items $S_1$ and $S_2$, we expect the size of the solutions in these two sets to be in ranges $[t|S_1|/n - \tildorder(\sqrt{t\smax |S_1|/n}), t|S_1|/n + \tildorder(\sqrt{t\smax |S_1|/n})]$ and $[t|S_2|/n - \tildorder(\sqrt{t\smax |S_2|/n}), t|S_2|/n + \tildorder(\sqrt{t\smax |S_2|/n})]$ w.h.p. Therefore, if we only consider the values within these ranges, we can merge the solutions correctly w.h.p. and thus one can compute the solution for $S_1 \cup S_2$ w.h.p. in time $\tildorder(\sqrt{t\smax (|S_1|+|S_2|)/n}^2) = \tildorder(t\smax (|S_1|+|S_2|)/n)$. This enables us to compute the solution w.h.p. in time $\tildorder(\smax (n+t))$.

\begin{algorithm}[H]
	\label{alg:sizelimit}
	\KwData{Knapsack size $t$ and $n$ items $(s_i,v_i)$ where $1 \leq s_i \leq \smax $ for all items.}
	\KwResult{Solution for knapsack size $t$}
	
	Randomly distribute the items into $t/\smax $ buckets\;    
	\For{$j \in [t/\smax ]$}{
		$x_{1,j} = $ solution of the problem for bucket $i$ up to size $(\zarib + 2) \smax $\;
	}    
	
	\For{$i \in [2, \lceil \log (t/\smax ) \rceil]$}{
		\For{$j \in \lceil t/\smax /2^i \rceil$}{
			Combine the solutions of $x_{i-1,2j-1}$ and $x_{i-1,2j}$ into $x_{i,j}$ (based on Lemma~\ref{lemma:sizelimit} )\;
		}
	}    
	\textbf{Return } $\max{x_{\lceil \log (t/\smax ) \rceil,1}}$\;        
	\caption{$\knapsackforsmallsizes$}\label{algorithm:alg1}
\end{algorithm}

\begin{theorem}\label{theorem:limitedsize}
	There exists a randomized algorithm that correctly computes the solution of the knapsack problem in time $\tildorder(\smax (n+t))$ w.h.p., if the item sizes are integers in range $[1,\smax ]$.
\end{theorem}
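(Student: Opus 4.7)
The plan is to verify that Algorithm~\ref{alg:sizelimit} is both correct and fast whenever the concentration estimate of Lemma~\ref{lemma:sizelimit} holds at every level of its binary-tree merging, and to argue that this happens with high probability by a union bound. First I would dispose of the easy case $\sum_i s_i \le t$ by taking every item; otherwise fix an arbitrary optimal set $O \subseteq [n]$ whose total size lies in $[t-\smax+1,\, t]$. For any level-$\ell$ group $G$ (the union of $2^\ell$ consecutive level-$0$ buckets), each item lands in $G$ independently with probability $p_\ell = 2^\ell\smax/t$, so applying Lemma~\ref{lemma:sizelimit} to the set of items of $O$ (whose sizes lie in $[1,\smax]$) gives, w.h.p.,
\[
\Bigl|\sum_{i \in O \cap G} s_i - 2^\ell\smax\Bigr| \le \zarib\sqrt{\smax \cdot p_\ell \cdot \textstyle\sum_{i\in O}s_i} = \tildorder\!\bigl(\smax\sqrt{2^\ell}\bigr).
\]
A union bound over the $\tildorder(t/\smax)$ groups appearing across all $\lceil\log(t/\smax)\rceil$ levels keeps the total failure probability below $n^{-9}$.

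The implementation then restricts each stored vector $x_{\ell,j}$ to the window
\[
W_\ell = \bigl[\mu_\ell - D_\ell,\ \mu_\ell + D_\ell\bigr], \qquad \mu_\ell = 2^\ell\smax, \qquad D_\ell = \tildorder\!\bigl(\smax\sqrt{2^\ell}\bigr),
\]
with $W_0 = [0, (\zarib+2)\smax]$ matching the initialization in Line~\ref{algorithm:alg1}. Merging is carried out by the naive $O(|W_\ell|^2)$-time $(\max,+)$ convolution of the two restricted vectors, followed by truncation to $W_{\ell+1}$. Correctness on the good event is then immediate: every entry produced corresponds to a feasible knapsack (so $\text{ALG}\le\text{OPT}$), and conversely the entry of $x_{\ell,G}$ at coordinate $\sum_{i\in O\cap G}s_i$ lies inside $W_\ell$ for every $G$, so it survives every merge and the value $\sum_{i\in O}v_i$ is reported by the final maximum (so $\text{ALG}\ge\text{OPT}$).

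For the running time, the initialization runs the classical $O(n_G\,\smax)$ dynamic program in each of the $t/\smax$ buckets up to knapsack size $(\zarib+2)\smax$, costing $\tildorder(n\,\smax)$ in total. At level $\ell+1$ each of the $O(t/(\smax\cdot 2^{\ell+1}))$ merges costs $\tildorder(D_\ell^2) = \tildorder(\smax^2\cdot 2^\ell)$, so each level contributes $\tildorder(\smax\, t)$ and the $\lceil\log(t/\smax)\rceil$ levels together contribute $\tildorder(\smax\, t)$. Combined with initialization, the total cost is $\tildorder(\smax(n+t))$.

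The main obstacle will be coordinating the window widths across levels: they must be simultaneously wide enough for Lemma~\ref{lemma:sizelimit} to localize $\sum_{i\in O\cap G}s_i$ inside $W_\ell$ with only $n^{-10}$ failure probability per group, yet narrow enough that the telescoping sum $\sum_\ell (\text{\# level-$\ell$ merges})\cdot D_\ell^2$ remains $\tildorder(\smax\, t)$. The square-root growth $D_\ell = \tildorder(\smax\sqrt{2^\ell})$ is precisely what makes both requirements consistent; one also has to check the minor subtlety that a convolution of two width-$2D_\ell$ windows has support of width up to $4D_\ell$, which slightly exceeds $2D_{\ell+1} \approx 2\sqrt{2}\,D_\ell$, but truncating to $W_{\ell+1}$ is safe because the concentration bound at level $\ell+1$ already guarantees that the optimal entry lies in $W_{\ell+1}$.
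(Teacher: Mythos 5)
Your proposal is correct and follows essentially the same route as the paper's proof: random distribution into $t/\smax$ buckets, per-bucket dynamic programming up to size $\tildorder(\smax)$, and binary-tree merging where Lemma~\ref{lemma:sizelimit} localizes the optimal solution's contribution to a window of width $\tildorder(\smax\sqrt{2^\ell})$ at level $\ell$, so each merge costs $\tildorder(\smax^2 2^\ell)$ and each level $\tildorder(\smax t)$. Your explicit two-sided correctness argument (feasibility of every stored entry plus survival of the optimum's coordinate under truncation) and the union bound over $\tildorder(t/\smax)$ groups are, if anything, slightly more careful than the paper's own treatment of the failure probability.
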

\begin{proof}
	We assume w.l.o.g. that the total size of the items is at least $t$ and thus the solution size is in range $[t-\smax +1,t]$.
	As outlined earlier, we randomly put the items into $t/\smax $ buckets. Based on Lemma~\ref{lemma:sizelimit}, the expected size of the solution in each bucket is in range $[\smax -1, \smax ]$. Therefore, by Lemma~\ref{lemma:sizelimit} w.h.p. the size of the solution in every bucket is at most $\smax +\tildorder(\smax ) = \tildorder(\smax )$. Therefore, for each bucket with $n_i$ items we can compute the solution up to size $\tildorder(\smax )$ in time $\tildorder(\smax  n_i )$. Since $\sum n_i = n$, the total running time of this step is $\tildorder(\smax n)$.
	
	We merge the solutions in $\log (t/\smax )$ rounds. In every round $i$, we make $t/\smax /2^i$ merges each corresponding to the solutions of $2^i$ buckets. By Lemma~\ref{lemma:sizelimit}, the range of the solution size in every merge is $[\smax 2^i - \tildorder(\sqrt{\smax ^22^i}), \smax 2^i + \tildorder(\sqrt{\smax ^22^i})]$ w.h.p. Thus, every merge takes time $\smax ^22^i$. Moreover, in every round $i$ the number of merges is $t/\smax /2^i$. Therefore, the total running time of each phase is $\tildorder(\smax t)$ and thus the algorithm runs in time $\tildorder(\smax (n+t))$. In order to show our solution is correct with probability at least $1-n^{-10}$, we argue that we make at most $n$ merges and therefore the total error of our solution is at most $nn^{-10} = n^{-9}$. Thus, if we run  Algorithm~\ref{algorithm:alg1} twice and output the better of the generated answers, our error is bounded by $2(n^{-9})^2 = n^{-18}/2 \leq n^{-10}$ and thus the output is correct with probability at least $1-n^{-10}$.
\end{proof}

As a corollary of Theorem~\ref{theorem:limitedsize}, we can also solve the unbounded knapsack problem in time $\tildorder(\smax (n+t))$ if the sizes of the items are bounded by $\smax $.

\begin{corollary}[of Theorem~\ref{theorem:limitedsize}]\label{corollary:unboundedlimitedsize}
	There exists a randomized  $\tildorder(\smax (n+t))$ time algorithm that solves the unbounded knapsack problem w.h.p. when the sizes are bounded by $\smax$.
\end{corollary}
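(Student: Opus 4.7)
The plan is to reduce the unbounded knapsack problem to a 0/1 instance and invoke Theorem~\ref{theorem:limitedsize} as a black box. First I would preprocess the items: among all items sharing a given integer size $s \in [1, \smax]$, only the one with the largest value is ever used in an optimal solution (swapping in the max-value item of size $s$ never decreases the objective), so I would collapse the input to at most $\min(n, \smax)$ representative items in $O(n)$ time.

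Next I would construct a 0/1 knapsack instance from these representatives by creating $\lfloor t/s \rfloor$ identical copies of each representative $(s, v)$. No feasible unbounded solution places more than $\lfloor t/s \rfloor$ copies of an item of size $s$ into a knapsack of size $t$, so any multiplicity attainable in the unbounded problem is realized by choosing an appropriate subset of the copies in the 0/1 instance, and vice versa; hence the two instances share the same optimum. Every created copy still has size in $[1, \smax]$, and the total number of copies is
\[
\sum_{s=1}^{\smax} \left\lfloor \frac{t}{s} \right\rfloor \;\le\; t \, H_{\smax} \;=\; O(t \log \smax) \;=\; \tildorder(t),
\]
so the transformed instance has $n' = \tildorder(t)$ items all of whose sizes are bounded by $\smax$.

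Applying Theorem~\ref{theorem:limitedsize} to this 0/1 instance yields a randomized algorithm running in time $\tildorder(\smax(n' + t)) = \tildorder(\smax t)$ and correct with high probability. Together with the $O(n)$ preprocessing, the total running time is $O(n) + \tildorder(\smax t) = \tildorder(\smax(n+t))$, matching the claim. The only minor subtlety I anticipate is that the high-probability guarantee in Theorem~\ref{theorem:limitedsize} is stated in terms of its own input size $n'$; since $n$ could be larger, I would either invoke Theorem~\ref{theorem:limitedsize} with a slightly larger polynomial exponent in its internal failure bound, or run it $O(1)$ times and keep the best answer, both of which are absorbed into the $\tildorder$ notation. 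Beyond this cosmetic issue, the reduction is direct and I do not foresee a substantive obstacle.
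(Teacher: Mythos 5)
Your proposal is correct and matches the paper's own proof essentially step for step: keep only the most valuable item of each size, duplicate it $\lfloor t/s \rfloor$ times to obtain a 0/1 instance with $O(t \log \smax)$ items of sizes still bounded by $\smax$, and invoke Theorem~\ref{theorem:limitedsize}, giving $\tildorder(\smax t)$ plus the $O(n)$ reduction time. Your extra remark about restating the failure probability in terms of the original $n$ is a harmless refinement the paper glosses over.
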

\begin{proof}
The crux of the argument is that in an instance of the unbounded knapsack problem if the sizes of two items are equal, we never use the item with the smaller value in our solution. Thus, this leaves us with $\smax $ different items. We also know that we use each item of size $s_i$ at most $\lfloor t/s_i \rfloor$ times and thus if we copy the most profitable item of each size $s_i$, $\lfloor t/s_i \rfloor$ times, this gives us an instance of the 0/1 knapsack problem with $O(t \log \smax )$ items. Using the algorithm of Theorem~\ref{theorem:limitedsize} we can solve this problem in time $\tildorder(\smax t)$. Since the reduction takes time $O(n)$ the total running time is $\tildorder(\smax (n+t))$.
\end{proof}

Using the same idea, one can also solve the problem in time $\tildorder((n+t)\smax)$ when each item has a given multiplicity.
\newpage
 \section{Strongly Polynomial Time Algorithms for Knapsack with Multiplicities}\label{sec:knapsack-multiplicity}
In this section, we study the knapsack problem where items have multiplicities. We assume throughout this section that the sizes of the items are bounded by $\smax$. More precisely, for every item $(s_i, v_i)$, $m_i$ denotes the number of copies of this item that can appear in any solution. We show that when all the sizes are integers bounded by $\smax$, one can solve the problem in time $\tildorder(n \smax ^2 \min\{n,\smax\})$. Notice that this running time is independent of $t$ and thus our algorithm runs in strongly polynomial time. This result improves upon the $O(n^3 \smax ^2)$ time algorithm of~\cite{tamir2009new}. 

We begin, as a warm-up, by considering the case where $m_i = \infty$ for all items. We show that in this case, the $O(n^2 \smax ^2)$ time algorithm of~\cite{tamir2009new} can be improved to an $\tildorder(n\smax +  \smax^2 \min\{n,\smax\})$ time algorithm. Before we explain our algorithm, we state a mathematical lemma that will be later used in our proofs.

\begin{lemma}\label{lemma:math}
    Let $S$ be a subset of items with integer sizes. If $|S| \geq k$ then there exists a non-empty subset of $S$ whose total size is divisible by $k$.
\end{lemma}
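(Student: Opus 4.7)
The plan is to prove this via the classical pigeonhole argument on prefix sums. First, I would fix any enumeration $s_1, s_2, \ldots, s_m$ of the items of $S$ (where $m = |S| \geq k$) and pick the first $k$ of them. Then I would define the prefix sums $P_0 = 0$ and $P_j = s_1 + s_2 + \cdots + s_j$ for $1 \leq j \leq k$. This gives $k+1$ integers.

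Next, I would consider these $k+1$ prefix sums modulo $k$. Since there are only $k$ possible residues in $\{0, 1, \ldots, k-1\}$, by the pigeonhole principle there must exist indices $0 \leq i < j \leq k$ with $P_i \equiv P_j \pmod{k}$. Then $P_j - P_i = s_{i+1} + s_{i+2} + \cdots + s_j$ is divisible by $k$, and since $i < j$ the corresponding subset $\{s_{i+1}, \ldots, s_j\} \subseteq S$ is non-empty.

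There is essentially no obstacle here: the only thing to verify is that the subset $\{s_{i+1}, \ldots, s_j\}$ is non-empty (which follows from $i < j$) and that its total size is a positive multiple of $k$ (in particular divisible by $k$, which is what the statement requires; note that the argument does not require sizes to be positive, only integer). An equivalent shortcut is to observe that if some $P_j$ itself is already divisible by $k$, we can take $i = 0$; otherwise all $P_j$ for $1 \leq j \leq k$ lie among the $k-1$ non-zero residues modulo $k$, and pigeonhole produces two equal ones as above. Either way, the subset obtained has total size divisible by $k$, completing the proof.
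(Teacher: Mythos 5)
Your proof is correct and is essentially the same argument the paper gives: both take $k$ items, form the $k+1$ prefix sums starting from $0$, and apply pigeonhole to their residues modulo $k$ to extract a consecutive block whose total size is divisible by $k$. The only (minor) difference is that you note positivity of sizes is not needed, whereas the paper implicitly treats the prefix sums as strictly increasing; this does not affect the argument.
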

\begin{proof}
    Select $k$ items of $S$ and give them an arbitrary ordering. Let $s_i$ be the total size of the first $i$ items in this order. Therefore, $0 = s_0 < s_1 < s_2 < \ldots < s_{k}$ holds. By pigeonhole principal, from set $\{s_0, s_1, \ldots, s_k\}$ two numbers have the same remainder when divided by $k$. Therefore, for some $i < j$ we have $s_i \textsf{ Mod } k = s_ j \textsf{ Mod } k$. This means that the total size of the items in positions $i+1$ to $j$ is divisible by $k$.
\end{proof}

When all multiplicities are infinity, our algorithm is as follows: define $\best := \arg \max v_i / s_i$ to be the index of an item with the highest ratio of $v_i / s_i$ or in other words, the most profitable item. We claim that there always exists an optimal solution for the knapsack problem in which the total size of all items except $(s_\best, v_\best)$ is bounded by $\smax ^2$.

\begin{lemma}\label{lemma:best}
    Let $\ii$ be an instance of the knapsack problem where the multiplicity of every item is equal to infinity and let $(s_\best, v_\best)$ be an item with the highest ratio of $v_i / s_i$. There exists an optimal solution for $\ii$ in which the number of items except $(s_\best, v_\best)$ is smaller than $s_\best$.
\end{lemma}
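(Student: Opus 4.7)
The plan is a standard exchange argument driven by Lemma~\ref{lemma:math}. Starting from an arbitrary optimal solution, I will repeatedly swap a sub-multiset of non-$\best$ items whose total size is a multiple of $s_\best$ for the corresponding number of copies of $\best$. Since $\best$ maximizes $v_i/s_i$ and has infinite multiplicity, each swap is feasible and does not decrease the value, while strictly reducing the count of non-$\best$ items. The process must terminate before this count reaches $s_\best$.

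More concretely, let $S^{*}$ be an optimal solution (a multiset of items) and let $T \subseteq S^{*}$ be the sub-multiset consisting of all items other than $\best$. Assume toward contradiction that $|T| \geq s_{\best}$. Treating $T$ as an indexed sequence and applying Lemma~\ref{lemma:math} with $k = s_{\best}$, I obtain a non-empty sub-multiset $T' \subseteq T$ whose total size equals $q \, s_{\best}$ for some positive integer $q$. I then form a new solution $S'$ by deleting $T'$ from $S^{*}$ and adding $q$ copies of $\best$. The total size of $S'$ equals that of $S^{*}$, so $S'$ still fits in the knapsack; infinite multiplicities make the extra $\best$ copies legal.

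For the value change, the choice $\best := \arg\max v_i/s_i$ yields $v_i \leq (v_{\best}/s_{\best}) s_i$ for every $i \in T'$, so
\[
\sum_{i \in T'} v_i \;\leq\; \frac{v_{\best}}{s_{\best}} \sum_{i \in T'} s_i \;=\; \frac{v_{\best}}{s_{\best}} \cdot q \, s_{\best} \;=\; q \, v_{\best}.
\]
Hence $\operatorname{val}(S') - \operatorname{val}(S^{*}) = q \, v_{\best} - \sum_{i \in T'} v_i \geq 0$, so $S'$ is also optimal. Moreover, the number of non-$\best$ items in $S'$ is $|T| - |T'| < |T|$ (we removed $|T'| \geq 1$ non-$\best$ items and added only $\best$ copies). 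Iterating the argument on $S'$, the non-$\best$ count strictly decreases at each step and thus drops below $s_{\best}$ in finitely many steps, giving the desired optimal solution.

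The only delicate point is treating the multiset $T$ correctly when invoking Lemma~\ref{lemma:math}: copies of the same item must be viewed as distinct elements so that $|T|$ counts with multiplicity, which is exactly what the lemma needs. Everything else is immediate from the definition of $\best$ and the infinite-multiplicity assumption, so I do not expect any real obstacle beyond this bookkeeping.
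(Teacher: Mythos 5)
Your proof is correct and follows essentially the same route as the paper: an exchange argument that repeatedly applies Lemma~\ref{lemma:math} with $k = s_\best$ to extract a sub-multiset of non-$\best$ items with total size a multiple of $s_\best$ and replaces it by copies of $(s_\best, v_\best)$, which preserves optimality by the maximality of $v_\best/s_\best$. Your version merely spells out the value comparison and the multiset bookkeeping that the paper leaves implicit.
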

\begin{proof}
    We begin with an arbitrary optimal solution and modify the solution until the condition of the lemma is met. Due to Lemma~\ref{lemma:math}, every set $S$ with at least $s_\best$ items contains a subset whose total size is divisible by $s_\best$. Therefore, until the number of items other than $(s_\best, v_\best)$ drops below $s_\best$, we can always find a subset of such items whose total size is divisible by $s_\best$. Next, we replace this subset with multiple copies of $(s_\best, v_\best)$ with the same total size. Since $v_\best/s_\best$ is the highest ratio over all items, the objective value of the solution doesn't hurt, and thus it remains optimal.
\end{proof}

Since $s_i \leq \smax$ holds for all items, Lemma~\ref{lemma:best} implies that in such a solution, the total size of all items except $(s_\best, v_\best)$ is bounded by $\smax ^2$.  This implies that at least $\max\{0,\lfloor(t-\smax^2)/s_\best\rfloor\}$ copies of item $(s_\best, v_\best)$ appear in an optimal solution. Thus, one can put these items into the knapsack and solve the problem for the remaining space of the knapsack. Let the remaining space be $t'$ which is bounded by $\smax ^2 + \smax$. Therefore, the classic $O(nt')$ time algorithm for knapsack finds the solution in time $O(n \smax ^2)$. Also, by Theorem~\ref{theorem:limitedsize}, one can solve the problem in time $\tildorder((n+t')\smax) = \tildorder(n \smax + \smax^3)$. Thus, the better of two algorithms runs in time $\tildorder(n\smax + \smax^2 \min\{n,\smax\})$. This procedure is shown in Algorithm~\ref{alg:infinitymultiplicitiesalgorithm}.
 \begin{algorithm}
     \KwData{A knapsack size $t$ and $n$ items with sizes and values $(s_i,v_i)$. $m_i = \infty$ and $s_i \leq \smax$ hold for all $1 \leq i \leq n$}
     \KwResult{The solution of the knapsack problem for knapsack size $t$}

        $\best \leftarrow \arg \max v_i / s_i$\;
	    $\mathsf{cnt} \leftarrow \max\{0, \lfloor (t-\smax^2)/s_\best\}\rfloor$\;
        $t' \leftarrow t - \mathsf{cnt} \cdot s_\best$\;
        \If{$n \leq \smax$}{
	        \textbf{Report} $\mathsf{cnt} \cdot v_\best + \classicknapsackalgorithm(t',n,\{(s_1,t_1),(s_2,t_2),\ldots,(s_n,t_n)\},\{m_1,m_2,\ldots,m_n\})$\;
	    }\Else{
			\textbf{Report} $\mathsf{cnt} \cdot v_\best + \knapsackforsmallsizes(t',n,\{(s_1,t_1),(s_2,t_2),\ldots,(s_n,t_n)\},\{m_1,m_2,\ldots,m_n\})$\;
		}

     \caption{$\infinitymultiplicitiesalgorithm$}\label{alg:infinitymultiplicitiesalgorithm}
 \end{algorithm}

\begin{theorem}\label{theorem:strong1}
    When $s_i \in  [\smax]$ and $m_i = \infty$ hold for every item, Algorithm~\ref{alg:infinitymultiplicitiesalgorithm} computes the solution of the knapsack problem in time $\tildorder(n\smax + \smax^2 \min\{n,\smax\})$.
\end{theorem}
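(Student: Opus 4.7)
My plan is to establish correctness by showing that after Algorithm~\ref{alg:infinitymultiplicitiesalgorithm} commits $\mathsf{cnt}$ copies of the most profitable item $(s_\best,v_\best)$, some optimal solution of the original instance restricts to an optimal solution of the residual knapsack of capacity $t' = t - \mathsf{cnt}\cdot s_\best$, and then to bound the running time by analyzing the two branches (the classical DP when $n\leq\smax$ and Theorem~\ref{theorem:limitedsize} otherwise) separately.

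For correctness, I will start from the structured optimum guaranteed by Lemma~\ref{lemma:best}, whose non-best items number at most $s_\best-1$ and therefore occupy total size
\[
\sigma \;\leq\; (s_\best-1)\smax \;\leq\; \smax^2-\smax.
\]
Assuming $v_\best>0$ (otherwise the instance is trivial), greedy maximality of this optimum forces the number $k$ of copies of the best item to be exactly $\lfloor(t-\sigma)/s_\best\rfloor$, since any strictly smaller value would leave room for one more copy and strictly increase the value. Combining with the bound on $\sigma$ yields $k\geq\lfloor(t-\smax^2)/s_\best\rfloor=\mathsf{cnt}$, so the chosen optimum already contains $\mathsf{cnt}$ copies of the best item, and its restriction to the residual instance fits in size $(k-\mathsf{cnt})s_\best+\sigma\leq t'$. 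Conversely, any feasible residual solution extended by $\mathsf{cnt}$ copies of the best item is feasible for the original instance, so the two optima are related by exactly $\mathsf{cnt}\cdot v_\best$, which is what the algorithm adds.

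For the running time I will use $t'\leq\smax^2+s_\best\leq\smax^2+\smax$. In the $n\leq\smax$ branch, the classical $O(nt')$ dynamic program runs in $O(n\smax^2)$ time; since $\min\{n,\smax\}=n$ here, this matches $\tildorder(n\smax+\smax^2\min\{n,\smax\})$. In the $n>\smax$ branch, Theorem~\ref{theorem:limitedsize} solves the residual instance in $\tildorder(\smax(n+t'))=\tildorder(n\smax+\smax^3)$ time w.h.p., which again matches the claimed bound because $\smax^2\min\{n,\smax\}=\smax^3$ in this regime.

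The main obstacle is the step $k\geq\mathsf{cnt}$: with only the weaker bound $\sigma<\smax^2$ the floor in the definition of $\mathsf{cnt}$ might cost a unit and yield $k=\mathsf{cnt}-1$. The resolution is to exploit the sharper bound $\sigma\leq\smax^2-\smax$, obtained by combining the item count bound $s_\best-1$ from Lemma~\ref{lemma:best} with the size bound $\smax$; the extra $+\smax$ in $t-\sigma$ absorbs the floor. A minor additional check is the boundary case $t<\smax^2$, where $\mathsf{cnt}=0$, the algorithm solves the original instance directly with $t'=t$, and both arguments go through verbatim.
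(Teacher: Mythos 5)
Your proposal is correct and takes essentially the same route as the paper's proof: invoke Lemma~\ref{lemma:best} to bound the total size of non-$\best$ items in some optimum by roughly $\smax^2$, conclude that at least $\mathsf{cnt}$ copies of $(s_\best,v_\best)$ appear in an optimal solution so that the residual capacity $t'$ is at most $\smax^2+\smax$, and then charge the two branches to $O(nt')$ and to the $\tildorder(\smax(n+t'))$ algorithm of Theorem~\ref{theorem:limitedsize} (more precisely, its extension to multiplicities, Corollary~\ref{corollary:unboundedlimitedsize}), giving $\tildorder(n\smax+\smax^2\min\{n,\smax\})$. You merely spell out the exchange/greedy-maximality and floor bookkeeping that the paper leaves implicit (in fact monotonicity of the floor already gives $k\geq\mathsf{cnt}$ from $\sigma\leq\smax^2$, so the sharper bound $\sigma\leq\smax^2-\smax$ is not even needed).
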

\begin{proof}
    The main ingredient of this proof is Lemma~\ref{lemma:best}. According to Lemma~\ref{lemma:best}, there exists a solution in which apart from $(s_\best, v_\best)$ type items, the total size of the remaining items is bounded by $\smax ^2$. Therefore, we are guaranteed that at least $\mathsf{cnt}$ copies of item $(s_\best, v_\best)$ appear in an optimal solution of the problem. Thus, the remaining space of the knapsack ($t'$) is at most $\smax ^2 + \smax$ and therefore Algorithm~\ref{alg:infinitymultiplicitiesalgorithm} solves the problem in time $\tildorder(n \smax + \smax^2 \min\{n, \smax\})$.
\end{proof}

Next, we present our algorithm for the general case where every multiplicity $m_i \geq 1$ is a given integer number. Our solution for this case runs in time $\tildorder(n \smax ^2 \min\{n, \smax\})$. We assume w.l.o.g. that $t \geq \smax ^2$, otherwise the better of the classic knapsack algorithm and our limited size knapsack algorithm solves the problem in time $\tildorder(n \smax + \smax ^2 \min\{n, \smax\})$. In addition to this, we assume that the items are sorted in decreasing order of $v_i / s_i$, that is
$$v_1 / s_1 \geq v_2 / s_2 \geq \ldots \geq v_n / s_n.$$
We define $t' = t-\smax ^2$ to be a smaller knapsack size which is less than $t$ by an additive factor of $\smax^2$. We construct a pseudo solution for the smaller knapsack problem, by putting the items one by one into the smaller knapsack (of size $t'$) greedily. We stop when the next item does not fit into the knapsack. Let $b_i$ be the number of copies of item $(s_i, v_i)$ in our pseudo solution for the smaller knapsack problem. In what follows, we show that there exists an optimal solution for the original knapsack problem such that if $b_i \geq \smax$ holds for some item $(s_i, v_i)$, then at least $b_i - \smax$ copies of $(s_i, v_i)$ appear in this solution.

\begin{lemma}\label{lemma:stronggood}
	Let $b_i$ denote the number of copies of item $(s_i, v_i)$ in our pseudo solution for the smaller knapsack problem. There exists an optimal solution for the original knapsack problem that contains at least $b_i - \smax$ copies of each item $(s_i, t_i)$ such that $b_i \geq \smax$.
\end{lemma}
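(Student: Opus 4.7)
The plan is to take any optimal solution $X^*=(x_1^*,\ldots,x_n^*)$ and modify it, without decreasing its value, into one satisfying $x_i^*\ge b_i-\smax$ whenever $b_i\ge\smax$. The modifications are exchange steps driven by Lemma~\ref{lemma:math}. Fix an index $i$ with $b_i\ge\smax$ and $x_i^*<b_i-\smax$, and let $T$ be the multiset of items of $X^*$ whose index exceeds $i$. If $|T|\ge s_i$, I take any $s_i$ of them and apply the pigeonhole argument from the proof of Lemma~\ref{lemma:math} to the prefix sums of their sizes modulo $s_i$; this yields a non-empty sub-multiset $T'\subseteq T$ with total size $k\cdot s_i$ for some integer $1\le k\le\smax$ (the upper bound holds because those $s_i$ chosen items together have size at most $s_i\cdot\smax$). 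Delete $T'$ from $X^*$ and add $k$ fresh copies of item $i$. The total occupied size is preserved, so feasibility holds; multiplicity is respected because
\[
x_i^*+k\;\le\;(b_i-\smax-1)+\smax\;<\;b_i\;\le\;m_i;
\]
and the value does not decrease because every $j\in T'$ has $j>i$ and hence $v_j/s_j\le v_i/s_i$, giving $\sum_{j\in T'}v_j\le (v_i/s_i)\cdot ks_i=kv_i$, exactly the value added.

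Each exchange strictly increases $x_i^*$, so iterating it terminates either with $x_i^*\ge b_i-\smax$ (in which case I proceed to the next index) or in the residual situation $|T|<s_i$, where the items of index $>i$ in $X^*$ occupy total size less than $s_i\cdot\smax\le\smax^2$. Processing indices $i$ in order then produces an optimal solution satisfying the lemma, provided the residual case never persists with $x_i^*<b_i-\smax$.

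The main obstacle is ruling out the residual case. Assume $v_i>0$ (the claim is trivial otherwise). Since $x_i^*<b_i\le m_i$, optimality of $X^*$ forces the free space of the knapsack to be strictly less than $s_i$, for otherwise adding a copy of item $i$ would strictly improve the value. Consequently the total size used by $X^*$ exceeds $t-s_i\ge t-\smax$, so items of index $\le i$ occupy size greater than $t-\smax-\smax^2=t'-\smax$, while $B$ satisfies $\sum_{j\le i}b_j s_j\le t'$. Subtracting $x_i^*s_i<(b_i-\smax)s_i$ from both sides and using $b_is_i\le t'-\sum_{j<i}b_js_j$ yields $\sum_{j<i}(x_j^*-b_j)s_j>0$, so some index $j_0<i$ has $x_{j_0}^*>b_{j_0}$ (and in particular $b_{j_0}<m_{j_0}$). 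To finalize the argument I would choose $X^*$ from the outset as the value-optimal solution lexicographically maximizing $(x_1^*,\ldots,x_n^*)$; under this choice, the analogous exchange step for index $j_0$ would have already been saturated, so $\sum_{j>j_0}x_j^*<s_{j_0}\le\smax$, which combined with $x_i^*\le\sum_{j>j_0}x_j^*$ and $b_i\ge\smax+1$ contradicts the strict lower bound on $\sum_{j<i}(x_j^*-b_j)s_j$ obtained above. This contradiction rules out the residual case, and processing all indices $i$ with $b_i\ge\smax$ yields the claimed optimal solution.
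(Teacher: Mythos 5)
Your exchange step is exactly the paper's: for an index $i$ with $x_i^*<b_i-\smax$, apply Lemma~\ref{lemma:math} to the items of index larger than $i$ to extract a sub-multiset of total size $ks_i$ with $1\le k\le\smax$, and swap it for $k$ copies of item $i$; your feasibility, multiplicity and value checks for this step are all correct. The gap is in guaranteeing that the exchange is always available, i.e.\ in ruling out your ``residual case''. The ingredient you are missing is a structural fact about the pseudo solution that the paper uses at precisely this point: because the pseudo solution is built greedily in decreasing order of $v_j/s_j$ and $b_i\ge\smax>0$, every earlier item type is exhausted, so $b_j=m_j$ for all $j<i$. Hence $x_j^*\le m_j=b_j$ for every $j<i$ and $\sum_{j<i}(x_j^*-b_j)s_j\le 0$, while your own computation in the residual case shows this sum is strictly positive; so the residual case simply cannot arise. (The paper phrases the same fact as: the items of index greater than $i$ occupy at least $\smax^2$ space in the optimal solution, hence number at least $\smax\ge s_i$, so Lemma~\ref{lemma:math} always applies.)

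Because you did not notice $b_j=m_j$ for $j<i$, you fall back on a lexicographically maximal optimal solution, and that patch does not go through. From $x_{j_0}^*>b_{j_0}$ you may only conclude $b_{j_0}<m_{j_0}$, not $x_{j_0}^*<m_{j_0}$; if $x_{j_0}^*+k>m_{j_0}$ for the $k$ produced by Lemma~\ref{lemma:math} (in particular if $x_{j_0}^*=m_{j_0}$), the exchange at $j_0$ is blocked by the multiplicity constraint, so lexicographic maximality does not yield $\sum_{j>j_0}x_j^*<s_{j_0}$. Moreover, even granting that bound, the final claim is not established: $\sum_{j>j_0}x_j^*<\smax$, $x_i^*\le\sum_{j>j_0}x_j^*$ and $b_i\ge\smax+1$ only give $x_i^*<\smax$, which is entirely consistent with $\sum_{j<i}(x_j^*-b_j)s_j>0$, so no contradiction is actually derived. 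Replacing this whole paragraph by the one-line observation that $b_j=m_j$ for $j<i$ closes the gap and recovers the paper's proof. (A minor caveat you share with the paper: the ``free space $<s_i$'' step needs either $v_i>0$ or an optimal solution chosen maximal under adding items; your reduction to $v_i>0$ is acceptable.)
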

\begin{proof}
	To show this lemma, we start with an optimal solution and modify it step by step to make sure the condition of the lemma is met.
	We denote the number of copies of item $(s_i, v_i)$ in our solution by $a_i$.
	In every step, we find the smallest index $i$ such that $a_i < b_i - \smax$. Notice that due to the greedy nature of our algorithm for constructing the pseudo solution and the fact that $b_i > 0$  then $b_j = m_j$ for every $j < i$. Hence, $a_j \leq m_j =  b_j$ holds for all $j \leq i$. Since at least one copy of item $(s_i, v_i)$ is not used in the optimal solution, then the unused space in the optimal solution is smaller than $s_i$. Recall that the total size of the pseudo solution is bounded by $t' = t-\smax ^2$ and since $a_j \leq b_j$ for all $j \leq i$, then the first $i$ items contribute to at most $t-\smax^2 - \smax s_i$ space units of the solution. Moreover, as we discussed above, the total size of the solution is at least $t-\smax$ and thus the rest of the items have a size of at least $\smax^2$ in our optimal solution. Therefore we have
	$$\sum_{j=i+1}^n a_j s_j \geq \smax ^2$$ and since $s_j \leq \smax$ holds, we have $\sum_{j=i+1}^n a_j \geq \smax \geq s_i$. Based on Lemma~\ref{lemma:math} there exists a subset of these items whose total size is divisible by $s_i$ and thus we can replace them with enough (and at most $\smax$) copies of item $(s_i, v_i)$ without hurting the solution. At the end of this step $a_i$ increases and all $a_j$ for $j < i$ remain intact. Therefore after at most $\sum b_i$ steps, our solution has the desired property.
\end{proof}

What Lemma~\ref{lemma:stronggood} suggests is that although our pseudo solution may be far from the optimal, it gives us important information about the optimal solution of our problem. If our pseudo solution uses all copies of items, it means that all items fit into the knapsack and therefore the solution is trivial. Otherwise, we know that the total size of the pseudo solution is at least $t' - \smax = t - \smax ^2 - \smax$. Based on Lemma~\ref{lemma:stronggood}, for any item with $b_i \geq \smax$ we know that at least $b_i - \smax$ copies of this item appear in an optimal solution of our problem. Therefore, we can decrease the multiplicity of such items by $b_i - \smax$ and decrease the knapsack size by $(b_i - \smax) s_i$. We argue that after such modifications, the remaining size of the knapsack is at most $\smax + \smax^2 + n \smax ^2$. Recall that the total size of the pseudo solution is at least $t - \smax ^2 - \smax$ and therefore
$\sum b_i s_i \geq t - \smax ^2 - \smax$. This implies that
\begin{equation*}
\begin{split}
\sum \max\{0, b_i - \smax\} s_i & \geq \sum (b_i - \smax) s_i \\
&= \sum b_i s_i - \sum \smax s_i \\
& \geq [t - \smax ^2 - \smax] - \sum \smax s_i \\
&\geq  [t - \smax ^2 - \smax] - \sum \smax ^2 \\
& = [t - \smax ^2 - \smax] - n \smax ^2\\
& = t - \smax - (n+1) \smax ^2
\end{split}
\end{equation*}
Therefore, after the above modifications, the remaining size of the knapsack is at most $\smax + (n+1) \smax ^2$. Thus, we can solve the problem in time $\tildorder(n \smax^3)$ using Lemma~\ref{theorem:limitedsize} and solve the problem in time $\tildorder(n^2 \smax^2)$ using the classic knapsack algorithm. This procedure is explained in details in Algorithm~\ref{alg:givenmultiplicitiesalgorithm}.

 \begin{algorithm}
	\KwData{A knapsack size $t$ and $n$ items with sizes and values $(s_i,v_i)$. $n$ multiplicities $m_1, m_2, \ldots, m_n$. $s_i \leq \smax$ holds for all $1 \leq i \leq n$}
	\KwResult{The solution of the knapsack problem for knapsack size $t$}
	$t' \leftarrow \max\{0, t - \smax ^2 \}$\;
	\For {$i \in [1,n]$}{
		$b_i \leftarrow \min\{m_i, \lfloor t' / s_i \rfloor\}$\;
		$t' \leftarrow t' - b_i s_i$\;
		\If{$b_i \neq m_i$}{
			\textbf{break}\;
		}
	}
	$t'' \leftarrow t$\;
	$\mathsf{surplus} \leftarrow 0$\;
	\For {$i \in [1,n]$}{
		$t'' \leftarrow t'' - \max\{0,b_i - \smax\} s_i$\;
		$m'_i \leftarrow m_i - \max\{0,b_i - \smax\}$ \;
		$\mathsf{surplus} \leftarrow \mathsf{surplus} + \max\{0,b_i - \smax\} v_i$\;
	}
	\If{$n \leq \smax$}{
		\textbf{Report} $\mathsf{surplus} + \classicknapsackalgorithm(t'',n,\{(s_1,t_1),(s_2,t_2),\ldots,(s_n,t_n)\},\{m'_1,m'_2,\ldots,m'_n\})$\;
	}\Else{
		\textbf{Report} $\mathsf{surplus} + \knapsackforsmallsizes(t'',n,\{(s_1,t_1),(s_2,t_2),\ldots,(s_n,t_n)\},\{m'_1,m'_2,\ldots,m'_n\})$\;
	}

	\caption{$\givenmultiplicitiesalgorithm$}\label{alg:givenmultiplicitiesalgorithm}
\end{algorithm}
  
\begin{theorem}\label{theorem:strong2}
Algorithm~\ref{alg:givenmultiplicitiesalgorithm} solves the knapsack problem in time $\tildorder(n  \smax^2 \min\{n,\smax\})$ when the sizes of the items are integers in range $[1,\smax]$ and each item has a given integer multiplicity.
\end{theorem}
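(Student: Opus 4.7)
The plan is to reduce the theorem to the three pieces already assembled in the preceding discussion: the structural guarantee of Lemma~\ref{lemma:stronggood}, the arithmetic bound on the residual knapsack size $t''$, and the choice between the classic $O(n t'')$ dynamic program and the limited-size algorithm of Theorem~\ref{theorem:limitedsize}. First I would dispatch the easy regime $t < \smax^2$: here $t'$ is clipped to $0$, no copies are prefixed, and the residual instance is just the original one with $t'' = t \leq \smax^2$. In this regime either $n \le \smax$ (so \classicknapsackalgorithm\ runs in $O(nt'') = O(n\smax^2) \le O(n\smax^2\min\{n,\smax\})$) or $n > \smax$ (so Theorem~\ref{theorem:limitedsize} runs in $\tildorder(\smax(n+t'')) = \tildorder(n\smax + \smax^3) \le \tildorder(n\smax^2\min\{n,\smax\})$). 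So the interesting case is $t \geq \smax^2$, which is what the algorithm is designed for.

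For correctness in the main regime, I would argue as follows. The greedy loop that defines the $b_i$'s exactly reproduces the pseudo-solution referenced in Lemma~\ref{lemma:stronggood}: items are processed in non-increasing order of $v_i/s_i$, and for each we take as many copies as fit into the remaining part of the smaller knapsack of size $t' = t-\smax^2$. By Lemma~\ref{lemma:stronggood}, there exists an optimal solution to the original instance that uses at least $\max\{0, b_i-\smax\}$ copies of every item~$i$. Hence committing exactly that many copies up front (subtracting their size from $t$ to obtain $t''$, their value to form \textsf{surplus}, and their count from $m_i$ to obtain $m'_i$) preserves optimality. The residual subproblem is a standard knapsack instance on the same $n$ item types with the reduced multiplicities $m'_i$, and its optimum plus \textsf{surplus} equals the optimum of the original.

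For the size of the residual, I would reproduce the estimate already sketched in the text: if the loop terminated with $b_i = m_i$ for all $i$ the whole item set fits and we are done, so we may assume the loop broke, which forces the total size $\sum b_i s_i$ to be at least $t' - \smax = t - \smax^2 - \smax$. Since $\sum_i \smax s_i \le n\smax^2$, summing $(b_i - \smax)s_i$ only over indices where $b_i \geq \smax$ gives a lower bound of $t - \smax - (n+1)\smax^2$, so $t'' \leq \smax + (n+1)\smax^2 = O(n\smax^2)$. The pre-processing loops take time $O(n)$.

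Finally, I would combine the two solvers as in the algorithm. If $n \le \smax$, the classic $O(n t'')$ dynamic program runs in $O(n \cdot n\smax^2) = O(n^2 \smax^2) = \tildorder(n\smax^2 \min\{n,\smax\})$. If $n > \smax$, Theorem~\ref{theorem:limitedsize} runs in $\tildorder(\smax(n+t'')) = \tildorder(n\smax^3) = \tildorder(n\smax^2\min\{n,\smax\})$. In both branches the bound of the theorem is matched, so the overall running time is $\tildorder(n\smax^2\min\{n,\smax\})$. I do not anticipate a serious obstacle; the only point that deserves mild care is checking that $\max\{0, b_i - \smax\} \le m_i$ (so that $m'_i \ge 0$ and the commitment is feasible), which follows from $b_i \le m_i$ and the non-negativity of $\smax$.
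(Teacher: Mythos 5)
Your proposal is correct and follows essentially the same route as the paper's proof: invoke Lemma~\ref{lemma:stronggood} to justify committing $\max\{0,b_i-\smax\}$ copies of each item up front, use the bound $t''\leq \smax+(n+1)\smax^2$ derived from the greedy pseudo-solution, and then run the classic $O(nt'')$ dynamic program when $n\leq\smax$ and the algorithm of Theorem~\ref{theorem:limitedsize} when $n>\smax$. Your explicit treatment of the $t<\smax^2$ regime and the feasibility check $\max\{0,b_i-\smax\}\leq m_i$ only spell out details the paper leaves implicit.
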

\begin{proof}
The proof is based on Lemma~\ref{lemma:stronggood}. After determining the values of vector $b'$, we know that for each item $(s_i, t_i)$ at least $b_i - \smax$ copies appear in the solution. Thus, we can remove the space required by these items and reduce the knapsack size. As we discussed before, after all these modifications, the new knapsack size ($t''$) is bounded by $\smax +  (n+1) \smax ^2$ and thus the better of the classic knapsack algorithm and the algorithm of Section~\ref{sec:sizelimit} solve the problem in time $\tildorder(n \smax^2 \min\{n,\smax\})$.
\end{proof}
\newpage
\section{Related Convolution Problems}\label{sec:related}
In Sections~\ref{sec:01sparsity} and~\ref{sec:treeseparability} we discuss other convolution type problems that are similar to knapsack. We mentioned in the introduction that
several problems seem to be closely related to the 
\Problem{knapsack} and \Problem{convolution} problems.
We define and mention some previous results for some of those problems here.
The \FProblem{tree sparsity} problem asks for a maximum-value subtree of size $k$
from a given node-valued tree.
The best-known algorithm for the problem runs in $O(kn)$ time, which is
quadratic for $k=\Theta(n)$.  Backurs \etal~\cite{backurs2017better} show that
it is unlikely to obtain a strongly subquadratic-time algorithm for
this problem, since it implies the same for the 
\Problem{$(\min, +)$ convolution} problem.  They provide the first
single-criterion $(1+\eps)$-approximation for \Problem{tree sparsity} that
runs in near-linear time, and works on arbitrary trees.
Given a set of integers and a target, the \FProblem{subset sum} problem looks for a subset
whose sum matches the target.  To find all the \emph{realizable} integers up to $u$ takes
time $\tilde O(\min\{\sqrt nu, u^{4/3}, \sigma\})$, where $\sigma$ is
the sum of the given input integers~\cite{koiliaris2017faster},  improving upon the
simple $O(nu)$ dynamic-programming solution~\cite{Bellman:1957}.
Finding out whether a specific $t$ is realizable may be done in $\tilde O(n+t)$ randomized
time, matching certain conditional lower bounds~\cite{bringmann2017near}.
%
%
The \FProblem{least-value sequence} problem is studied in
K\"unnemann \etal~\cite{KPS17}: given a sequence of $n$ items and
a (perhaps succinctly represented) not necessarily positive
value function for every pair,
find a subsequence that minimizes the sum of values of adjacent pairs.
Several problems such as \Problem{longest chain of nested boxes}, \Problem{vector domination},
and a \Problem{coin change} problem fit in this category and are considered.
For each of these, the authors identify ``core'' problems, which help to either demonstrate hardness or
design fast algorithms.
The fastest algorithms for \FProblem{language edit distance} is based on
computing the $(\min, +)$ product of two $n\times n$ matrices, which
also solves \Problem{all pairs shortest paths}.  Bringmann \etal~\cite{BGSW16}
show that the \Problem{matrix product} can be computed in subcubic time
if one matrix has bounded differences in either rows or columns.

While \FProblem{minimum convolution}\footnote{It is also called
    $(\min,+)$ convolution, min-sum convolution, inf-convolution, infimal convolution or
    the epigraphical sum in the literature.} admits a near linear-time
$(1+\eps)$-approximation~\cite{CGIMPS16,backurs2017better},
we do not know of a strongly subquadratic-time exact algorithm for it.
The best-known algorithm runs in time $O(n^2(\log\log n)^3/\log^2n)$~\cite{BCDEHILPT12}.
Some special cases have faster algorithms, though:
$O(n)$ time for convex sequences,
and $O(n\log n)$ time for randomly permuted sequences~\cite{maragos:book,Bussieck:1994}.
Moreover additive combinatorics allows us to solve the \Problem{convolution} problem
for increasing integers bounded by $O(n)$ in randomized time $O(n^{1.859})$ and
deterministic $O(n^{1.864})$~\cite{chan2015clustered}.

Cygan \etal~\cite{cygan2017problems} study \Problem{minimum convolution} as a hardness
assumption, and identify several problems that are as hard.
First of all, \Problem{minimum convolution} is known to reduce to either
\Problem{three-sum} or \Problem{all pairs shortest paths} problem,
though no reduction in the other direction is known, and the relation
of the latter two is not known.  (The \FProblem{three-sum} problem asks whether
three elements of a given set of $n$ numbers sum to zero.)
Despite the recent progress on the subset sum problem, which is a
special case of the \Problem{$0/1$ knapsack} problem, the latter
is shown to be equivalent to \Problem{minimum convolution}.  (The former reduces
to \Problem{$(\vee,\wedge)$ convolution} that can be solved via FFT.)
A similar reduction exists for the \Problem{unbounded knapsack} problem.
\newpage
\section{Tree Separability}\label{sec:treeseparability}
0/1 knapsack, unbounded knapsack, and tree sparsity along with a few other combinatorial optimization problems have been shown to be computationally equivalent with respect to subquadratic algorithms~\cite{cygan2017problems}. In other words, a subquadratic algorithm for any problem in this list yields a subquadratic algorithm for the rest of the problems. In this section, we introduce the tree separability problem and show that this problem is indeed computationally equivalent to the rest of the problems of the list. Next, in Section~\ref{sec:treeseparability:fast}, we show that in some cases, a bounded weight tree separability problem can be solved in better than subquadratic time. This result in nature is similar to the algorithms we provide for knapsack problems.

In the tree separability problem, we are given a tree $T$ with $n$ nodes and $n-1$ edges. Every edge $e = (i,j)$ is associated with a weight $w_e$. The goal of this problem is to partition the vertices of $T$ into two (not necessarily connected) partitions of size $m$ and $n-m$ in a way that the total weight of the crossing edges is minimized. A special case of the problem where $|m - (n-m)| \leq 1$ is known as tree bisection.

\subsection{Equivalence with $(\max,+)$ Convolution}\label{sec:treeseparability:reduction}
To show a subquadratic equivalence, we first present an indirect reduction from $(\max,+)$ convolution to tree separability. We use the \textit{\textsf{MaxCov-UpperBound}} as an intermediary problem in our reduction. Cygan \etal~\cite{cygan2017problems} show that any subquadratic algorithm for \textsf{MaxCov-UpperBound} yields a subquadratic solution for $(\max,+)$ convolution. 

\begin{definition}
In the \textsf{MaxCov-UpperBound} problem, we are given two vectors $a$ and $b$ of size $n$ and a vector $c$ of size $2n-1$. The goal is to find out whether there exists an $i$ such that $(a \ttimes b)_i > c_i$.
\end{definition}

\begin{lemma}[proven in~\cite{cygan2017problems}]
    Any subquadratic solution for \textsf{MaxCov-UpperBound} yields a subquadratic solution for the $(\max,+)$ convolution.
\end{lemma}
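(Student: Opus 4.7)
The goal is to convert an $O(n^{2-\varepsilon})$-time algorithm for \textsf{MaxCov-UpperBound} into an $O(n^{2-\varepsilon'})$-time algorithm for $(\max,+)$ convolution for some $\varepsilon'>0$. Since any solution must output $\Theta(n)$ values but each oracle call returns a single bit, the reduction must be organized so that $\omega(1)$ bits of useful information are extracted per call; naive schemes that make $\tilde O(n)$ oracle calls at size $n$ will not beat $n^2$ for small $\varepsilon$.

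The plan rests on two primitives. The first is \textbf{witness extraction}: given a ``yes'' answer on $(a,b,c)$, I would pinpoint a specific index $i^*$ with $(a \ttimes b)_{i^*} > c_{i^*}$ by bisecting over the coordinates of $c$, replacing one half by $+\infty$ so those positions cannot violate the inequality and querying the oracle again; recursion yields $i^*$ in $O(\log n)$ extra calls, and an analogous bisection inside $a$ (setting half to $-\infty$) recovers an explicit witness $j^*$ with $a_{j^*} + b_{i^*-j^*} > c_{i^*}$. The second is a \textbf{parallel binary search on values}: maintain intervals $[\ell_i, u_i]$ bracketing $(a \ttimes b)_i$ for every output position, initialized trivially (e.g.\ $\ell_i := a_0 + b_i$ and $u_i := \max_j a_j + \max_k b_k$); in each round query the oracle on the coordinate-wise midpoint $c_i := \lfloor(\ell_i+u_i)/2\rfloor$. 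A ``no'' answer globally refines $u \leftarrow c$ at every coordinate simultaneously, while a ``yes'' answer is fed into witness extraction to raise $\ell_{i^*}$ at a single position.

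The main obstacle is the oracle-call count: as stated, each output coordinate may require up to $O(\log V)$ individual witness extractions, giving $\tilde O(n)$ oracle calls of size $n$, which is $\tilde O(n^{3-\varepsilon})$ overall and hence not subquadratic. To beat this I would parallelize witness extraction across positions: partition the output coordinates into $n^{\delta}$ blocks and embed $n^{\delta}$ separate sub-queries into a single oracle instance by attaching large position-dependent additive shifts to $a$ and $b$, so that the contribution of distinct blocks to $a \ttimes b$ lies in disjoint vertical strips and a single oracle answer simultaneously diagnoses which block contains an over-estimate. Within the diagnosed block one recurses to obtain a witness, trimming the global count to $\tilde O(n^{1-\delta})$ oracle calls per round and $\tilde O(n^{1-\delta}\log V)$ calls in total; balancing $\delta$ against $\varepsilon$ delivers $\varepsilon'>0$. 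The delicate part of the analysis is verifying (i) that the shifts do not blow up the input magnitude beyond $n^{o(1)}$, preserving the oracle's promised running time, and (ii) that the block separation is faithful so a single ``yes'' answer really is attributable to a unique block, which I would establish by choosing the shift spacings large relative to the current uncertainty $u_i-\ell_i$ in each round.
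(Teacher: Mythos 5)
Your first two primitives are sound and do match the machinery used in the actual reduction (note that the paper itself does not reprove this lemma; it only cites~\cite{cygan2017problems}): locating a violated output index by bisection on $c$ with one half set to $+\infty$, and a batched per-coordinate binary search over the output values in which discovered indices are ``removed'' by overwriting their target with $+\infty$. (The secondary extraction of $j^*$ inside $a$ is unnecessary; only the violated output index is needed.) The gap is in your third step, which is exactly where subquadraticity has to come from. A \textsf{MaxCov-UpperBound} oracle returns one bit, so no choice of additive shifts can make ``a single oracle answer simultaneously diagnose which block contains an over-estimate'': distinguishing among $n^{\delta}$ blocks needs $\Omega(\delta\log n)$ bits, and the ``disjoint vertical strips'' separate values inside the convolution, which the oracle never reveals. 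Even repaired by group testing, the call count does not drop as you claim: in a given round of the value search, all $n$ coordinates may be violated at their midpoints, and in your scheme each violated coordinate must be witnessed by at least one ``yes'' call before its lower bound can be raised, so the number of calls per round is $\Omega(n)$ in the worst case, independently of $\delta$ --- not $\tildorder(n^{1-\delta})$. Since every call you make is on a full-length instance costing $\Theta(n^{2-\eps})$, the scheme is $\Omega(n^{3-\eps})$, which is the bound you set out to beat; no balancing of $\delta$ against $\eps$ fixes this.

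The missing idea in the known proof (an adaptation of the Vassilevska Williams--Williams decision-to-computation framework) is not to make fewer oracle calls but to make them on \emph{small} sub-instances. Split $a$ and $b$ into blocks of length $s\approx\sqrt n$; for each of the $(n/s)^2$ block pairs, query the oracle on the length-$s$ pair against the current midpoints restricted to the corresponding output window of length $O(s)$; on a ``yes'', bisect within that window to find a violated output position, record it, overwrite its target by $+\infty$, and re-query the same pair until ``no''. Each call costs $T(s)=s^{2-\eps}$, so the $(n/s)^2$ terminating ``no'' calls cost $n^{2}/s^{\eps}$ per round, and even $\Theta(n)$ witness discoveries per round cost only $\tildorder(n\,s^{2-\eps})$; with $s=\sqrt n$ both terms are $\tildorder(n^{2-\eps/2})$ per round, and multiplying by the logarithmically many value-search rounds keeps the total subquadratic. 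Your information-theoretic framing (``must extract $\omega(1)$ bits per call'') is a red herring: the real reduction happily makes far more than $n$ oracle calls, but each on an instance small enough that the oracle's subquadratic advantage compounds, which is precisely the effect your all-full-size-calls strategy forgoes.
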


The main idea of our reduction is as follows: Given three vectors $a$, $b$, and $c$ with sizes $n$, $n$, and $2n-1$ one can construct a tree consisted of three paths joining at a vertex $r$. We show that based on the solution of the tree separability on this tree, one can determine if $(a \ttimes b)_{i} \geq c_i$ for some $0 \leq i < 2n-1$.

\begin{lemma}\label{lemma:red1}
    Any subquadratic algorithm for tree separability results in a subquadratic algorithm for the $(\max,+)$ convolution.
\end{lemma}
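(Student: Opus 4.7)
I will reduce \textsf{MaxCov-UpperBound} to tree separability via a linear-size construction. Combined with the preceding lemma, this shows that any strongly subquadratic algorithm for tree separability yields one for $(\max,+)$ convolution.

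The construction. Given an instance $a,b,c$ of \textsf{MaxCov-UpperBound}, first shift all entries by the same additive constant so that every entry lies in $[0,V]$; this preserves the query ``does there exist $i$ with $(a\ttimes b)_i > c_i$?''. Next build a tree $T$ with a root $r$ and three ``arms'' attached to it. Arm $A$ is a path $A_1 - A_2 - \cdots - A_{2n}$ with $A_1$ adjacent to $r$; writing $A_0:=r$, the edge $(A_j,A_{j+1})$ has weight $W-a_j$ for $j \in \{0,\dots,n-1\}$ and the very large weight $W_{\mathrm{big}}$ for $j \in \{n,\dots,2n-1\}$, where $W:=10V$ and $W_{\mathrm{big}}:=100W$. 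Arm $B$ is symmetric with weights $W-b_k$. Arm $C$ is a path $C_{2n-1} - C_{2n-2} - \cdots - C_0$ with $C_{2n-1}$ adjacent to $r$; the edge $(C_{i+1},C_i)$ has weight $c_i$ for $i \in \{0,\dots,2n-2\}$, and edge $(r,C_{2n-1})$ has weight $W_{\mathrm{big}}$. The resulting tree has $6n+1$ vertices. I then query tree separability with partition size $m:=2n$ and ask whether the minimum cut is at most $\tau:=3W-1$.

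The analysis. Let $p_A,p_B,q_C$ denote the number of arm-$A$, arm-$B$, arm-$C$ vertices on $r$'s side of the partition, so that $p_A+p_B+q_C=2n-1$. Because every edge weight is non-negative, for fixed $(p_A,p_B,q_C)$ the cheapest partition is the ``prefix cut'' that uses exactly one crossing edge per arm, namely the edge separating the contiguous $r$-side prefix of that arm from the rest. A case analysis then shows that in the minimum cut we may assume $p_A,p_B \in \{0,\dots,n-1\}$ and $q_C \in \{1,\dots,2n-1\}$: values outside these ranges force the crossing edge to land in a buffer region of weight $W_{\mathrm{big}}$, and the extremes $p_A=2n$, $p_B=2n$, or $q_C=2n$ are outright infeasible by the size constraint. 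Setting $j:=p_A$, $k:=p_B$, $i:=2n-1-q_C$, the size constraint becomes exactly $j+k=i$ and the cut weight becomes $(W-a_j)+(W-b_k)+c_i = 3W-(a_j+b_k-c_i)$. Hence the minimum cut equals $3W-\max_i\big((a\ttimes b)_i - c_i\big)$, and the threshold query answers \textsf{MaxCov-UpperBound}. Because $|V(T)|=O(n)$, a $T(N)$-time algorithm for tree separability solves \textsf{MaxCov-UpperBound} in time $O(T(n)+n)$, and by the preceding lemma it also solves $(\max,+)$ convolution in the same asymptotic time.

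The main obstacle is justifying that prefix cuts really are optimal. For any arm and any value of $p_A$ (respectively $p_B$, $q_C$), a non-prefix pattern contributes at least two crossing edges along that arm. A short counting argument on transitions along the path shows that such a pattern with $p_A \le n-1$ either traverses a buffer edge---costing at least $W_{\mathrm{big}} \gg 3W+V$---or must have at least three crossings entirely within the encoded region (since two valid-region crossings leave $2n-(j_2-j_1) \ge n+1$ red vertices). In the latter case the arm contributes at least $3(W-V)=27V$, which with our weight choice strictly exceeds the single-edge prefix cost $W-a_j \le 10V$. The analogous argument on arms $B$ and $C$ rules out every alternative to a valid prefix cut, so the minimum-cut value is exactly $3W-\max_i((a\ttimes b)_i - c_i)$ and the reduction is correct.
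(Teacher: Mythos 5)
Your high-level plan is the same as the paper's (a three-arm spider rooted at $r$, reduction from \textsf{MaxCov-UpperBound}, arms for $a$ and $b$ weighted $W-a_j$ and $W-b_k$), but the weights you put on arm $C$ break the reduction. You give the edge $(C_{i+1},C_i)$ weight $c_i$ only, whereas the paper gives the corresponding edges weight $M+c_{\,\cdot}$, i.e.\ the $c$-arm also carries the large offset. Without that offset, your key claim that for fixed $(p_A,p_B,q_C)$ the prefix cut is cheapest fails on arm $C$: a non-prefix selection there crosses two or more edges, but each such edge costs at most $\max_i c_i\le V\ll W$, so two cheap $c$-edges can undercut the single prefix edge whose index the size constraint is supposed to pin down. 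Concretely, take $n=2$, $a=b=(v,0)$ with $0<v\le V$, and $c=a\ttimes b=(2v,v,0)$, a NO instance. In your tree the partition whose $r$-side is $\{r,C_3,C_1,C_0\}$ has size $m=4$ and cuts exactly the edges $(r,A_1)$, $(r,B_1)$, $(C_3,C_2)$, $(C_2,C_1)$, of total weight $(W-v)+(W-v)+0+v=2W-v$, while every honest prefix cut costs exactly $2W$. So the minimum cut is not $2W-\max_i\bigl((a\ttimes b)_i-c_i\bigr)$, and the threshold test reports YES on a NO instance; the counting argument you give for arms $A$ and $B$ (where every non-buffer edge costs at least $W-V$) simply does not transfer to arm $C$. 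Note also that your algebra is off by $W$: $(W-a_j)+(W-b_k)+c_i=2W-(a_j+b_k-c_i)$, not $3W-(a_j+b_k-c_i)$, so the threshold $\tau=3W-1$ is inconsistent even with your own construction; and shifting $a$, $b$, $c$ by one common constant does not preserve the query, since $(a\ttimes b)_i$ shifts by twice as much as $c_i$.

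The $3W$ in your formula points to the correct fix, which is exactly the paper's construction: give the edge encoding $c_i$ weight $W+c_i$ (the paper uses $M+c_{2n-2-i}$ with $M$ large) and keep the huge buffer edges. Then every crossing edge costs at least about $W$, so any cut of weight below roughly $3W$ crosses each arm exactly once (each arm must be crossed at least once because no arm fits entirely on the size-$m$ side together with $r$, and four crossings already exceed the threshold since $4(W-V)>3W$). The size constraint then forces the three crossing positions to satisfy $j+k=i$, the minimum cut equals $3W-\max_i\bigl((a\ttimes b)_i-c_i\bigr)$, and the threshold test is sound. As written, however, the soundness direction of your reduction is false, so this is a genuine gap rather than a cosmetic one.
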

\begin{proof}
As we mentioned earlier, we prove this reduction through \textsf{MaxCov-UpperBound}. Suppose we are given two vectors $a$ and $b$ of size $n$ and a vector $c$ of size $2n-1$ and are asked if $(a \ttimes b)_i > c_i$ for some $0 \leq i < |c|$. We answer this question by constructing a tree of size $8n$ as follows: Let $M = 10 \max\{1,|a_0|,|a_1|,\ldots,|a_{n-1}|,|b_0|,|b_1|,\ldots,|b_{n-1}|\}$ be a large enough number. The root of the tree is a vertex $r$ and three paths are connected to vertex $r$. The vertices of each path correspond to the elements of one vector. Thus, we denote the vertices of the paths by $a'_i$, $b'_i$, and $c'_i$ respectively. For every $c'_i$ we set the weight of the edge between $c'_i$ and $c'_{i-1}$ (or $r$ in case of $i = 0$) equal to $M + c_{2n-2-i}$. Similar to this, for every $0 \leq i < n$, we set the weight of the edge between $a'_{i+n}$ and $a'_{i+n-1}$ equal to $M - a_i$ and the weight of the edge between $b'_{i+n}$ and $b'_{i+n-1}$ equal to $M - b_i$. The rest of the edges have weight $\infty$. This construction is illustrated in Figure~\ref{fig:treeseparability}. Our claim is that for $m = 4n-1$ the solution of the tree separability problem is at least $3M$ if and only if $a \ttimes b$ is bounded by $c$.

We first prove that if for some $a_i$ and $b_j$ we have $a_i + b_j = (a \ttimes b)_{i+j} > c_{i+j}$ then the solution of the tree separability problem for $m = 4n-1$ is smaller than $3M$. To this end, we put the following vertices in one partition and the rest of the vertices in the second partition: $$\{r,a'_0,a_1,\ldots,a'_{n+i-1},b'_0,b_1,\ldots,b'_{n+j-1},c'_0,c'_1,\ldots,c'_{2n-i-j-3}\}.$$
Notice that the above list contains exactly $4n-1$ vertices. Moreover, the only crossing edges of this solution are the ones connected to $a'_{n+i}$, $b'_{n+j}$, and $c'_{2n-i-j-2}$ with weights $M-a_i$, $M-b_j$, and $M+c_{i+j}$. Since $a_i + b_j > c_{i+j}$ we have $M-a_i + M-b_j + M+c_{i+j} < 3M$ and thus the solution of the tree separability problem is smaller than $3M$.

Finally, we show that if $a_i + b_j \geq c_{i+j}$ holds for all $i$ and $j$, then the solution of the tree separability problem is at least $3M$. Notice that since $M$ is large enough, in order for a solution to have a weight smaller than $3M$ it has to meet the following constraints:
\begin{itemize}
    \item The solution should not contain an edge with weight $\infty$.
    \item The number of crossing edges between the partitions should be at most $3$.
\end{itemize}
In any solution that meets the above constraints, the partition with size $4n-1$ contains vertex $r$. Moreover, none of the crossing edges in parts $a'$ and $b'$ have weight $\infty$ and thus the crossing edges correspond to two vertices $a'_{n+i}$ and $b'_{n+j}$ with $0 \leq i,j <n$ and therefore their weights are $a_i$ and $b_j$. Since the size of the partition is $4n-1$, the third crossing edge has a weight of $c_{i+j}$. Recall that we assume $c_{i+j} \geq a_i + b_j$ and therefore $M-a_i + M-b_j +M+c_{i+j} \geq 3M$.
\tikzstyle{H-node}=[rectangle,draw=black,fill=white!30,inner sep=1.3mm]
\tikzstyle{B-node}=[circle,draw=blue,fill=blue!20,inner sep=2.5mm]
\tikzstyle{G-node}=[circle,draw=black,fill=white!30,inner sep=3.3mm]
\tikzstyle{R-node}=[rectangle,draw=red,fill=red!20,inner sep=2.6mm]
\tikzstyle{W-node}=[rectangle,draw=white,fill=white!30,inner sep=0.2mm]
\tikzstyle{test-node}=[circle,draw=black,fill=black,inner sep=.2mm]

\tikzstyle{bl0} = [draw=black, thick, dashed]   
\tikzstyle{b9} = [draw=red, thick]   
\tikzstyle{b8} = [draw=blue, thick, dotted]   
\tikzstyle{bl1} = [->, draw=black]   
\tikzstyle{bl2} = [draw=black!70,thick]   
\tikzstyle{bl3} = [draw=black,thick, dotted]   

\tikzstyle{br0} = [draw=brown, dashed]   
\tikzstyle{br1} = [->, draw=brown]   
\tikzstyle{br2} = [->, draw=brown,thick]   

\tikzstyle{red0} = [draw=red, thick, dashed]   
\tikzstyle{red1} = [draw=red]   
\tikzstyle{red2} = [draw=red,thick]   

\tikzstyle{gr0} = [draw=green, thick, dashed]   
\tikzstyle{gr1} = [draw=green]   
\tikzstyle{gr2} = [draw=green,thick]   
\tikzstyle{gr4} = [draw=green,semithick,rounded corners]   

\begin{figure}
\begin{center}
\begin{tikzpicture}[scale=0.5][domain=0:8]
\draw (0,0) node[G-node,label=center:$r$,label=below:] (r) {};
\draw (0,3) node[G-node,label=center:$c'_{0}$,label=below:] (c'_0) {};
\draw (0,6) node[G-node,label=center:$c'_{1}$,label=below:] (c'_1) {};
\draw (0,9) node[G-node,label=center:$c'_{2}$,label=below:] (c'_2) {};
\draw (0,13) node[G-node,label=center:$c'_{2n-2}$,label=below:$\vdots$] (c'_{2n-2}) {};

\draw (-3,0) node[G-node,label=center:$a'_{0}$,label=below:] (a'_0) {};
\draw (-6,0) node[G-node,label=center:$a'_{1}$,label=below:] (a'_1) {};
\draw (-9.3,0) node[G-node,label=center:$a'_{n-1}$,label=right:$\ldots$] (a'_{n-1}) {};
\draw (-12.3,0) node[G-node,label=center:$a'_{n}$,label=below:] (a'_{n}) {};
\draw (-15.6,0) node[G-node,label=center:$a'_{2n-1}$,label=right:$\ldots$] (a'_{2n-1}) {};
\draw (-15.6,3) node[G-node,label=center:$a'_{2n}$] (a'_{2n}) {};
\draw (-15.6,6) node[G-node,label=center:$a'_{2n+1}$] (a'_{2n+1}) {};
\draw (-15.6,9) node[G-node,label=center:$a'_{2n+2}$] (a'_{2n+2}) {};
\draw (-15.6,13) node[G-node,label=center:$a'_{3n-1}$,label=below:$\vdots$] (a'_{3n-1}) {};

\draw (3,0) node[G-node,label=center:$b'_{0}$,label=below:] (b'_0) {};
\draw (6,0) node[G-node,label=center:$b'_{1}$,label=below:] (b'_1) {};
\draw (9.3,0) node[G-node,label=center:$b'_{n-1}$,label=left:$\ldots$] (b'_{n-1}) {};
\draw (12.3,0) node[G-node,label=center:$b'_{n}$,label=below:] (b'_{n}) {};
\draw (15.6,0) node[G-node,label=center:$b'_{2n-1}$,label=left:$\ldots$] (b'_{2n-1}) {};
\draw (15.6,3) node[G-node,label=center:$b'_{2n}$] (b'_{2n}) {};
\draw (15.6,6) node[G-node,label=center:$b'_{2n+1}$] (b'_{2n+1}) {};
\draw (15.6,9) node[G-node,label=center:$b'_{2n+2}$] (b'_{2n+2}) {};
\draw (15.6,13) node[G-node,label=center:$b'_{3n-1}$,label=below:$\vdots$] (b'_{3n-1}) {};

\draw[b9] (r)  to node [below,label=right:$M+c_{2n-2}$] {} (c'_0) ;
\draw[b9] (c'_0)  to node [below,label=right:$M+c_{2n-3}$] {} (c'_1) ;
\draw[b9] (c'_1)  to node [below,label=right:$M+c_{2n-4}$] {} (c'_2) ;
\draw[b8] (r)  to node [below,label=below:$\infty$] {} (a'_0) ;
\draw[b8] (a'_0)  to node [below,label=below:$\infty$] {} (a'_1) ;
\draw[b9] (a'_{n-1})  to node [below,label=below:$M-a_0$] {} (a'_{n}) ;
\draw[b8] (a'_{2n-1})  to node [below,label=right:$\infty$] {} (a'_{2n}) ;
\draw[b8] (a'_{2n})  to node [below,label=right:$\infty$] {} (a'_{2n+1}) ;
\draw[b8] (a'_{2n+1})  to node [below,label=right:$\infty$] {} (a'_{2n+2}) ;

\draw[b8] (b'_{2n-1})  to node [below,label=left:$\infty$] {} (b'_{2n}) ;
\draw[b8] (b'_{2n})  to node [below,label=left:$\infty$] {} (b'_{2n+1}) ;
\draw[b8] (b'_{2n+1})  to node [below,label=left:$\infty$] {} (b'_{2n+2}) ;

\draw[b8] (r)  to node [below,label=below:$\infty$] {} (b'_0) ;
\draw[b8] (b'_0)  to node [below,label=below:$\infty$] {} (b'_1) ;
\draw[b9] (b'_{n-1})  to node [below,label=below:$M-b_0$] {} (b'_{n}) ;






\end{tikzpicture}
\end{center}
\caption{Dashed edges have weight $\infty$ while the weight of the solid edges is based on the value of the vectors $a$, $b$, and $c$.}
\label{fig:treeseparability}
\end{figure}
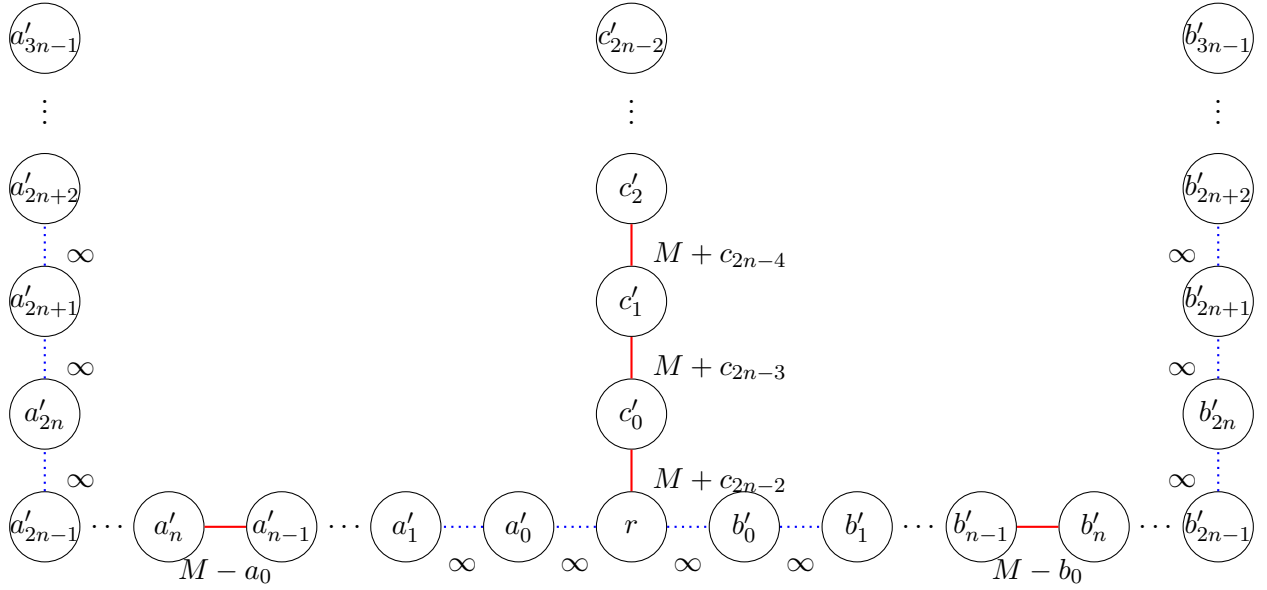
\end{proof}

We also show in Appendix~\ref{appendix:treeseparability} that a $T(n)$ time algorithm for computing the $(\max,+)$ convolution of two vectors yields an $\tildorder(T(n))$ time algorithm for solving tree separability. The proof is very similar to the works of Cygan \etal~\cite{cygan2017problems} and Backurs \etal~\cite{backurs2017better}. If the height of the tree is small, the classic dynamic program yields a running time of $T(n)$. We use the spine decomposition of~\cite{sleator1983data} to deal with cases where the height of the tree is large.

\begin{lemma}\label{lemma:red2}
    Any $T(n)$ time algorithm for solving $(\max,+)$ convolution yields an $\tildorder(T(n))$ time algorithm for tree separability.
\end{lemma}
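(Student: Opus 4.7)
The plan is to give a natural tree DP whose bottleneck step is $(\min,+)$ convolution (which is equivalent to $(\max,+)$ convolution via negation of the inputs), and then schedule the convolutions using heavy path (spine) decomposition so that the total work is $\tildorder(T(n))$.

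First I would root the tree at an arbitrary vertex $r$. For each vertex $v$ and each label $\ell \in \{A,B\}$, define a vector $g_v^\ell$ of length $|T_v|+1$ whose $k$-th entry is the minimum total weight of edges crossing the partition inside the subtree $T_v$, conditioned on $v$ carrying label $\ell$ and exactly $k$ vertices of $T_v$ carrying label $A$ (with $+\infty$ when infeasible). The edge from $v$ to a child $c$ is absorbed into the child's vector by setting $h_c^\ell[k] = \min\{\,g_c^A[k] + w_{vc}\cdot[\ell \neq A],\; g_c^B[k] + w_{vc}\cdot[\ell \neq B]\,\}$. Then $g_v^\ell$ is obtained by $(\min,+)$-convolving the $h_c^\ell$ across all children of $v$, and shifting the index by $[\ell = A]$ to account for $v$ itself. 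The answer to the tree separability instance equals $\min(g_r^A[m],\, g_r^B[m])$.

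A naive child-by-child merging schedule can cost $\Omega(n\cdot T(n))$ on a long path. To avoid this I would apply the heavy path (spine) decomposition of Sleator--Tarjan: mark the child of each internal vertex with the largest subtree as \emph{heavy}, partitioning the tree into maximal heavy paths $P_1, P_2, \ldots$ such that each root-to-leaf walk crosses at most $O(\log n)$ light edges. For a heavy path $P = v_1 v_2 \cdots v_q$ (processed only after all subtrees hanging off its light edges have been solved recursively), I would compute $g_{v_1}^\ell$ by a single batched sequence of $(\min,+)$ convolutions that combines the already-available light-subtree vectors together with the contributions of $v_1, \ldots, v_q$. Arranging these pieces in a balanced binary merging tree whose leaves have total length $|T_{v_1}|$ and whose depth is $O(\log |T_{v_1}|)$, the cost of processing $P$ is $\tildorder(T(|T_{v_1}|))$, under the mild assumption that $T$ is at least super-additive up to a linear slack.

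To sum across heavy paths, observe that each vertex $u$ lies in $T_{v_1}$ for only $O(\log n)$ heavy-path tops $v_1$ (one for each light edge on the $u$-to-root path), so $\sum_P |T_{v_1}| = O(n\log n)$. Since any $(\max,+)$ convolution algorithm has $T(n) = \Omega(n)$, we may w.l.o.g.\ assume $T(n)/n$ is non-decreasing, and therefore $\sum_P T(|T_{v_1}|) \leq (T(n)/n)\cdot \sum_P |T_{v_1}| = O(T(n)\log n)$. Combined with the $O(\log n)$ overhead inside each heavy-path batch, the total running time is $O(T(n)\log^2 n) = \tildorder(T(n))$. The main obstacle will be verifying the amortization inside a single heavy path: the batched merge that combines $v_1,\ldots,v_q$ with all the light-subtree vectors must actually run in $\tildorder(T(|T_{v_1}|))$ rather than issuing $|T_{v_1}|$ separate convolution calls of size up to $|T_{v_1}|$. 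The argument mirrors the scheduling used by Backurs \etal\ for tree sparsity: one decomposes the path into a balanced tree of contiguous chunks and charges each convolution's cost to the level at which it sits in that tree, using super-additivity of $T$ to telescope the per-level cost into a single $T(|T_{v_1}|)$ per level.
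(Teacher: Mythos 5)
Your proposal is correct and follows essentially the same route as the paper: a subtree dynamic program whose merge step is a $(\min,+)$/$(\max,+)$ convolution, combined with the Sleator--Tarjan spine (heavy-path) decomposition and Backurs-et-al.-style balanced batching so that the total convolution cost telescopes to $\tildorder(T(n))$ under the standard monotonicity/super-additivity assumption on $T$. Your write-up is, if anything, more explicit than the paper's sketch (per-vertex label states, the $\sum_P |T_{v_1}| = O(n\log n)$ accounting), with only the minor elided detail that chunks of a heavy path must carry the labels of both endpoints to charge the connecting edge weights correctly.
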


Lemmas~\ref{lemma:red1} and~\ref{lemma:red2} imply that $(\max,+)$ convolution and tree separability are computationally equivalent.

\begin{theorem}[A corollary of Lemmas~\ref{lemma:red1} and~\ref{lemma:red2}]\label{theorem:equivalence}
    $(\max,+)$ convolution and tree separability are computationally equivalent with respect to subquadratic algorithms.
\end{theorem}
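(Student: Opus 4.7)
The theorem is a direct corollary of Lemmas~\ref{lemma:red1} and~\ref{lemma:red2}, so my plan is simply to formalize the combination of the two directions. First I would unpack the meaning of ``computationally equivalent with respect to subquadratic algorithms'': namely, that $(\max,+)$ convolution admits an algorithm running in time $O(n^{2-\epsilon})$ for some constant $\epsilon > 0$ if and only if tree separability does.

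For the ``only if'' direction (tree separability implies $(\max,+)$ convolution), I would invoke Lemma~\ref{lemma:red1}. Starting from a hypothetical tree-separability algorithm of running time $O(n^{2-\epsilon})$, the reduction through \textsf{MaxCov-UpperBound} builds, from any instance of the latter of size $n$, a tree on $O(n)$ vertices (in fact, $8n$, by the construction in the proof of Lemma~\ref{lemma:red1}) whose separability value determines the answer. Running the assumed algorithm on this tree takes time $O(n^{2-\epsilon})$, which is still truly subquadratic in the original parameter. Composed with the known subquadratic equivalence of \textsf{MaxCov-UpperBound} and $(\max,+)$ convolution established in~\cite{cygan2017problems}, this yields a subquadratic algorithm for $(\max,+)$ convolution.

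For the ``if'' direction ($(\max,+)$ convolution implies tree separability), I would invoke Lemma~\ref{lemma:red2}: a $T(n)$-time algorithm for $(\max,+)$ convolution yields a $\tildorder(T(n))$-time algorithm for tree separability. If $T(n) = O(n^{2-\epsilon})$, then for any $0 < \epsilon' < \epsilon$ one has $\tildorder(n^{2-\epsilon}) = O(n^{2-\epsilon'})$ for all sufficiently large $n$, so the polylogarithmic overhead introduced by the spine decomposition cannot overcome the subquadratic savings, and the resulting tree-separability algorithm is also truly subquadratic.

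There is no genuine obstacle in this last step; the bulk of the work has already been absorbed into the two lemmas. The only sanity check is that the input-size blowup in each reduction is at most linear, which is immediate from the explicit constructions referenced above, and that the hidden polylogarithmic factor in Lemma~\ref{lemma:red2} is benign with respect to the notion of ``truly subquadratic.'' Both are straightforward, so the proof reduces to little more than stating this chain of implications.
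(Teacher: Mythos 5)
Your proposal is correct and matches the paper's treatment: the theorem is stated precisely as a corollary of Lemmas~\ref{lemma:red1} and~\ref{lemma:red2}, obtained by composing the two reduction directions exactly as you describe. Your added sanity checks (linear blowup in the reduction of Lemma~\ref{lemma:red1} and the harmlessness of the polylogarithmic overhead in Lemma~\ref{lemma:red2}) are the only content beyond what the paper makes explicit, and they are accurate.
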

\subsection{Fast Algorithm for Special Cases}\label{sec:treeseparability:fast}
We show that when the maximum degree of the tree and the weights of the edges are bounded by $\dmax$ and $\wmax$, one can solve the problem in time $\tildorder(\dmax \wmax n)$. Note that our algorithm only works when the edge weights are integers. In particular, when both $\wmax$ and $\dmax$ are $O(1)$ our algorithm runs in (almost) linear time.

Our main observation is the following: for any $1 \leq m < n$, there exists a partitioning of a given tree $T$ into two partitions of sizes $m$ and $n-m$ such that the number of crossing edges is bounded by $2\dmax \log n$.

\begin{lemma}\label{lemma:bound}
    Let $T$ be a tree of size $n$ and $1 \leq m < n$ be an integer number. There exists a partitioning of $T$ into two partitions of sizes $m$ and $n-m$ with at most $2 \dmax \log n$ crossing edges where $\dmax$ is the maximum degree of a vertex in $T$.
\end{lemma}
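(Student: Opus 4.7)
The plan is to use centroid decomposition recursively. Recall that every tree on $n$ vertices has a centroid $c$: a vertex whose removal leaves subtrees of size at most $n/2$ each. Since $T$ has maximum degree at most $\dmax$, the centroid $c$ has at most $\dmax$ neighbors, so removing $c$ produces subtrees $T_1, T_2, \ldots, T_k$ with $k \leq \dmax$ and $|T_i| \leq n/2$ for every $i$.

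Given the target size $m$, I would proceed greedily. Order the subtrees arbitrarily and let $S_j = |T_1| + \ldots + |T_j|$, with $S_0 = 0$. Let $j$ be the largest index with $S_j \leq m$. If $S_j = m$, take $A := T_1 \cup \ldots \cup T_j$ and $B := \{c\} \cup T_{j+1} \cup \ldots \cup T_k$; the only crossing edges are the $j \leq \dmax$ edges from $c$ to the roots of $T_1, \ldots, T_j$, and no recursion is needed. The same bound handles the boundary case $j = k$ (which forces $m = n-1$). Otherwise $0 < m - S_j < |T_{j+1}|$; I assign $T_1, \ldots, T_j$ entirely to $A$ and $\{c\} \cup T_{j+2} \cup \ldots \cup T_k$ entirely to $B$, then recursively solve the problem on the subtree $T_{j+1}$ with target $m - S_j$ to select the remaining vertices for $A$.

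Counting the crossing edges introduced at this level, only edges incident to $c$ can cross: the $j$ edges from $c$ to the roots of $T_1, \ldots, T_j$ cross (since those subtrees lie in $A$ while $c \in B$), the edges from $c$ to the roots of $T_{j+2}, \ldots, T_k$ do not cross, and the edge from $c$ to the root of $T_{j+1}$ contributes at most one crossing edge depending on the recursive choice. Hence at most $j + 1 \leq \dmax$ new crossing edges appear at this level. Letting $f(n)$ denote an upper bound on the number of crossing edges achievable for any target $m$ on any tree of size $\leq n$ with maximum degree at most $\dmax$, the argument gives the recurrence $f(n) \leq \dmax + f(n/2)$ with $f(1) = 0$, which unfolds to $f(n) \leq \dmax \lceil \log_2 n \rceil \leq 2\dmax \log n$.

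The main subtlety is verifying that the recursive call is well-posed and inherits the same degree bound: $T_{j+1}$ is a subgraph of $T$, so its maximum degree is at most $\dmax$, and the target $m - S_j$ satisfies $1 \leq m - S_j \leq |T_{j+1}| - 1$, so the inductive hypothesis applies. A secondary sanity check is to verify each boundary case ($S_j = m$, $j = 0$, and $j = k$) directly, but each of these either avoids recursion entirely or reduces immediately to a smaller instance, so the induction goes through with no further complication.
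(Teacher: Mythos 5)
Your proof is correct and follows essentially the same route as the paper: remove a balanced separator vertex, greedily assign whole components to one side, recurse into the single component that must be split, and charge at most $\dmax$ crossing edges per level over $O(\log n)$ levels. The only differences are cosmetic---you use the $n/2$-centroid where the paper uses a $2n/3$-separator and you place the centroid in the complement part rather than the size-$m$ part, which in fact yields the slightly sharper bound $\dmax \lceil \log_2 n \rceil \leq 2\dmax \log n$.
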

\begin{proof}
It is a well-known fact that every tree of size $n$ has a vertex $v$ such that if we remove $v$ from the tree, the size of each connected component of the tree is bounded by $(2/3)n$~\cite{west2001introduction}. Based on this observation, we inductively construct a solution with no more than $2 \dmax \log n$ crossing edges for any tree with $n$ vertices and a given partition size $m$. The base case is $n \leq 2$ for which the lemma holds trivially. Now, for a given tree $T$ of size $n$, we find its center $v$ with the above property. Next, we remove $v$ from the tree to obtain $d_v$  connected components $T_1, T_2, \ldots, T_{d_v}$. To construct a solution, we first put vertex $v$ in the partition that is to be of size $m$. We then continue growing this partition by adding the subtrees to it one by one. We stop when adding any subtree to the solution increases the size of the partition to more than $m$. Let $m'$ be the size of the partition and $T_i$ be the first subtree that cannot be entirely added to the solution. If $m' = m$ our solution is valid, otherwise we recursively partition $T_i$ into two partitions of sizes $m-m'$ and $|T_i|-(m-m')$ and update the solution by adding the $m-m'$ part to it. Note that we have at most $\dmax$ crossing edges for vertex $v$ and also based on the induction hypothesis, the number of crossing edges in partitions of $T_i$ is at most $2\dmax \log |T_i|$. In addition to this, since $v$ is a center of the tree, we have $|T_i| \leq n2/3$ and therefore $2\dmax \log|T_i| \leq (2\log n -1)\dmax$. Thus, the total number of crossing edges in our solution is bounded by $2\dmax \log n$.
\end{proof}

It follows from Lemma~\ref{lemma:bound} that when the maximum degree of a tree is bounded by $\dmax$ and the maximum weight of the edges is bounded by $\wmax$ then the solution of the tree separability problem is bounded by $2\dmax \wmax \log n$. Therefore, the values of the solutions for every subproblem of the tree separability problem are bounded by $2\dmax \wmax \log n$. Thus, every time we wish to compute the convolution of two vectors $a$ and $b$ corresponding to the solutions of the subproblems, the $(\max, +)$ convolution can be computed in time $\tildorder(\dmax \wmax (|a| + |b|))$ (Lemma~\ref{lemma:mult1}). Thus, based on Lemma~\ref{lemma:red2}, we can compute the solution of the tree separability problem in time $\tildorder(\dmax \wmax n)$.

\begin{theorem}[A corollary of Lemmas~\ref{lemma:bound} and~\ref{lemma:red2}]
Given a tree $T$ with $n$ nodes whose maximum degree is bounded by $\dmax$. If the weights of the edges are integers bounded by $\wmax$, one can compute the solution of the tree separability problem for $T$ in time $\tildorder(\dmax \wmax n)$.
\end{theorem}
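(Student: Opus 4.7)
The plan is to combine the structural bound of Lemma~\ref{lemma:bound} with the reduction of Lemma~\ref{lemma:red2} and the small-values convolution result referenced earlier (the $\tildorder(\emax n)$ algorithm for $(\max,+)$ convolution when entries lie in $[0,\emax]$). First I would observe that the optimal value of every tree separability subproblem that arises is small. Concretely, fix any rooted subtree $T'$ of $T$ with $n'$ vertices and any desired partition size $1 \leq m' < n'$. By Lemma~\ref{lemma:bound} applied to $T'$, there is a partition of $T'$ into pieces of sizes $m'$ and $n'-m'$ whose number of crossing edges is at most $2\dmax\log n' \leq 2\dmax \log n$. Since every edge weight is a non-negative integer bounded by $\wmax$, the minimum cost of any such partition is at most $2\dmax\wmax\log n$.

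Next I would recall the standard tree DP behind Lemma~\ref{lemma:red2}: the algorithm processes $T$ bottom up (using a heavy-path / spine decomposition so that the recursion depth is $O(\log n)$), and at each merge step it combines two solution vectors $a$ and $b$, indexed by partition size, via a $(\max,+)$-type convolution (taking negations to convert the minimization to a maximization, if desired). The entries of $a$ and $b$ are themselves optimal values of tree separability subproblems on subtrees of $T$, so by the observation above each entry is an integer in $[0,\,2\dmax\wmax\log n] = [0,\tildorder(\dmax\wmax)]$. Therefore every convolution call in the DP is a bounded-values convolution that runs in time $\tildorder(\dmax\wmax(|a|+|b|))$ via the small-values reduction to polynomial multiplication from Section~\ref{sec:maxplusconvolution:reduction}.

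Plugging this faster subroutine into the reduction of Lemma~\ref{lemma:red2} replaces $T(\cdot)$ by $\tildorder(\dmax\wmax\cdot)$. Since the spine decomposition ensures the total work across all merges is $\tildorder(n)$ times the per-unit-length cost of the convolution primitive, the overall running time is $\tildorder(\dmax\wmax n)$, as desired.

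The main obstacle I anticipate is making rigorous the claim that all intermediate vectors encountered during the DP really do have entries bounded by $\tildorder(\dmax\wmax)$. One has to verify that every value stored in the DP corresponds to a genuine tree separability instance on a subtree of $T$, so that Lemma~\ref{lemma:bound} applies uniformly; and one must ensure the negation step used to cast minimum-cost separation as a $(\max,+)$ convolution can be done while keeping the entries non-negative and in the $[0,\tildorder(\dmax\wmax)]$ window (for example, by adding a common offset of $2\dmax\wmax\log n$ to every entry before convolving and subtracting twice that offset afterwards, as in step~2 of the prediction technique). Once this bookkeeping is in place, the remainder of the argument is a direct substitution into Lemma~\ref{lemma:red2}.
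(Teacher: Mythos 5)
Your proposal is correct and follows essentially the same route as the paper: bound the value of every tree separability subproblem by $2\dmax\wmax\log n$ via Lemma~\ref{lemma:bound}, so that each convolution in the dynamic program of Lemma~\ref{lemma:red2} has entries in a window of size $\tildorder(\dmax\wmax)$ and can be done in $\tildorder(\dmax\wmax(|a|+|b|))$ time by the bounded-range convolution of Lemma~\ref{lemma:mult1}, giving $\tildorder(\dmax\wmax n)$ overall. The bookkeeping issues you flag (sign/offset normalization and uniform applicability of the value bound to all DP entries) are glossed over in the paper's own one-paragraph argument as well, so no essential idea is missing.
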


\newpage
\section{0/1 Tree Sparsity}\label{sec:01sparsity}
We show in Sections~\ref{sec:maxplusconvolution},~\ref{sec:knapsack}, and~\ref{sec:treeseparability} that convolution, knapsack, and tree separability problems can be solved in almost linear time in special cases. One of the important problems that lies in the same computational category with these problems is the tree sparsity problem. Therefore, an important question that remains open is whether an almost linear time algorithm can solve the tree sparsity problem when the weights of the vertices are small integers. In particular, is it possible to solve the 0/1 tree sparsity problem (in which the weight of every vertex is either $0$ or $1$) in $\tildorder(n)$ time? We show in this section that tree sparsity is the hardest problem of this category when it comes to small weights. More precisely, we show that an $\tildorder(n)$ time algorithm for 0/1 tree sparsity immediately implies linear time solutions for the rest of the problems when the input values are small. We assume that the goal of the tree sparsity problem is to find for every $i$ what is the weight of the heaviest connected component of the tree of size $i$.

To this end, we define the $\dmax$-distance bounded $(\max,+)$ convolution problem as follows: given two vectors $a$ and $b$ with the condition that $\max |a_i - a_{i-1}| \leq \dmax$ and $\max |b_i - b_{i-1}| \leq \dmax$ hold for every $i$. The goal is to compute $a \ttimes b$. We show that a $T(n)$ time algorithm for 0/1 tree sparsity yields an $\tildorder(T(\dmax n))$ time algorithm for $\dmax$-distance bounded $(\max, +)$ convolution of two integer vectors  $a$ and $b$ where $n = |a| + |b|$. This yields fast algorithms for convolution, knapsack, tree separability, and tree separability when the input values are small integers. Indeed we already know that convolution and knapsack problems admit almost linear time algorithms for such special cases, nonetheless such a reduction sheds light on the connection between these problems.

We begin by showing that a $T(n)$ time algorithm for $0/1$ tree sparsity yields an $\tildorder(T(\dmax n))$ time algorithm for $\dmax$-distance bounded convolution.
\begin{lemma}
    Given a $T(n)$ time algorithm for $0/1$ tree sparsity, one can solve the $\dmax$-distance bounded convolution for integer vectors in time $O(T(\dmax n))$.
\end{lemma}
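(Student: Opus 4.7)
The plan is to turn an algorithm for $0/1$ tree sparsity into an algorithm for $\dmax$-distance bounded $(\max,+)$ convolution by encoding both input vectors into a $0/1$-weighted tree of size $O(\dmax n)$ so that a single call to the tree-sparsity algorithm reveals, at prescribed sizes, all of $a\ttimes b$ up to an easy affine change of variables.

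First I would normalize. Because $|a_i-a_{i-1}|,|b_i-b_{i-1}|\leq\dmax$, the shifted sequences $\hat a_i:=a_i+\dmax\cdot i+C$ and $\hat b_i:=b_i+\dmax\cdot i+C$ have non-negative consecutive differences in $[0,2\dmax]$, and for $C=\Theta(\dmax n)$ chosen large enough I can also guarantee $\hat a_i,\hat b_i\geq 0$ for every $i$. The convolutions are related by $\hat c_m=c_m+\dmax\cdot m+2C$, so computing $\hat c$ is equivalent to computing $c$.

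Next I would build a tree $T$ rooted at a weight-$0$ vertex $r$, with two chains $P_a$ and $P_b$ hanging from $r$. The chain $P_a$ consists of an initial block of $\hat a_0$ weight-$1$ vertices, followed by $|a|-1$ gadgets of $2\dmax$ vertices each; gadget $i$ contains exactly $\hat a_i-\hat a_{i-1}\in[0,2\dmax]$ weight-$1$ vertices, arranged closest to the root so that the prefix-weight function along $P_a$ is non-decreasing and attains the value $\hat a_i$ precisely at the boundary of gadget $i$. The chain $P_b$ is built symmetrically. The total number of vertices is $1+\hat a_0+2\dmax(|a|-1)+\hat b_0+2\dmax(|b|-1)=O(\dmax n)$, and every vertex weight lies in $\{0,1\}$. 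I would then argue that for $k_m:=1+\hat a_0+\hat b_0+2\dmax\cdot m$, the maximum-weight connected subtree of $T$ containing $r$ of size $k_m$ is forced by the monotone placement to consist of $r$, the full offset block of $P_a$, some $i$ full gadgets of $P_a$, the full offset block of $P_b$, and some $j$ full gadgets of $P_b$ with $i+j=m$, contributing weight $\hat a_i+\hat b_j$; maximizing over $(i,j)$ gives exactly $\hat c_m$. One call to the assumed tree-sparsity algorithm on $T$ runs in time $T(\dmax n)$; reading its output at each $k_m$ and unshifting recovers $c_m$ in $O(1)$ per index, so the total time is $T(\dmax n)+O(\dmax n)$.

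The main obstacle is ruling out \emph{unintended} subtrees — in particular one-sided subpaths lying entirely inside $P_a$ or entirely inside $P_b$, which do not contain $r$ and therefore are not of the canonical prefix-from-both-sides form. I would handle this by a direct magnitude argument: the weight of any one-sided subpath of length $k_m$ inside $P_a$ is at most the total weight of $P_a$, namely $\hat a_{|a|-1}\leq \dmax n+C$, whereas the canonical value satisfies $\hat c_m\geq \hat a_0+\hat b_0\geq 2C-2\dmax n$ (taking the extremal pair $(i,j)=(0,m)$ or $(m,0)$ and using that $\hat a,\hat b$ are nondecreasing). Choosing $C\geq 3\dmax n$ therefore makes the canonical value strictly dominate any one-sided window for every $m$, and the same constant still keeps the tree size at $O(\dmax n)$. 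The bookkeeping to check that partial gadgets likewise never beat a completion to the next gadget boundary is immediate from the monotone placement of weight-$1$ vertices within each gadget, which completes the reduction.
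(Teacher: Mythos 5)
Your reduction breaks at exactly the step you dismiss as immediate bookkeeping: the claim that at size $k_m$ the best connected subtree through $r$ must stop at gadget boundaries is false, and with it the identity ``tree sparsity at $k_m$ equals $\hat c_m$''. Because you front-load the weight-$1$ vertices inside each gadget, a subtree of size $k_m$ may enter gadget $i+1$ of $P_a$ and gadget $j+1$ of $P_b$ only partially, taking $p$ and $q$ vertices with $p+q=2\dmax$ and $i+j=m-1$; it then collects weight $\hat a_i+\hat b_j+\min\{p,\hat a_{i+1}-\hat a_i\}+\min\{q,\hat b_{j+1}-\hat b_j\}$, which can strictly exceed $\max_{i'+j'=m}\bigl(\hat a_{i'}+\hat b_{j'}\bigr)=\hat c_m$. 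Concretely, take $\dmax=2$ (gadget length $4$) and $a=b=(0,1,3)$, so that $\hat a_1-\hat a_0=\hat b_1-\hat b_0=3$. At size $k_1=1+\hat a_0+\hat b_0+4$, the subtree consisting of $r$, both offset blocks, and the first two (weight-$1$) vertices of the first gadget on each chain has weight $\hat a_0+\hat b_0+4$, whereas $\hat c_1=\hat a_0+\hat b_0+3$; your readout would therefore report $(a\ttimes b)_1\ge 2$ although the true value is $1$. The large shift $C$ does not help here, since the offending subtree contains $r$ and both offset blocks; only your one-sided-subpath concern is handled by $C$, not this one. So the central ``forcing'' claim is simply not true as stated, and the reduction computes the wrong values.

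The paper's construction avoids this trap by not trying to make the sparsity at a prescribed size equal the convolution value. After making the vectors increasing, it encodes the values positionally: the vertex at position $i$ of the $a$-path has weight $1$ exactly when some $a_j=i$, a third all-ones path of length $n$ (plus the root) forces every large optimum to contain the root and that whole path, and then a prefix of $x$ vertices on the $a$-path contributes weight equal to the number of indices $j$ with $a_j\le x$. There, weight counts indices while subtree size counts value units, so \emph{every} split of the size budget between the two paths is a legitimate configuration and nothing has to be ruled out; the convolution entry is recovered at the end by inverting the monotone sparsity vector (finding the least size at which the sparsity attains a given weight), rather than by evaluating it at fixed sizes $k_m$. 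To salvage your scheme you would need an encoding in which taking a partial gadget can never beat completing it -- which is essentially what the positional (values-as-sizes) encoding accomplishes -- or you would have to switch to the paper's inverse readout.
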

\begin{proof}
    Suppose we are given two $\dmax$-distance bounded vectors $a$ and $b$ and wish to compute $a \ttimes b$. We assume w.l.o.g. that both $a$ and $b$ are of size $n$. Also, we can assume w.l.o.g. that both vectors are increasing because of the following fact: If we add $i (\dmax+1)$ to every element $i$ of both vectors $a$ and $b$, they both become increasing since $|a_i - a_{i-1}| \leq \dmax$ and $|b_i - b_{i-1}| \leq \dmax$ hold for the original vectors. Moreover, if we compute $c = a \ttimes b$ for the new vectors, one can compute the solution for the convolution of the original vectors by just subtracting $i (\dmax+1)$ from every element $i$ of vector $c$. In addition to this, since the original vectors are $\dmax$-distance bounded, after adding $i(\dmax+1)$ to every element $i$ of the vectors, the resulting vectors are $(2\dmax+1)$-distance bounded.
    
    For the rest of the proof, we assume both vectors $a$ and $b$ are increasing and of size $n$. Moreover, both vectors are $\dmax$-distance bounded and thus both $|a_i - a_{i-1}| \leq \dmax$ and $|b_i - b_{i-1}| \leq \dmax$ hold. We also assume $a_0=b_0 = 0$ since one can ensure that constraint by shifting the values. To compute $a \ttimes b$, we construct an instance of the $0/1$ tree sparsity problem with $n+1+a_{n-1}+b_{n-1}$ vertices. The underlying tree has a root $r$ connected to three different paths $a',b',c'$. Path $c'$ contains $n$ consecutive vertices, each with weight $1$. Paths $a'$ and $b'$ correspond to vectors $a$ and $b$. Vertices of path $a'$ are denoted by $a'_1, a'_2, \ldots, a'_{a_{n-1}}$. The weight of each $a'_i$ is $1$ if and only if there exists a $j$ such that $a_j = i$. Similar to this, the vertices of path $b'$ are denoted by $b'_1, b_2, \ldots, b'_{b_{n-1}}$ and the weight of a vertex $b'_i$ is equal to 1 if and only if $b_j = i$ for some $j$. An example of such construction is shown in Figure~\ref{fig:treesparsity}.
    \tikzstyle{H-node}=[rectangle,draw=black,fill=white!30,inner sep=1.3mm]
\tikzstyle{B-node}=[circle,draw=blue,fill=blue!20,inner sep=2.5mm]
\tikzstyle{G-node}=[circle,draw=black,fill=white!30,inner sep=3.3mm]
\tikzstyle{R-node}=[rectangle,draw=red,fill=red!20,inner sep=2.6mm]
\tikzstyle{W-node}=[rectangle,draw=white,fill=white!30,inner sep=0.2mm]
\tikzstyle{test-node}=[circle,draw=black,fill=black,inner sep=.2mm]

\tikzstyle{bl0} = [draw=black, thick, dashed]   
\tikzstyle{b9} = [draw=red, thick]   
\tikzstyle{b8} = [draw=blue, thick, dotted]   
\tikzstyle{bl1} = [->, draw=black]   
\tikzstyle{bl2} = [draw=black!70,thick]   
\tikzstyle{bl3} = [draw=black,thick, dotted]   

\tikzstyle{br0} = [draw=brown, dashed]   
\tikzstyle{br1} = [->, draw=brown]   
\tikzstyle{br2} = [->, draw=brown,thick]   

\tikzstyle{red0} = [draw=red, thick, dashed]   
\tikzstyle{red1} = [draw=red]   
\tikzstyle{red2} = [draw=red,thick]   

\tikzstyle{gr0} = [draw=green, thick, dashed]   
\tikzstyle{gr1} = [draw=green]   
\tikzstyle{gr2} = [draw=green,thick]   
\tikzstyle{gr4} = [draw=green,semithick,rounded corners]   

\begin{figure}
\begin{center}
\begin{tikzpicture}[scale=0.5][domain=0:8]
\draw (0,0) node[G-node,label=center:$1$,label=below:,label=below:$r$] (r) {};
\draw (0,3) node[G-node,label=center:$1$,label=below:,label=right:$c'_1$] (c'_1) {};
\draw (0,6) node[G-node,label=center:$1$,label=below:,label=right:$c'_2$] (c'_2) {};
\draw (0,10) node[G-node,label=center:$1$,label=below:$\vdots$,label=right:$c'_8$] (c'_8) {};
\draw (0,13) node[G-node,label=center:$1$,label=right:$c'_9$] (c'_9) {};

\draw (-3,0) node[G-node,label=center:$1$,label=below:$a'_1$] (a'_1) {};
\draw (-6,0) node[G-node,label=center:$0$,label=below:$a'_2$] (a'_2) {};
\draw (-9.3,0) node[G-node,label=center:$0$,label=below:$a'_3$] (a'_3) {};
\draw (-12.3,0) node[G-node,label=center:$1$,label=below:$a'_4$] (a'_4) {};

\draw (3,0) node[G-node,label=center:$0$,label=below:$b'_1$] (b'_1) {};
\draw (6,0) node[G-node,label=center:$1$,label=below:$b'_2$] (b'_2) {};
\draw (9.3,0) node[G-node,label=center:$0$,label=below:$b'_3$] (b'_3) {};
\draw (12.3,0) node[G-node,label=center:$1$,label=below:$b'_4$] (b'_4) {};
\draw (15.6,0) node[G-node,label=center:$1$,label=below:$b'_5$] (b'_5) {};

\draw[b9] (r)  to node [below,label=right:] {} (a'_1) ;
\draw[b9] (a'_1)  to node [below,label=right:] {} (a'_2) ;
\draw[b9] (a'_2)  to node [below,label=right:] {} (a'_3) ;
\draw[b9] (a'_3)  to node [below,label=below:] {} (a'_4) ;

\draw[b9] (r)  to node [below,label=right:] {} (c'_1) ;
\draw[b9] (c'_1)  to node [below,label=right:] {} (c'_2) ;
\draw[b9] (c'_8)  to node [below,label=below:] {} (c'_9) ;

\draw[b9] (r)  to node [below,label=right:] {} (b'_1) ;
\draw[b9] (b'_1)  to node [below,label=right:] {} (b'_2) ;
\draw[b9] (b'_2)  to node [below,label=right:] {} (b'_3) ;
\draw[b9] (b'_3)  to node [below,label=below:] {} (b'_4) ;
\draw[b9] (b'_4)  to node [below,label=below:] {} (b'_5) ;






\end{tikzpicture}
\end{center}
\caption{The above tree corresponds to vectors $a = \langle 0,1,3,4 \rangle$ and $b = \langle 0,2,4,5\rangle$.}
\label{fig:treesparsity}
\end{figure}
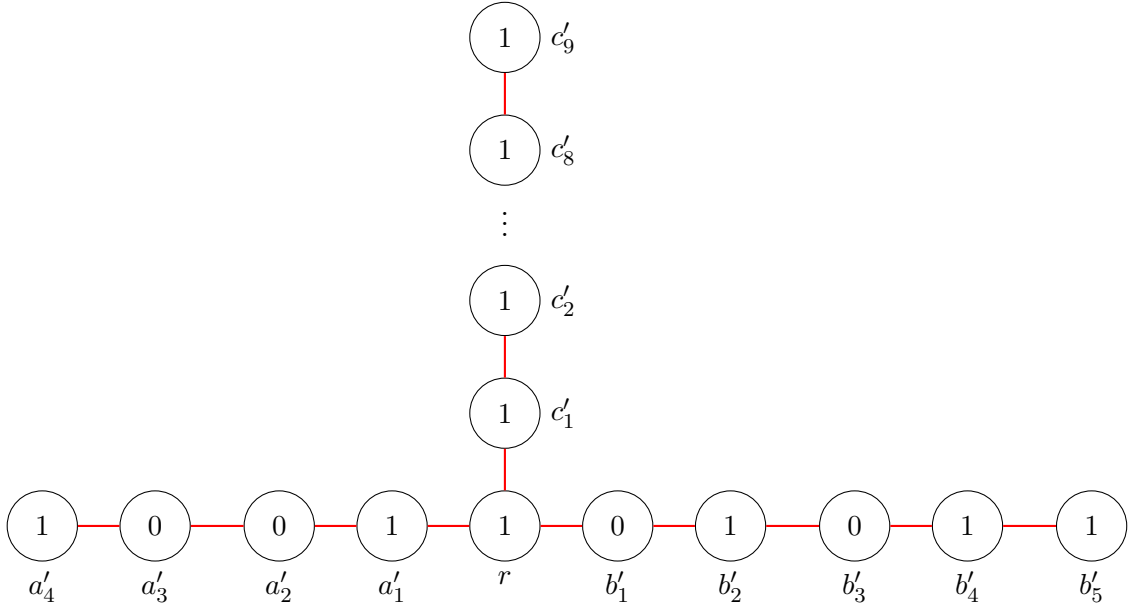
    Let $s$ be the solution vector to the tree sparsity problem explained above. In other words, $s$ is a vector of size $n+2+a_{n-1}+b_{n-1}$ where $s_i$ is the heaviest connected subtree of size $i$. We argue that for every $0 \leq i < |(a \ttimes b)|$, $(a \ttimes b)_i$ is equal to $j-n-1$ for the smallest $j$ such that $s_j = n+i+1$. If this claim is correct then computing $(a \ttimes b)$ can be trivially done provided that the solution vector $s$ is given.  
    
    In order to prove the above statement, we first start with an observation.
    \begin{observation}\label{observation:hi1}
    For any $1 \leq i \leq n+1+a_{n-1}+b_{n-1}$ there exists an optimal solution for size $i$ that contains vertex $r$.
    \end{observation}
    \begin{proof}
    For $i \leq n+1$, one can simply start with vertex $r$ and go along path $c'$ to collect $i$ vertices with weight $1$. Obviously, this is the best we can achieve since the weight of every vertex is bounded by $1$. For $i > n+1$ we argue that the weight of the solution is at least $n+1$ since path $c'$ along with vertex $r$ and some subpath of $a'$ and $b'$ suffice to have a solution weight at least $n+1$. In addition to this, if we remove vertex $r$ from the tree, any connected subtree contains at most $n$ vertices with weight $1$ and thus any solution not containing vertex $r$ has a weight of at most $n$. Thus, one can obtain an optimal solution containing path $c'$ and vertex $r$ from any optimal solution.
    \end{proof}

    \begin{observation}\label{observation:hi2}
    For any $i \geq n+1$, there always exists an optimal solution of size $i$ that contains both the entire path $c'$ and vertex $r$.    
    \end{observation}
    \begin{proof}
    By Observation~\ref{observation:hi1} we already know that there always exists an optimal solution of size $i$ that contains vertex $r$. Notice that the weight of all vertices in path $c'$ is equal to $1$. Thus, if the solution doesn't contain all vertices of path $c'$, one can iteratively remove a leaf from the solution and instead add the next vertex in path $c`$ to the solution. This does not hurt the solution since the weight of all vertices in path $c'$ is equal to $1$.
    \end{proof}

    Now, for an $i \geq n+1$ consider such a solution of the tree sparsity problem for size $i$. Apart from $n+1$ vertices of path $c'$ and $r$, such a solution contains a prefix of path $a'$ and a prefix of path $b'$. The number of vertices with weight one in each of these paths indicates how many indices in the corresponding vectors have a value equal to or smaller than the number of vertices of the solution in that path. This implies that if vector $s$ is the solution of the tree sparsity problem then $(a \ttimes b)_i$ is equal to $j-n-1$ for the smallest $j$ such that $s_j = n+1+i$. 
    
    Notice that since $a$ and $b$ are $\dmax$-distance bounded, both $a_{n-1}$ and $b_{n-1}$ are bounded by $\dmax n$ and thus one can solve the corresponding tree sparsity problem in time $T((\dmax+1)n+1)$. Since $T$ is at most quadratic, the running time is $O(T(\dmax n))$.
\end{proof}

In Section~\ref{sec:maxplusconvolution}, we discussed that when input values for convolution are small integers bounded by $\emax$, an $\tildorder(\emax n)$ time algorithm can solve the problem exactly. Notice that when the input values are bounded by $\emax$, the input vectors are also $\emax$-distance bounded and thus an $\tildorder(T(\emax n))$ is also possible via a reduction to 0/1 tree sparsity. We showed in Section~\ref{sec:knapsack} that the knapsack problem reduces to knapsack convolution and used the prediction technique to solve the knapsack convolution in time $\tildorder(\vmax n)$ when the item values are bounded by $\vmax$. However, it follows from the definition that if the item values are integers bounded by $\vmax$ then knapsack convolution is a special case of $\vmax$-distance bounded convolution. Thus, a $T(n)$ time solution for 0/1 tree sparsity implies a $\tildorder(T(\vmax n))$ time solution for bounded value knapsack. It is interesting to observe that the solutions of both tree sparsity and tree separability are $\wmax$-distance bounded if the weights are integers bounded by $\wmax$. Therefore, a $T(n)$ time algorithm for 0/1 tree sparsity also implies a $\tildorder(T(\wmax n))$ time algorithm for tree sparsity and tree separability when the weights are integers bounded by $\wmax$. In particular, an $\tildorder(n)$ time algorithm for 0/1 tree sparsity yields $\tildorder(\wmax n)$ time algorithms for both tree sparsity and tree separability when the weights are integers bounded by $\wmax$.

\begin{theorem}
    A $T(n)$ time solution for 0/1 tree sparsity implies the following:
    \begin{itemize}
        \item A $\tildorder(T(\emax n))$ time algorithm for bounded knapsack convolution when the values are integers in range $[0,\emax]$.
        \item A $\tildorder(T(\vmax n))$ time algorithm for 0/1 knapsack when the item values are integers in range $[0,\vmax]$.
        \item A $\tildorder(T(\wmax n))$ time algorithm for tree sparsity when the vertex weights are integers in range $[0,\wmax]$.
        \item A $\tildorder(T(\wmax n))$ time algorithm for tree separability when the vertex weights are integers in range $[0,\wmax]$.
    \end{itemize}
\end{theorem}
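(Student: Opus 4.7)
The plan is to establish each bullet by reducing the corresponding problem to $\dmax$-distance bounded $(\max,+)$ convolution for an appropriate $\dmax$, and then invoking the preceding lemma in this section, which turns any $T(n)$-time 0/1 tree sparsity algorithm into a $\tildorder(T(\dmax n))$-time algorithm for $\dmax$-distance bounded convolution. For bounded knapsack convolution the reduction is immediate: input vectors with values in $[0,\emax]$ trivially satisfy $|a_i-a_{i-1}|\le \emax$, so they are $\emax$-distance bounded and the preceding lemma gives the claim.

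For 0/1 knapsack with item values in $[0,\vmax]$, I would compose two reductions. The already-invoked reduction of Cygan \etal\ (used in Section~\ref{sec:knapsack}) expresses 0/1 knapsack as $O(\log n)$ knapsack-convolution calls whose knapsack-solution vectors $a$ satisfy $a_i-a_{i-1}\in[0,\vmax]$, because one additional unit of knapsack capacity can raise the optimum by at most one item's value. Such vectors are $\vmax$-distance bounded, so each convolution reduces to 0/1 tree sparsity at cost $\tildorder(T(\vmax n))$, giving the stated bound.

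For tree sparsity and tree separability, the plan mirrors the proof of Lemma~\ref{lemma:red2}: run the standard tree DP in which every internal vertex combines the solution vectors of its children via a $(\max,+)$ convolution, and decompose the tree with the heavy-path/spine decomposition of~\cite{sleator1983data} so that the convolutions along each path amortize to near-linear overhead. When vertex (respectively edge) weights lie in $[0,\wmax]$, every intermediate vector $a$ is $\wmax$-distance bounded: deleting a single vertex from a witness of $a_i$ produces a candidate of size $i-1$ of weight at least $a_i-\wmax$, and the trivial monotonicity yields the lower bound $a_i\ge a_{i-1}$. Each convolution therefore reduces to 0/1 tree sparsity at cost $\tildorder(T(\wmax |a|))$, and the usual heavy-path charging argument bounds the total cost by $\tildorder(T(\wmax n))$.

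The main obstacle is justifying the $\wmax$-distance bound for \emph{all} vectors produced during the recursion rather than just the final answer: the property that single-vertex modifications shift the optimum by at most $\wmax$ must be preserved under the merges performed by the DP. Fortunately the merges themselves are $(\max,+)$ convolutions of $\wmax$-distance bounded vectors, whose output is again $\wmax$-distance bounded, so the invariant propagates. The remaining care is bookkeeping: arranging the spine decomposition so that the multiplicative $\wmax$ factor in each convolution does not compound across levels, which proceeds exactly as in Lemma~\ref{lemma:red2}.
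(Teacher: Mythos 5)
Your proposal is correct and follows essentially the same route as the paper: every bullet is obtained by combining the preceding lemma (a $T(n)$-time 0/1 tree sparsity algorithm solves $\dmax$-distance bounded convolution in $\tildorder(T(\dmax n))$ time) with the observations that bounded-value vectors and knapsack solution vectors are $\emax$- resp.\ $\vmax$-distance bounded, that 0/1 knapsack reduces to knapsack convolution via the reduction of Cygan \etal, and that the spine-decomposition DP of Lemma~\ref{lemma:red2} only ever convolves $\wmax$-distance bounded vectors for tree sparsity and tree separability. Your added remarks (the one-unit/one-vertex perturbation bounds and the closure of distance-boundedness under $(\max,+)$ convolution) are exactly the facts the paper asserts without proof, so no gap remains.
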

\newpage
\section{Reduction to  Polynomial Convolution}\label{sec:maxplusconvolution:reduction}
Given two integer vectors $a$ and $b$ with the condition that all $a_i$'s and $b_i$'s are in range $[0,\emax ]$ we wish to find the convolution of the two vectors in the $(\max,+)$ setting. We denote this by $a \ttimes b$. Let $n = |a| + |b|$ and define $a'$ as a vector with the same size as $a$ as follows: $$\forall 0 \leq i < |a| \hspace{2cm}a'_i = (n+1)^{a_i}.$$ Similarly, we assume $b'$ is a vector with the same size as $b$ such that $$\forall 0 \leq i < |b| \hspace{2cm}b'_i = (n+1)^{b_i}.$$
This way, for all $i$ and $j$ we have $a'_i b'_j = (n+1)^{a_i} (n+1)^{b_i} = (n+1)^{a_i + b_j}$. Let $c = a \ttimes b$ be the solution of the problem and $c' = a' \times b'$ be the polynomial multiplication of $a'$ and $b'$, thus, for every $0 \leq i < |c'|$ we have
\begin{equation*}
c'_{i} = \sum_{j=0}^i a'_i b'_{i-j}
= \sum_{j=0}^i (n+1)^{a_j+b_{i-j}}
\leq i \max _{j=0}^i (n+1)^{a_j+b_{i-j}}
\leq i (n+1)^{\max _{j=0}^i  {a_j+b_{i-j}}}
\leq i (n+1)^{c_i}
\leq n (n+1)^{c_i}.
\end{equation*}

Similarly one can show that 
\begin{equation*}
c'_{i} = \sum_{j=0}^i a'_i b'_{i-j}
= \sum_{j=0}^i (n+1)^{a_j+b_{i-j}}
\geq  \max _{j=0}^i (n+1)^{a_j+b_{i-j}}
\geq  (n+1)^{\max _{j=0}^i  {a_j+b_{i-j}}}
\geq  (n+1)^{c_i},
\end{equation*}
and thus $(n+1)^{c_i} \leq c'_i \leq n(n+1)^{c_i}$. Therefore, $c_i = \lfloor \log_{n+1} c'_i \rfloor$ and thus one can determine vector $c$ from $c'$. This reduction shows that any algorithm for computing $c'$ in time $\tildorder(\emax n)$ yields a similar running time for computing $c$. Polynomial multiplication can be computed in time $\tildorder(n)$ via FFT when the running time of each arithmetic operation is $O(1)$~\cite{thomas2001introduction}. However, here, every element of $a'$ and $b'$ can be as large as $(n+1)^\emax $ and thus every arithmetic operation for such values takes time $\tildorder(\emax )$. Thus, the total running time of the FFT method for computing $c'$ is $\tildorder(\emax n)$. This yields an $\tildorder(\emax n)$ time algorithm for computing $a \ttimes b$.

\begin{algorithm}[H]
	\label{alg:mult1}
	\KwData{Two vectors $a$ and $b$}
	\KwResult{$a \ttimes b$}
	
	$a' \leftarrow \text{ a vector of size }|a| \text{ s.t } a'_i = (n+1)^{a_i}$\;	
	$b' \leftarrow \text{ a vector of size }|b| \text{ s.t } b'_i = (n+1)^{b_i}$\;	
	
	$c' = \mathsf{FFTPolynomialConvolution}(a',b')$\;
	$c \leftarrow \text{ a vector of size }|c'| \text{ s.t } c_i := \lfloor \log_{n+1}c'_i \rfloor$\;	
	\textbf{Return } $c$\;		
	\caption{\textsf{BoundedRangeConvolution}($a,  b$)}
\end{algorithm}

\begin{lemma}[also used in~\cite{zwick2002all,chan2015clustered,bringmann2017near,backurs2017better,zwick1998all}]\label{lemma:mult1}
	Given two vectors $a$ and $b$ whose values are integers in the range $[0,\emax ]$, there exists an algorithm to compute $a \ttimes b$ in time $\tildorder(\emax n)$ where $n = |a| + |b|$.
\end{lemma}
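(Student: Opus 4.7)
The plan is to reduce $(\max,+)$ convolution of integer-valued vectors to a single $(+,\cdot)$ polynomial multiplication by encoding values as exponents of a sufficiently large base. Concretely, with $n = |a| + |b|$, I would define $a'_i := (n+1)^{a_i}$ and $b'_i := (n+1)^{b_i}$ so that $a'_i \cdot b'_j = (n+1)^{a_i+b_j}$. Then $c' := a' \times b'$ satisfies $c'_k = \sum_{j} (n+1)^{a_j + b_{k-j}}$, a sum of at most $n$ terms each of the form $(n+1)^{\text{exponent}}$.

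The key arithmetic observation to verify is that this sum is tightly controlled by its largest term, whose exponent is exactly $(a \ttimes b)_k$. On the one hand, $c'_k \geq (n+1)^{(a \ttimes b)_k}$ since the maximizing summand alone contributes that much. On the other hand, $c'_k \leq n \cdot (n+1)^{(a \ttimes b)_k} < (n+1)^{(a\ttimes b)_k + 1}$, using that the base $n+1$ strictly exceeds the number of summands $n$. These two inequalities pin down $(a \ttimes b)_k = \lfloor \log_{n+1} c'_k \rfloor$, so $a \ttimes b$ is recoverable coordinate-wise from $c'$ in linear time.

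For the running time, I would invoke FFT-based polynomial multiplication, which uses $\tildorder(n)$ arithmetic operations. The only subtlety is that the coefficients $a'_i, b'_i$ can be as large as $(n+1)^{\emax}$, so each has bit-length $\Theta(\emax \log n)$. Using fast big-integer arithmetic, each operation costs $\tildorder(\emax)$, yielding total time $\tildorder(\emax n)$; the final coordinate-wise extraction of $\lfloor \log_{n+1} c'_k \rfloor$ is a lower-order term. The main obstacle (really a careful bookkeeping task rather than a conceptual one) is making sure the FFT step genuinely runs in $\tildorder(\emax n)$ rather than $\tildorder(\emax^2 n)$ once one accounts for the bit-length blow-up of the intermediate products; this is standard and follows from fast integer multiplication applied to FFT coefficients.
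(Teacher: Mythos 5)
Your proposal matches the paper's argument essentially verbatim: encode values as exponents via $a'_i = (n+1)^{a_i}$, $b'_i = (n+1)^{b_i}$, use the bounds $(n+1)^{(a\ttimes b)_k} \leq c'_k \leq n(n+1)^{(a\ttimes b)_k}$ to recover each entry as $\lfloor \log_{n+1} c'_k \rfloor$, and charge $\tildorder(\emax)$ per arithmetic operation in the FFT to get $\tildorder(\emax n)$ total. This is exactly the reduction the paper gives, so the proof is correct and takes the same route.
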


Of course Lemma~\ref{lemma:mult1} holds whenever the range of the values of the vectors is an interval of length $\emax$. It has been also shown that Lemma~\ref{lemma:mult1} holds even when the input values are in set $\{0,1,\ldots,\emax,-\infty\}$. The reason behind this is that since the numbers are small, one can replace $-\infty$ by $-2\emax$ and solve the problem in time $\tildorder(n \emax)$ using the same procedure. Then, we replace every negative value of the solution by $-\infty$. The same idea can solve the problem when input values are allowed to be $\infty$ as well as $-\infty$. 
An immediate consequence of Lemma~\ref{lemma:mult1} is that even if the vectors do not have integer values, still one can use this method to approximate the solution within a small additive error. We show this in Lemma~\ref{lemma:mult2}.

\begin{lemma}\label{lemma:mult2}
	Let $a$ and $b$ be two given vectors whose values are real numbers in range $[0,\emax ]$. One can compute in time $\tildorder(\emax n)$ a vector $c$ with the same size as $|a \ttimes b|$ such that $(a \ttimes b)_i -1 < c_i \leq (a \ttimes b)_i$ holds for all $0 \leq i < |c|$ where $n = |a| + |b|$.
\end{lemma}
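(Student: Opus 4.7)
The plan is to reduce to Lemma~\ref{lemma:mult1} by a simple scale-and-round trick. Define integer vectors $a'$ and $b'$ of sizes $|a|$ and $|b|$ by
\[
a'_i := \lfloor 2 a_i \rfloor, \qquad b'_j := \lfloor 2 b_j \rfloor.
\]
All entries of $a'$ and $b'$ are integers lying in the range $[0, \lceil 2\emax \rceil]$, so by Lemma~\ref{lemma:mult1} we can compute $d := a' \ttimes b'$ in time $\tildorder(2\emax \cdot n) = \tildorder(\emax n)$. The algorithm then returns the real-valued vector $c$ with $c_i := d_i / 2$.

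For the upper bound $c_i \leq (a \ttimes b)_i$, observe that for every $j$ we have $\lfloor 2 a_j \rfloor + \lfloor 2 b_{i-j} \rfloor \leq 2 a_j + 2 b_{i-j}$, so $d_i \leq \max_j (2a_j + 2b_{i-j}) = 2(a \ttimes b)_i$. For the strict lower bound $c_i > (a \ttimes b)_i - 1$, let $j^\star$ be any index achieving the maximum in the definition of $(a \ttimes b)_i$; then
\[
d_i \;\geq\; \lfloor 2 a_{j^\star} \rfloor + \lfloor 2 b_{i-j^\star} \rfloor \;>\; (2 a_{j^\star} - 1) + (2 b_{i-j^\star} - 1) \;=\; 2(a \ttimes b)_i - 2,
\]
where the strict middle inequality comes from $\lfloor x \rfloor > x - 1$. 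Dividing by $2$ yields $c_i > (a \ttimes b)_i - 1$.

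The running time is dominated by the invocation of Lemma~\ref{lemma:mult1}: forming $a'$ and $b'$ and rescaling $d$ are both $O(n)$. I do not anticipate any real obstacle here; the only subtlety worth flagging is the choice of scaling factor, which must be strictly greater than $1$ to convert the error bound ``at most $2/K$'' into a \emph{strict} additive error below $1$. The factor $K=2$ is the smallest integer that works while keeping the entries of $a'$, $b'$ integer and in a range of size $O(\emax)$. Note also that the result carries over to the setting where some input entries are $\pm\infty$, exactly as remarked after Lemma~\ref{lemma:mult1}.
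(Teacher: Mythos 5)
Your proposal is correct and is essentially identical to the paper's own proof: both scale by $2$, take entrywise floors, invoke Lemma~\ref{lemma:mult1} on the resulting integer vectors with values in a range of size $O(\emax)$, and halve the output, using $2x-1 < \lfloor 2x\rfloor \leq 2x$ to get the strict additive error below $1$. No further comment is needed.
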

\begin{proof}
For a vector $x$, let $\lfloor x \rfloor$ be an integer vector of the same size where $\lfloor x \rfloor_i  = \lfloor x_i \rfloor$. Moreover, for a vector $x$ we define $\alpha x$ as a vector of the same size where $(\alpha x)_i = \alpha x_i$. Note that for a given vector $x$, both $\lfloor x \rfloor$ and $\alpha x$ can be computed in time $O(n)$ from $x$. We argue that $c = 1/2(\lfloor 2a \rfloor \ttimes \lfloor 2b \rfloor)$ meets the conditions of the lemma. This observation proves the lemma since all values of $\lfloor 2a \rfloor$ and $\lfloor 2b \rfloor$ are integers in range $[0,2\emax ]$ and thus one can compute $\lfloor 2a \rfloor \ttimes \lfloor 2b \rfloor$ in time $\tildorder(\emax n)$. With additional $O(n)$ operations we compute $c$ from $\lfloor 2a \rfloor \ttimes \lfloor 2b \rfloor$.

Notice that for every $i$ we have $2a_i -1 < \lfloor 2a_i \rfloor \leq 2a_i$ and similarly $2b_i -1 < \lfloor 2b_i \rfloor \leq 2b_i$. Therefore, for every $0 \leq i < |a \ttimes b|$ we have $(2a \ttimes 2b)_i - 2 < (\lfloor 2a \rfloor \ttimes \lfloor 2b \rfloor)_i \leq (2a \ttimes 2b)_i$. This implies that $(a \ttimes b)_i - 1 < (1/2(\lfloor 2a \rfloor \ttimes \lfloor 2b \rfloor))_i \leq (a \ttimes b)_i$ holds for all $0 \leq i < |a \ttimes b|$ and thus the proof is complete.
\end{proof}

Similar to Lemma~\ref{lemma:mult1}, Lemma~\ref{lemma:mult2} also holds when $-\infty$ and $\infty$ are allowed in the input.
\begin{algorithm}[H]
	\label{alg:mult2}
	\KwData{Two vectors $a$ and $b$}
	\KwResult{An approximate solution to $a \ttimes b$}
	
	$a' \leftarrow \lfloor 2a \rfloor$\;	
	$b' \leftarrow \lfloor 2b \rfloor$\;	
	
	$c' = \mathsf{BoundedRangeConvolution}(a',b')$\;
	$c = c' / 2$\;	
	\textbf{Return } $c$\;		
	\caption{\textsf{ApproximateConvolution}($a,  b$)}
\end{algorithm}
\newpage
\section{Omitted Proofs of Section~\ref{sec:maxplusconvolution:solutionrange}}\label{sec:step2-omitted}

\begin{proof}[of Lemma~\ref{lemma:rangegood}]

\textbf{first condition:} Suppose for the sake of contradiction that the  condition of the lemma doesn't hold for some $i$ and $j$. We assume w.l.o.g. that $a_i - b_i \geq a_j - b_j$ and thus $a_i - b_i > a_j - b_j + \emax $. This implies that $a_i + b_j > a_j + b_i +\emax $ and hence $(a \ttimes b)_{i+j} \geq a_i + b_j > a_j + b_i +\emax $ which contradicts the first assumption of the lemma.

\textbf{second condition:} Based on the assumption of the lemma we have $(a \ttimes b)_{i+k} - \emax  \leq a_i+b_k \leq (a \ttimes b)_{i+k}$. Similarly, $(a \ttimes b)_{2j} - \emax  \leq a_j + b_j \leq (a \ttimes b)_{2j}$. Since $j-i = k-j$ then $2j = i+k$ and thus both $a_j + b_j$ and $a_i + b_k$ are lower bounded by  $(a \ttimes b)_{i+k} - \emax $ and upper bounded by $(a \ttimes b)_{i+k}$. Hence we have $|(a_i + b_k) - (a_j + b_j)| \leq \emax $. If we add the term  $[(b_k-a_k) - (b_j-a_j)]$ to the expression $(a_i + a_k) - 2a_j$ we obtain
\begin{equation*} 
\begin{split}
|(a_i + a_k) - 2a_j + [(b_k-a_k) - (b_j-a_j)]| &= |(a_i + (b_k - a_k) + a_k) - (2a_j + (b_j - a_j))| \\
& = |(a_i + b_k) - (a_j + b_j)|\\
& \leq \emax,
\end{split}
\end{equation*}
which implies $|(a_i + a_k) - 2a_j| \leq \emax  + |(b_k-a_k) - (b_j-a_j)|$. Recall that we proved $|(b_k-a_k) - (b_j-a_j)| \leq \emax $ and thus $|(a_i + a_k) - 2a_j| \leq 2\emax $. 
\end{proof}

\begin{proof}[of Lemma~\ref{lemma:range3}]
	We present a simple algorithm and show that (i) it provides a correct solution for the problem and (ii) its running time is $\tildorder(\emax n)$. In this algorithm, we construct two vectors $a'$ and $b'$ from $a$ and $b$ such that all values of $a'$ and $b'$ are in range $[0,6\emax ]$ and $a \ttimes b$ can be computed from $a' \ttimes b'$. The key idea here is that if we add a constant $C$ to all components of either $a$ or $b$, this value is added to  all elements of $(a \ttimes b)$. Moreover, for a fixed $C$, if we add  a value of $iC$ to every element $i$ of both $a$ and $b$, then every $(a \ttimes b)_{i}$ is increased by exactly $iC$. Based on these observations, we construct two vectors $a'$ and $b'$ of size $n$ from $a$ and $b$ as follows:
	\begin{align*}
	\hspace{2cm}a'_i &:= a_i + [3\emax  - a_0]&+ i[(a_0-a_{n-1})/(n-1)]&\hspace{2cm}\\
	\hspace{2cm}b'_i &:= b_i + [3\emax +a_{n-1} - a_0 -b_{n-1}]&+ i[(a_0-a_{n-1})/(n-1)].&\hspace{2cm}
	\end{align*}
	The transformation formulas are basically the application of the above operations to $a$ and $b$ which are delicately chosen to make sure the values of $a'$ and $b'$ fall in range $[0,6\emax ]$. Notice that vectors $a'$ and $b'$ might have fractional values. However, we show that all the values of these vectors are in range $[0,6\emax ]$. Thus, we can use the algorithm of Lemma~\ref{lemma:mult2} to compute in time $\tildorder(\emax n)$ an approximate solution $c'$ to $a' \ttimes b'$ within an error less than $1$. Next, based on vector $c'$ we construct a solution $c$ as follows:
	$$c_i := \lceil c'_i - [6\emax +a_{n-1} - 2a_0 -b_{n-1}] - i[(a_0-a_{n-1})/(n-1)] \rceil.$$
	Finally, we report $c$ as the solution to $a \ttimes b$. This procedure is shown in Algorithm~\ref{alg:solutionrangegood1}.
	
	\begin{algorithm}
		\KwData{Two integer vectors $a$ and $b$ of size $n$ meeting the condition of Lemma~\ref{lemma:range3}}
		\KwResult{$a \ttimes b$}
		
		$a' \leftarrow \text{ a vector of size }n \text{ s.t } a'_i = a_i + [3\emax  - a_0]+ i[(a_0-a_{n-1})/(n-1)]$\;	
		$b' \leftarrow \text{ a vector of size }n \text{ s.t } b'_i= b_i + [3\emax +a_{n-1} - a_0 -b_{n-1}]+ i[(a_0-a_{n-1})/(n-1)]$\;	
		
		$c' \leftarrow \mathsf{ApproximateConvolution}(a',b')$\;
		$c \leftarrow \text{ a vector of size }2n-1 \text{ s.t } c_i := \lceil c'_i - [6\emax +a_{n-1} - 2a_0 -b_{n-1}] - i[(a_0-a_{n-1})/(n-1)]\rceil$\;	
		\textbf{Return } $c$\;		
		\caption{\textsf{DistortedNTimesNConvolution}($a,  b$)}
		\label{alg:solutionrangegood1}
	\end{algorithm}

	In what follows, we show that $c$ is indeed equal to $a \ttimes b$ and that our algorithm runs in time $\tildorder(\emax n)$. We first point out a few observations regarding $a'$ and $b'$:
	
	\begin{observation}\label{observation:1}
		$a'_0 = a'_{n-1}  = b'_{n-1} = 3\emax $.
	\end{observation}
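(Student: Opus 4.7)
The plan is to verify each of the three equalities by direct substitution into the definitions of $a'$ and $b'$ given in Algorithm~\ref{alg:solutionrangegood1}. Each value is defined as the original vector entry plus a constant shift plus a linear term $i[(a_0-a_{n-1})/(n-1)]$, so evaluating at the endpoints $i = 0$ and $i = n-1$ causes cancellation.

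For $a'_0$, I would plug $i = 0$ into the definition of $a'_i$: the linear term vanishes, and we are left with $a_0 + [3\emax - a_0] = 3\emax$. For $a'_{n-1}$, plugging $i = n-1$ gives a linear term of $a_0 - a_{n-1}$, so the whole expression becomes $a_{n-1} + [3\emax - a_0] + (a_0 - a_{n-1}) = 3\emax$. For $b'_{n-1}$, again the linear term at $i = n-1$ evaluates to $a_0 - a_{n-1}$, and plugging in gives $b_{n-1} + [3\emax + a_{n-1} - a_0 - b_{n-1}] + (a_0 - a_{n-1}) = 3\emax$ after cancellation.

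There is no real obstacle here; the observation is a bookkeeping check confirming that the affine transformations applied to $a$ and $b$ were calibrated precisely so that the two endpoints of $a'$ and the right endpoint of $b'$ all pin to the same value $3\emax$. This pinning is what will later allow Lemma~\ref{lemma:rangegood} (applied to the transformed vectors) to be used to bound the entire range of $a'$ and $b'$ within $[0, 6\emax]$, which is the property actually needed before invoking Lemma~\ref{lemma:mult2}.
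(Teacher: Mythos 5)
Your proposal is correct and matches the paper's own proof, which likewise verifies each of the three equalities by substituting $i=0$ and $i=n-1$ into the definitions of $a'_i$ and $b'_i$ and simplifying until the terms cancel to $3\emax$. Nothing is missing.
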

	\begin{proof}
		According to the formula, 
		\begin{equation*}
		\begin{split}
		a'_{n-1} &= a_{n-1} + [3\emax  - a_0]+ (n-1)[(a_0-a_{n-1})/(n-1)]\\
		& = a_{n-1} + [3\emax  - a_0]+ (a_0-a_{n-1}) \\
		& = 3\emax  + [a_0 - a_0]+ (a_{n-1}-a_{n-1}) \\ 
		&=  3\emax .\\
		\end{split}
		\end{equation*}
		Moreover,
		\begin{equation*}
		\begin{split}
		a'_0 &= a_0 + [3\emax  - a_0]+ 0[(a_0-a_{n-1})/(n-1)]\\
		& = a_0 + [3\emax  - a_0] \\
		&= 3\emax  + [a_0 - a_0] \\
		&= 3\emax .\\
		\end{split}
		\end{equation*} 
		Finally, for $b'_{n-1}$ we have
		\begin{align*}
		b'_{n-1} &= b_{n-1} + [3\emax +a_{n-1} - a_0 -b_{n-1}]+ (n-1)[(a_0-a_{n-1})/(n-1)]\\
		&= b_{n-1} + [3\emax +a_{n-1} - a_0 -b_{n-1}]+ (a_0-a_{n-1})\\
		&= 3\emax  + [b_{n-1} +a_0 - a_0 -b_{n-1}]+ (a_{n-1}-a_{n-1})\\
		&= 3\emax . 
		\qedhere
		\end{align*}
	\end{proof}
	
	\begin{observation}\label{observation:2}
		For every $0 \leq i,j < n$ we have $a'_i + b'_j \geq (a' \ttimes b')_{i+j} - \emax $ and $$(a' \ttimes b')_{i+j} = (a \ttimes b)_{i+j} + [6\emax +a_{n-1} - 2a_0 -b_{n-1}] + (i+j)[(a_0-a_{n-1})/(n-1)].$$ 
	\end{observation}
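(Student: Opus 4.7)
The plan is to verify both assertions by straightforward algebraic substitution. Let me abbreviate the three shift quantities used in the definitions of $a'$ and $b'$: put $C_a = 3\emax - a_0$, $C_b = 3\emax + a_{n-1} - a_0 - b_{n-1}$, and $D = (a_0 - a_{n-1})/(n-1)$. Then by the definitions in Algorithm~\ref{alg:solutionrangegood1} we have $a'_i = a_i + C_a + iD$ and $b'_j = b_j + C_b + jD$.

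For the second identity, I would first compute $a'_i + b'_j = a_i + b_j + C_a + C_b + (i+j)D$. The key observation is that the added term $C_a + C_b + (i+j)D$ depends only on the sum $i+j$, not on the individual split. Therefore, taking the maximum over all pairs with $i+j=k$ commutes with the additive shift, yielding
\[
(a'\ttimes b')_k \;=\; \max_{i+j=k}(a'_i+b'_j) \;=\; \max_{i+j=k}(a_i+b_j) + C_a + C_b + kD \;=\; (a\ttimes b)_k + C_a + C_b + kD.
\]
Substituting $C_a + C_b = (3\emax - a_0)+(3\emax + a_{n-1}-a_0-b_{n-1}) = 6\emax + a_{n-1} - 2a_0 - b_{n-1}$ matches the claimed expression exactly.

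For the first inequality, I would note that the quantity $(a'\ttimes b')_{i+j} - a'_i - b'_j$ is invariant under the shifts. Indeed, using the computed form of $(a'\ttimes b')_{i+j}$ together with the definitions of $a'_i$ and $b'_j$, all occurrences of $C_a$, $C_b$, $iD$, and $jD$ cancel, leaving $(a\ttimes b)_{i+j} - a_i - b_j$. By the hypothesis of Lemma~\ref{lemma:range3}, this is at most $\emax$, so $a'_i + b'_j \geq (a'\ttimes b')_{i+j} - \emax$ for every $0 \leq i,j < n$.

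There is no real obstacle here; the entire statement reduces to the fact that adding a constant to every component of one vector and adding a linear-in-index ramp $iD$ to both vectors shifts the $(\max,+)$ convolution in a predictable way (namely by the same constant and by $kD$ at index $k$), while preserving all pairwise gaps $(a\ttimes b)_{i+j} - a_i - b_j$.
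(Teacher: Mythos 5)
Your proof is correct and follows essentially the same route as the paper's: expand $a'_i+b'_j$, note that the added shift $C_a+C_b+(i+j)D$ depends only on $i+j$ so the $(\max,+)$ convolution shifts by exactly that amount, and then transfer the hypothesis $(a\ttimes b)_{i+j}-a_i-b_j\leq\emax$ to the primed vectors. No gaps; the bookkeeping with $C_a$, $C_b$, $D$ just makes the same computation slightly tidier.
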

	\begin{proof}
		Based on the construction of $a'$ and $b'$ we have 
		\begin{equation*}
		\begin{split}
		a'_i + b'_j &= a_i + [3\emax  - a_0]+ i[(a_0-a_{n-1})/(n-1)] \\
		&+ b_j + [3\emax +a_{n-1} - a_0 -b_{n-1}]+ j[(a_0-a_{n-1})/(n-1)]\\
		&= a_i + b_j + [6\emax +a_{n-1} - 2a_0 -b_{n-1}] + i[(a_0-a_{n-1})/(n-1)] + j[(a_0-a_{n-1})/(n-1)]\\
		&= a_i + b_j + [6\emax +a_{n-1} - 2a_0 -b_{n-1}] +(i+j)[(a_0-a_{n-1})/(n-1)].
		\end{split}
		\end{equation*}	
		Notice that $[6\emax +a_{n-1} - 2a_0 -b_{n-1}] + (i+j)[(a_0-a_{n-1})/(n-1)]$ is the same for all pairs of indices that sum up to $i+j$ and thus $$(a' \ttimes b')_{i+j} = (a \ttimes b)_{i+j} + [6\emax +a_{n-1} - 2a_0 -b_{n-1}] + (i+j)[(a_0-a_{n-1})/(n-1)].$$ Since we have $a_i + b_j \geq (a \ttimes b)_{i+j} - \emax $, by adding the $[6\emax +a_{n-1} - 2a_0 -b_{n-1}] + (i+j)[(a_0-a_{n-1})/(n-1)]$ part to both sides of the inequality we get $a'_i + a'_j \geq  (a' \ttimes b')_{i+j} - \emax $.
	\end{proof}
	
	Next, we use Observations~\ref{observation:1} and~\ref{observation:2} to show that the values of both vectors $a'$ and $b'$ are in the interval $[0,6\emax ]$. Observation~\ref{observation:2} states that both $a'$ and $b'$ meet the conditions of Lemma~\ref{lemma:rangegood} and thus for every $0 \leq i < j <k < n$ subject to $j-i$ = $k-j$ we have $|a'_i + a'_k - 2a_j| \leq 2\emax $. Moreover, Observations~\ref{observation:1} implies that both $a'_0$ and $a'_{n-1}$ are equal to $3\emax $. Now, let $j = \arg\max a'_i$ and suppose for the sake of contradiction that $a'_j > 5\emax $. If $2j < n$, then by setting $i = 0$ and $k = 2j$ Lemma~\ref{lemma:rangegood} implies that $a'_0 + a'_{2j} \geq 2a'_j - 2\emax $. Therefore, since $a'_0 = 3\emax $ this implies $a'_{2j} - a'_j \geq a'_j - 5\emax $ and thus if $a'_j > 5\emax $ then $a'_{2j} -a'_j > 0$ contradicts the maximality of $a'_j$. If $2j \geq n$, then by setting $i = 2j-n-1$ and $k = n-1$ one could show that if $a'_j > 5\emax $, then $a'_{2j-n-1} > a'_{j}$ holds which again contradicts the maximality of $a'_j$. One can make a similar argument and show that if $j = \arg\min a'_i$ and $a'_j < \emax $, then either of $a'_{2j}$ or $a'_{2j-n-1}$ should be less than $a'_j$ which contradicts the minimality of $a'_j$. Therefore, all the values of vector $a'$ lie in the interval $[\emax ,5\emax ]$.
	
	Recall that by Observation~\ref{observation:2}, $a'$ and $b'$ meet the condition of Lemma~\ref{lemma:rangegood} and thus for all $0 \leq i < n$ we have 
	\begin{equation*}
	\begin{split}
	|(a'_i - b'_i) - (a'_{n-1} - b'_{n-1})| & = |(a'_i - b'_i) - (3\emax  - 3\emax )| \\
	& = |(a'_i - b'_i)|\\
	& \leq \emax .\\
	\end{split}
	\end{equation*}
	Since for all $0 \leq i < n$, $\emax  \leq a'_i \leq 5\emax $ holds, we have $0 \leq b'_i  \leq 6\emax $ which shows that the values of both $a'$ and $b'$ lie in the interval $[0,6\emax ]$. However, since the values are not integer, still we cannot compute $a' \ttimes b'$ in time $\tildorder(\emax n)$. Instead, we can compute in time $\tildorder(\emax n)$ a vector $c'$ such that $(a' \ttimes b')_i -1 < c'_i \leq (a' \ttimes b')_i$ holds for all $0 \leq i < |c'|$. Observation~\ref{observation:2} implies that for all $0 \leq i < |c'|$ we have 
	$$(a \ttimes b)_i -1 < c'_{i} - [6\emax +a_{n-1} - 2a_0 -b_{n-1}] - i[(a_0-a_{n-1})/(n-1)] \leq (a \ttimes b)_i.$$
	Notice that since both $a$ and $b$ are integer vectors, $a \ttimes b$ is also an integer vector and thus all its elements have integer values. Therefore, $\lceil c'_i - [6\emax +a_{n-1} - 2a_0 -b_{n-1}] - i[(a_0-a_{n-1})/(n-1)]\rceil = (a \ttimes b)_i$ for all $0 \leq i < |c'|$ and hence our algorithm computes $a \ttimes b$ correctly.
	
	With regard to the running time, all steps of the algorithm run in time $O(n)$, except where we approximate $c'$ from $a'$ and $b'$ which takes time $\tildorder(\emax n)$ since $0 \leq a'_i,b'_i \leq 6\emax $ (Lemma~\ref{lemma:mult2}). Thus, the total running time of our algorithm is $\tildorder(\emax n)$.
\end{proof}

\begin{proof}[of Lemma~\ref{lemma:range4}]
	As mentioned earlier, we show this lemma by a direct reduction to Lemma~\ref{lemma:range3}. We assume w.l.o.g. that $|b| \geq |a|$. Let $l = \lceil |b| / |a| \rceil$  and construct $l$ intervals  $(x_i, y_i)$ of length $|a|$ (i.e. $y_i = x_i + |a| - 1 $ for all $i$) as follows: for $1 \leq  i < l$ set $x_i = (i-1) |a|$ and $y_i = i |a|-1$. Also, set $x_l = |b| - |a|$ and $y_l = |b|-1$. This way, every $0 \leq i < |b|$ appears in at least one interval.
	
	Next, we construct $l$ vectors $b^1$, $b^2$, $\ldots$, $b^l$ from $b$ where every $b^i$ is a vector of length $|a|$ and $b^{i}_j = b_{x_i+j}$. We next compute $c^i = a \ttimes b^i$ for all $1 \leq i \leq l$, each in time $\tildorder(\emax |a|)$ using Lemma~\ref{lemma:range3}. Thus, the total running time of this step is $\tildorder(\emax |a|l) = \tildorder(\emax (|a|+|b|))$.
	
	Finally, we construct a solution of size $|a|+|b|-1$ initially containing $-\infty$ for all elements and for every vector $0 \leq j < 2|a|-1$ and $c^i$, we set $c_{x_i+j} := \max\{c_{x_i+j}, c^i_j\}$. This takes a total time of $O(|a| l) = O(|a| + |b|)$ and therefore the total running time of the algorithm is $\tildorder(\emax (|a|+|b|))$.
	
	\begin{algorithm}
		\KwData{Two integer vectors $a$ and $b$ meeting the condition of Lemma~\ref{lemma:range4}}
		\KwResult{$a \ttimes b$}
		
		$l \leftarrow \lceil |b| / |a| \rceil$\;
		\For {$i \in [1,l-1]$}{
			$x_i \leftarrow (i-1) |a|$\;
			$y_i \leftarrow i |a|-1$\;
		} 	
		$x_l \leftarrow |b| - |a| $\;
		$y_l \leftarrow |b| -1$\;
		
		\For {$i \in [1,i]$}{
			$b^i \leftarrow $\text{ a vector of size $|a|$ s.t. }$b^i_j = b_{x_i+j}$\;
			$c^i \leftarrow  \mathsf{DistortedNTimesNConvolution}(a,b^i)$\;
		} 
		
		$c \leftarrow $\text{ a vector of size $|a| + |b| -1 $ with values set to $-\infty$ initially}\;
		
		\For {$i \in [1,l]$}{
			\For {$j \in [0,2|a|-2]$}{
				$c_{x_i+j} \leftarrow \max\{c_{x_i+j}, c^i_j\}$\;
			}
		}
		\textbf{Return } $c$\;		
		\caption{\textsf{DistortedConvolution}($a,  b$)}\label{alg:solutionrangegood2}
	\end{algorithm}

	We argue that for all $0 \leq i < |a| + |b| -1$, $c_i \leq (a \ttimes b)_i$ holds. To this end, suppose for the sake of contradiction that $c_i > (a \ttimes b)_i$ for an $0 \leq i < |c|$. Due to our algorithm, $c_i = c^k_j$ for some $1 \leq k \leq l$ and $0 \leq j < 2|a|-1$ such that $x_k + j = i$. Hence, $c_i = (a \ttimes b^k)_j \leq (a \ttimes b)_{x_k+j}$ and since $x_k + j = i$ we have $c_i \leq (a \ttimes b)_i$ which contradicts $c_i > (a \ttimes b)_i$.
	Also, if $c_i < (a \ttimes b)_i$ for some $i$, then we argue that by definition $(a \ttimes b)_i = b_j + a_{i-j}$ for some $j$. Since every element of $b$ appears in at least one interval, there exists a $k$ such that $x_k \leq j \leq y_k$. Since $b^k_{j-x_k} = b_j$ we have $c^k_{i-x_k} = c^k_{j-x_k+(i-j)} \geq b_j + a_{i-j} =  (a \ttimes b)_i$. Note that $c_i \geq c^k_{i-x_k}$ and thus $c_i \geq (a \ttimes b)_i$ which contradicts $c_i < (a \ttimes b)_i$.
\end{proof}
\newpage
\section{Omitted Proofs of Section~\ref{sec:maxplusconvolution:prediction}}\label{sec:step3-omitted}

\begin{proof}[of Observation~\ref{observation:simple1}]
	Suppose for the sake of contradiction that $\projection(\alpha,\beta)$ is not an interval of $a$. This means that there are three integers $i < j < k$ such that $i,k \in \projection(\alpha,\beta)$ but $j \notin \projection(\alpha,\beta)$. Therefore, we have $x_i \leq x_k \leq \alpha$ and $y_k \geq y_i \geq \beta$ but either $x_j > \alpha$ or $y_j < \beta$. However, since the intervals are monotone, $x_j \leq x_k$ and also $y_j \geq y_i$ and thus $x_j \leq \alpha$ and $y_j \geq \beta$ which is a contradiction.
\end{proof}

\begin{proof}[of Observation~\ref{observation:simple2}]
	Similar to Observation~\ref{observation:simple1}, suppose for the sake of contradiction that $\projection(\alpha_1, \beta_1) \setminus \projection(\alpha_2, \beta_2)$ is not an interval. Since both of $\projection(\alpha_1, \beta_1)$ and $\projection(\alpha_2, \beta_2)$ are intervals (see Observation~\ref{observation:simple1}), this implies that there exist $i < j < k$ such that $i, j, k \in \projection(\alpha_1, \beta_1)$, $i, k \notin \projection(\alpha_2, \beta_2)$, and $j \in \setminus \projection(\alpha_2, \beta_2)$. In other words, $[\alpha_1, \beta_1]$ is a subset of $[x_i, y_i]$, $[x_j, y_j]$, and $[x_k, y_k]$. Moreover, none of $[x_i, y_i]$ and $[x_k, y_k]$ entirely contain $[\alpha_2, \beta_2]$ but $[x_j, y_j]$ contains $[\alpha_2, \beta_2]$.  Notice that if $[x_i, y_i]$ contains $\beta_2$ the monotonicity of the intervals implies that $[x_i, y_i]$ contains $[\alpha_2, \beta_2]$. Similarly, we can imply that $\alpha_2 \notin [x_k, y_k]$ and thus $[x_i, y_i] \cap [x_k, y_k] \subseteq [\alpha_2, \beta_2]$. Since $[\alpha_1, \beta_1]$ and $[\alpha_2, \beta_2]$ are disjoint, this implies $[\alpha_1, \beta_1]$ is empty and thus the solution is empty and contradicts  our assumption.
\end{proof}
\newpage
\section{An $\tildorder(\vmax t+n)$ Time Algorithm for Knapsack}\label{appendix:knapsack}
The result of this section follows from the reduction of~\cite{cygan2017problems} from knapsack to $(\max,+)$ convolution. However, for the sake of completeness we restate the reduction of~\cite{cygan2017problems} and use Theorem~\ref{theorem:knapsackconvolution} to solve knapsack in time $\tildorder(\vmax t+n)$, in case the item values are integers in range $[0,\vmax]$.

\begin{proof}[of Theorem~\ref{theorem:knapsack}] 
	For simplicity, we assume that the goal of the knapsack problem is to find the solution for all knapsack sizes $0 \leq i \leq t$.
	The blueprint of the reduction is as follows: We first divide the items into $\lceil \log t \rceil$ buckets in a way that the sizes of the items in every bucket differ by at most a multiplicative factor of two. Next, for each of the buckets, we solve the problem with respect to the items of that bucket. More precisely, for every bucket $i$ we compute a vector $c^i$ of size $t+1$ where $(c^i)_j$ is the solution to the knapsack problem for bucket $i$ and knapsack size $j$. Once we have these solutions, it only suffices to compute $c^1 \ttimes c^2 \ttimes \ldots \ttimes c^{\lceil \log t \rceil}$ and report the first $t+1$ elements as the solution. Therefore, the problem boils down to finding the solution for each of the buckets.
	
	In every bucket $i$, the size of the items is in range $[2^{i-1},2^i-1]$.  Now, if we fix a range $[r_1, r_2]$ for the item sizes, the maximum number of items used in any solution is $t / r_1$. Therefore, if we randomly put the items in $t / r_1$ categories, any fixed solution will have no more than $\polylog(t)$ items in every category. Based on this, we propose the following algorithm to solve the problem for item sizes in range $[r_1,r_2]$: randomly put the items into $t / r_1$ categories. For every category, solve the problem up to a knapsack size $r_2 \polylog(t)$ and merge the solutions. We show that merging the solutions can be done via some convolution invocations of total size $\tildorder(t)$. Thus, the only non-trivial part is to solve the problem up to a knapsack size $r_2 \polylog(t)$ for every category of items. Since $r_2/r_1 \leq 2$, each of these solutions consists of at most $\polylog(t)$ items. Cygan \etal~\cite{cygan2017problems} show that if we again put these items in $\polylog(t)$ random groups, then using $(\max, +)$ convolution one can solve the problem in almost linear time.
	We bring the pseudocode of the algorithms below. For correctness, we refer the reader to~\cite{cygan2017problems}. Here, we just show that the algorithm runs in time $\tildorder(\vmax t + n)$ if we use the knapsack convolution. Since the algorithm is probabilistic, in order to bring the success probability close to $1$, we use a factor $\zarib = \polylog(t)$ in our algorithm and run the procedures $\zarib$ times and take the best solution found in these runs. We do not specify the exact value of $\zarib$, however, since $\zarib$ is logarithmically small, it does not have an impact on the running time of our algorithms since we use the $\tildorder$ notation.
	
	Algorithm~\ref{alg:avali} solves the problem when the solution consists of at most $\zarib$ items.
	
	\begin{algorithm}[H]
		\KwData{Knapsack size $t$ and items $(s_1,v_1), (s_2, v_2), \ldots, (s_n,v_n)$}
		\KwResult{A solution vector $c$}
		
		$c \leftarrow $A vector of size $t+1$ with all $0$'s initially\;
		\For {$cnt \in [1,\zarib]$}{
			Randomly put the items in $\zarib^2$ lists $l_1, l_2, \ldots, l_{\zarib}$\;
			\For {$i \in [1,\zarib^2]$}{
				$c^i \leftarrow$ A vector of size $t+1$ where $c^i_j$ is the highest value of an item in $l_i$ with size at most $j$;
			}
			$c' \leftarrow c^1 \ttimes c^2 \ttimes \ldots c^{\zarib^2}$\;  \label{line:avali:1}
			\For {$i \in [0,t]$}{
				$c_i = \max\{c_i, c'_i\}$\;
			}
		}
		\Return c;
		\caption{\textsf{BoundedSolutionKnapsackAlgorithm}($t, \{(s_1,v_1), (s_2,v_2),\ldots\}$)}\label{alg:avali}
	\end{algorithm}
	Notice that $\zarib$ is logarithmically small in size of the original knapsack.    The only time consuming operation of the algorithm is Line~\ref{line:avali:1} which takes time $\tildorder(\vmax t)$ due to Lemma~\ref{lemma:mult1} since the item values are bounded by $\vmax$. Moreover, Algorithm~\ref{alg:avali} iterates over all items at least once. Thus, the total running time of Algorithm~\ref{alg:avali} is $\tildorder(\vmax t+n)$ for a given knapsack size $t$ and $n$ items. Algorithm~\ref{alg:dovomi} uses Algorithm~\ref{alg:avali} to solve the knapsack problem when all the item sizes are in range $[r_1, r_2]$ and $r_2 \leq 2r_1$.
	
	\begin{algorithm}[H]
		\KwData{Knapsack size $t$, items $(s_1,v_1), (s_2, v_2), \ldots, (s_n,v_n)$, and range $[r_1, r_2]$}
		\KwResult{A solution vector $c$}
		
		$c \leftarrow $A vector of size $t+1$ with all $0$'s initially\;
		\For {$cnt \in [1,\zarib]$}{
			Randomly put the items in $\lceil t/r_1 \rceil$ lists $l_1, l_2, \ldots, l_{\lceil t/r_1 \rceil}$\;
			\For {$i \in [1,\lceil t/r_1 \rceil]$}{
				$c^i \leftarrow$ \textsf{BoundedSolutionKnapsackAlgorithm}($\zarib r_2, l_i$);
			}
			$c' \leftarrow \textsf{Merge}(\{c^1, c^2, \ldots, c^{\lceil t/r_1 \rceil}\})$\; 
			\For {$i \in [0,t]$}{
				$c_i = \max\{c_i, c'_i\}$\;
			}
		}
		\Return c;
		\caption{\textsf{BoundedRangeKnapsackAlgorithm}($t, \{(s_1,v_1), (s_2,v_2),\ldots\},[r_1,r_2]$)}\label{alg:dovomi}
	\end{algorithm}
	Algorithm~\ref{alg:dovomi} puts the items into $\lceil t/r_1 \rceil$ different categories and solves each category using Algorithm~\ref{alg:avali}. Since the running time of Algorithm~\ref{alg:avali} is $\tildorder(\vmax t + n)$, except the part where we merge the solutions. In the following, we describe the algorithm for merging the solutions and show that its running time is $\tildorder(\vmax t)$ where $t$ is the original knapsack size. 
	
	\begin{algorithm}[H]
		\KwData{$k$ vectors $c^1, c^2, \ldots, c^k$ with total size $t$}
		\KwResult{$c^1 \ttimes c^2 \ttimes c^3 \ldots c^k$}
		\If{$k=1$}{
			\Return $c^1$
		}\Else{
			$a \leftarrow \textsf{Merge}(c^1,c^2,\ldots,c^{\lfloor k/2 \rfloor})$\;
			$b \leftarrow \textsf{Merge}(c^{\lfloor k/2 \rfloor +1},c^{\lfloor k/2 \rfloor +2},\ldots,c^{k})$\;
			\Return \textsf{KnapsackConvolution($a,b$)}\;
		}
		\Return c;
		\caption{\textsf{Merge}($\{c^1, c^2, \ldots, c^k\}$)}\label{alg:sevomi}
	\end{algorithm}
	Notice that Algorithm~\ref{alg:sevomi} uses the knapsack convolution to merge the vectors. Every merge for vectors with total size $n$ takes time $\tildorder(\vmax n)$. Moreover, the total size of the vectors is $\tildorder(t)$ and due to Algorithm~\ref{alg:sevomi}, the total length of the vectors in all convolutions is $\tildorder(t)$. Thus, Algorithm~\ref{alg:sevomi} runs in time $\tildorder(\vmax t)$.

	Finally, in Algorithm~\ref{alg:charomi} we merge the solutions of different buckets and report the result.
	
	\begin{algorithm}[H]
		$l_1, l_2, \ldots, l_{\lceil \log t \rceil+1} \leftarrow \lceil \log t \rceil+1$ lists of items initially empty\;
		\For{$i \in [1,\lceil \log t \rceil+1]$}{
			Put all items with size in range $[2^{i-1},2^i-1]$ in $l_i$\;
			$c^i \leftarrow \textsf{BoundedRangeKnapsackAlgorithm}(t,l_i, [2^{i-1},2^i-1])$\;
		}
		$c \leftarrow c^1 \ttimes c^2 \ttimes \ldots \ttimes c^{\lceil \log t \rceil+1}$\;
		\Return the first $t+1$ elements of $c$\;
		\caption{\textsf{KnapsackViaConvolution}($t,\{(s_1,v_1), (s_2,v_2), \ldots \}$)}\label{alg:charomi}
	\end{algorithm}
	
	Since we use the knapsack convolution for merging the solutions of different buckets, the running time of Algorithm~\ref{alg:charomi} is also $\tildorder(\vmax t+n)$;
\end{proof}
\newpage
\section{Omitted Proofs of Section~\ref{sec:knapsack:convolution}}\label{appendix:knapsack:convolution}

\begin{proof}[of Observation~\ref{observation:knapsack1}] 
	We argue that in any optimal solution, if for two items $i$ and $j$ we have $w_i/s_i > w_j/s_j$ then either $f_i = 1$ or $f_j = 0$. If not, one can increase $f_i$ by $\epsilon$ and decrease $f_j$ by $s_j\epsilon/s_i$ and obtain a better solution. Notice that for items with the same ratio of $w_i/s_i$ it doesn't matter which items are put in the knapsack so long as the total size of these items in the knapsack is fixed. Thus, the greedy algorithm provides an optimal solution. The running time of the algorithm is $O(n \log n)$ since after sorting the items we only make an iteration over the items in time $O(n)$.
\end{proof}

\begin{proof}[of Observation~\ref{observation:knapsack2}] 
	Similar to Observation~\ref{observation:knapsack1}, in order to maximize the weight we always add the item with the highest ratio of $w_i/s_i$ to the knapsack. Therefore, this yields the maximum total weight for any knapsack size $t$. The running time of the algorithm is $O(n \log n)$ since it sorts the items and puts them in the knapsack one by one.
\end{proof}

\begin{proof}[of Observation~\ref{observation:fas}] 
	This observation follows from 
        the greedy algorithm for knapsack.  Notice that
        we add the items to the knapsack greedily and the
        solution consists of two types of items: items of knapsack
        $\k_a$ and items of knapsack $\k_b$. Since the algorithm
        greedily adds the items to the solution, the order of items is
        based on $w_i/s_i$ and thus the order of items added to the
        solution for each type is also based on $w_i/s_i$. Thus, if
        for some knapsack size $x$ we define $\fa(x)$ to be the total
        size of the items in the solution of $x$ that belong to $\k_a$
        and set $\fb(x)$ equal to the size of the solution for items
        of knapsack $\k_b$, then $c'(x) = a'(\fa(x)) +
        b'(\fb(x))$. The monotonicity of $\fa$ and $\fb$ follow from
        the fact that in order to update the solution we only add
        items and we never remove any item from the solution.
\end{proof}

In the proofs of Observations~\ref{observation:inc} and
\ref{observation:dec} we refer to the solution of $a'(x)$ and $b'(x)$
as the solution that the greedy algorithm for fractional knapsack
provides for knapsack size $x$ and knapsack problems $\k_a$ and
$\k_b$, respectively. Similarly, we denote by the solution of $c'(x)$
the solution that Algorithm
\ref{alg:greedyalgorithmforfractionalknapsackconvolution} provides for
the fractional convolution of $a \ttimes b$ with respect to knapsack
size $x$. We say two solutions differ in at most one item, if they are
the same except for one item.

\begin{proof}[of Observation~\ref{observation:inc}] 
	We assume w.l.o.g. that $y$ and $y'$ are close enough to make sure the solution of $c'(x+y)$ differs from the solution of $c'(x+y')$ by at most one item. Similarly, we assume w.l.o.g. that the solutions of $b'(y)$  and $b'(y')$ differ by at most one item.
	If the statement of Observation~\ref{observation:inc} is correct for such $y$ and $y'$ then it extends to all $y < y'$ in range $[0,\fa^{-1}(x)-x]$ since for every $y < y'$ one can write $y < y_1 < y_2 < \ldots< y'$ such that every two consecutive elements are close enough. Therefore, the statement holds for any pairs of consecutive elements and thus holds for $y$ and $y'$. In order to compare $c'(x+y) - a'(x) - b'(y)$ with $c(x+y') - a'(x) - b'(y')$ it only suffices to compare $c'(x+y') - c'(x+y)$ with $b'(y') - b'(y)$.
	
	It follows from the monotonicity of $\fa$ that since $y < y' < \fa^{-1}(x)-x$, then $\fa(x+y) \leq \fa(x+y') \leq x$ and therefore $\fb(x+y') \geq y'$. Let $(s_i, w_i)$ be the last item in the solution of knapsack problem $\k_b$ for knapsack size $y'$. Hence, due to Algorithm~\ref{alg:greedyalgorithmforfractionalknapsackconvolution}, any item not included in the solution of $c'(x+y)$ has a ratio of weight over size which is upper bounded by $w_i/s_i$. Therefore, $c'(x+y') - c'(x+y) \leq (y'-y)(w_i/s_i)$. Since $b'(y)$ and $b'(y')$ differ in at most one item we have $b'(y) - b'(y) = (y'-y)(w_i/s_i)$. Thus, $c'(x+y') - c'(x+y) \leq b'(y) - b'(y)$ and therefore $c'(x+y') - a'(x) - b'(y') \leq c'(x+y) - a'(x) - b'(y)$.
\end{proof}

\begin{proof}[of Observation~\ref{observation:dec}] 
	The proof is similar to that of Observation~\ref{observation:inc}. We assume w.l.o.g. that the solutions of $c'(x+y)$ and $c'(x+y')$ differ in at most one item and also the solutions of $b'(y)$ and $b'(y')$ differ in at most one item. By monotonicity of $\fa$ we have $\fa(x+y') \geq x$ and thus $\fb(x+y') \leq y'$. This means that if $(s_i, w_i)$ is the last item of $c'(x+y')$ then any item not included in $b'(y)$ has a ratio of weight over size of at most $w_i/s_i$. This implies that $b'(y') - b'(y) \leq (y'-y)w_i/s_i = c'(x+y') - c'(x+y)$ which implies $c'(x+y) - a'(x)+b'(y) \leq c'(x+y') - a'(x) - b'(y')$.
\end{proof}

\begin{proof}[of Observation~\ref{observation:mon1}] 
	Since $x < x'$ and $y \leq \fa^{-1}(x) - x$ then we have $\fb^{-1}(y) -y \leq x < x'$. Since $a'$ and $b'$ are symmetric, this observation follows from Observation~\ref{observation:dec}.
\end{proof}

\begin{proof}[of Observation~\ref{observation:mon2}] 
	Similar to Observation~\ref{observation:mon1}, we have $x < x' \leq \fb^{-1}(y)-y$ and the observation reduces to Observation~\ref{observation:dec} by switching $a'$ and $b'$.
\end{proof}
\newpage
\section{Omitted Proof of Section~\ref{sec:treeseparability:reduction}}\label{appendix:treeseparability}
\begin{proof}[of Lemma~\ref{lemma:red2}]
As aforementioned, the proof follows from the ideas of Cygan \etal~\cite{cygan2017problems} and Backurs \etal~\cite{backurs2017better}. We assume that the tree is rooted at some arbitrary vertex and for a vertex $v$ we refer to the subtree rooted at $v$ by $T(v)$. We say the solution of a subtree $T(v)$ is a vector $a$ of size $|T(v)|+1$, where every $a_i$ denotes the minimum cost for putting the vertices of $T(v)$ into two disjoint components of sizes $i$ and $|T(v)|-i$ where vertex $v$ itself in the part with size $i$.

Let $u$ and $v$ be two disjoint subtrees of the graph and denote by $a$ and $b$ the solutions of these subtrees. If we add an edge from $u$ to $v$ and wish to compute the solution for the combined tree, one can derive the solution vector from $a$ and $b$. To this end, there are two possibilities to consider: either $u$ and $v$ are in the same component in which case the solution is equal to $a \ttimes b$. Otherwise, we construct a vector $b'$ where $b'_i = 1+b_{|T(v)|-i}$ and compute $a \ttimes b'$ to find the answer Therefore, merging the solutions of two subtrees reduces to computing the convolution of two solution vectors.

If the tree is balanced, and therefore the height of the tree is $O(\log n)$, the standard dynamic program yields a running time of $T(n)$ since the total lengths of the convolutions we make is $O(n \log n)$. However, if the tree is not balanced, in the worst case, the height of the tree also appears in the running time. A classic tool to overcome this challenge is the spine decomposition of~\cite{sleator1983data} to break the tree into a number of spines. Every spine is a path starting from a vertex and ending at some leaf. We say a spine $x$ is above a spine $y$, if there exists a vertex in $y$ such at least one parent of that vertex appears in $x$. We denote this relation with $x \prec y$. Such a spine decomposition satisfies the property that every sequence of spines $x_1 \prec x_2 \prec \ldots \prec x_k$ has a length of at most $O(\log n)$. This enables us to solve the dynamic program in time $\tildorder(T(n))$. The overall idea is that instead of updating every vertex at each stage, we update the solution of a spine.  Since every spine is a path, we can again reduce the problem of combining the solutions of a spine to convolution. This way, the height of the updates reduces to $O(\log n)$ and thus our algorithm runs in time $\tildorder(T(n))$. We refer the reader to \cite{cygan2017problems} and~\cite{backurs2017better} for a formal proof.
\end{proof}

\end{document}